\providecommand{\tabularnewline}{\\}
\providecommand{\algorithmname}{Algorithm}
\theoremstyle{plain}
\newtheorem{thm}{\protect\theoremname}[section]
\theoremstyle{plain}
\newtheorem{lem}[thm]{\protect\lemmaname}
\theoremstyle{plain}
\newtheorem{cor}[thm]{\protect\corollaryname}
\theoremstyle{remark}
\newtheorem{rem}[thm]{\protect\remarkname}
\theoremstyle{definition}
\newtheorem{defn}[thm]{\protect\definitionname}
\theoremstyle{remark}
\newtheorem{claim}[thm]{\protect\claimname}
\providecommand{\claimname}{Claim}
\providecommand{\corollaryname}{Corollary}
\providecommand{\definitionname}{Definition}
\providecommand{\lemmaname}{Lemma}
\providecommand{\remarkname}{Remark}
\providecommand{\theoremname}{Theorem}
\begin{document}
\global\long\def\smax{\mathrm{smax}}%
\global\long\def\smin{\mathrm{smin}}%
\global\long\def\R{\mathbb{R}}%
\global\long\def\D{\mathrm{diag}}%
\global\long\def\opt{\mathrm{OPT}}%
\global\long\def\L{\mathcal{L}}%
\global\long\def\E{\mathbb{E}}%
\global\long\def\proj{\mathrm{proj}}%

\title{Multiplicative Weights Update, Area Convexity and Random Coordinate
Descent for Densest Subgraph Problems}
\author{Ta Duy Nguyen\thanks{Department of Computer Science, Boston University, \texttt{taduy@bu.edu}.} \and
Alina Ene\thanks{Department of Computer Science, Boston University, \texttt{aene@bu.edu}.}}
\date{}
\maketitle
\begin{abstract}
We study the densest subgraph problem and give algorithms via multiplicative
weights update and area convexity that converge in $O\left(\frac{\log m}{\epsilon^{2}}\right)$
and $O\left(\frac{\log m}{\epsilon}\right)$ iterations, respectively,
both with nearly-linear time per iteration. Compared with the work
by \citet{bahmani2014efficient}, our MWU algorithm uses a very different
and much simpler procedure for recovering the dense subgraph from
the fractional solution and does not employ a binary search. Compared
with the work by \citet{boob2019faster}, our algorithm via area convexity
improves the iteration complexity by a factor $\Delta$---the maximum
degree in the graph, and matches the fastest theoretical runtime currently
known via flows \citep{chekuri2022densest} in total time. Next, we
study the dense subgraph decomposition problem and give the first
practical iterative algorithm with linear convergence rate $O\left(mn\log\frac{1}{\epsilon}\right)$
via accelerated random coordinate descent. This significantly improves
over $O\left(\frac{m\sqrt{mn\Delta}}{\epsilon}\right)$ time of the
FISTA-based algorithm by \citet{harb2022faster}. In the high precision
regime $\epsilon\ll\frac{1}{n}$ where we can even recover the exact
solution, our algorithm has a total runtime of $O\left(mn\log n\right)$,
matching the  exact algorithm via parametric flows \citep{gallo1989fast}.
Empirically, we show that this algorithm is very practical and scales
to very large graphs, and its performance is competitive with widely
used methods that have significantly weaker theoretical guarantees.

\end{abstract}

\section{Introduction}

In this work, we study the densest subgraph problem (DSG) and its
generalization to finding dense subgraph decompositions of graphs.
In the densest subgraph problem, we are given a graph $G=(V,E)$ and
the goal is to find a subgraph of maximum density $\left|E(S)\right|/\left|S\right|$,
where $\left|E(S)\right|$ is the number of edges in the graph induced
by $S$ (for weighted graphs, we consider the total weight of the
edges). Densest subgraphs and related dense subgraph discovery problems
have seen numerous applications in machine learning and data mining,
including DNA motif detection, fraud detection, and distance query
computation (see \citep{lee2010survey,gionis2015dense,farago2019search,tsourakakisdense,lanciano2023survey}
for more comprehensive surveys).

The densest subgraph problem and its generalizations are fundamental
graph optimization problems with a long history in algorithm design.
A classical result due to \citet{goldberg1984finding} showed that
DSG can be solved in polynomial time via a reduction to maximum flow.
Specifically, \citet{goldberg1984finding} showed that, given a guess
$D$ for the maximum density, one can either find a subgraph with
density at least $D$ or certify that no such subgraph exists by computing
a maximum $s$-$t$ flow in a suitably defined network. This approach
together with binary search allows us to compute an optimal solution
using a logarithmic number of maximum flow computations. \citet{gallo1989fast}
designed a more efficient algorithm for DSG via a reduction to parametric
maximum flows that solve a sequence of related maximum flow instances
more efficiently than the binary search approach. This led to an algorithm
for DSG with running time $O\left(nm\log\left(n^{2}/m\right)\right)$,
where $n$ and $m$ are the number of nodes and edges in the input
graph, respectively. Based on the near-linear time algorithm for computing
minimum-cost flows by \citet{chen2022maximum}, \citet{harb2022faster}
gave an algorithm that computes an optimal dense decomposition in
$O\left(m^{1+o(1)}\right)$ time. The recent work of \citet{chekuri2022densest}
designed a maximum flow based algorithm that computes an $\left(1-\epsilon\right)$-approximate
solution in time $O\left(\frac{m\log^{2}\left(m\right)}{\epsilon}\right)$.
To the best of our knowledge, these are the fastest running times
for exact and approximate algorithms, respectively.

The maximum/minimum-cost flow based approaches provide a rich theoretical
framework for developing algorithms with provable guarantees for DSG
and related problems. Although these algorithms have strong theoretical
guarantees, their practical performance and scalability is more limited
and they do not scale to very large graphs \citep{boob2020flowless}.
Moreover, these algorithms are inherently sequential and they cannot
leverage parallel and distributed computation \citep{bahmani2012densest,bahmani2014efficient}. 

The aforementioned limitations of the flow-based algorithms have motivated
the development of iterative algorithms based on linear and convex
programming formulations. This line of work has led to the development
of iterative algorithms based on continuous optimization frameworks
such as multiplicative weights update \citep{bahmani2014efficient},
Frank-Wolfe \citep{danisch2017large,harb2023convergence}, and accelerated
gradient descent \citep{harb2022faster}. These frameworks have also
inspired Greedy algorithms that are combinatorial and very efficient
in practice: the Greedy peeling algorithm \citep{charikar2000greedy}
that makes a single pass over the graph but it achieves only a $1/2$
approximation, and a variant of it called Greedy++ \citep{boob2020flowless}
that makes multiple passes but it achieves a $1-\epsilon$ approximation
for any target approximation error $\epsilon$. Subsequent work established
theoretical convergence guarantees for Greedy++ \citep{chekuri2022densest}
and showed that it is equivalent to a Frank-Wolfe algorithm \citep{harb2023convergence}.

Despite the wide range of algorithms designed to solve DSG and its
generalizations, there still remain important directions for improvement
in theory and in practice. The Greedy peeling and Greedy++ algorithms
are very efficient in practice, but they are also inherently sequential
and their theoretical guarantees are weaker than the iterative algorithms
based on continuous optimization. On the other hand, the practical
applicability of the latter algorithms is significantly more limited.
The algorithm of \citet{bahmani2014efficient} uses more complex subroutines,
including a binary search over the optimal solution value and an involved
procedure for constructing the primal solution (the dense subgraph).
The algorithm of \citet{boob2019faster} only provides (an approximation
to) the solution value, and not the solution itself. Moreover, the
number of iterations of these algorithms also depends polynomially
in the maximum degree of the graph and/or the number of edges, which
can be prohibitive for large graphs with nodes of very high degree.
An important direction is to obtain algorithms with stronger theoretical
convergence guarantees that enjoy fast convergence with simple iterations
that are easily parallelizable and very efficient in theory and in
practice.

Another limitation is that the iterative algorithms only construct
approximate solutions and they require $\mathrm{poly(1/\epsilon)}$
iterations to achieve a $1-\epsilon$ approximation. As a result,
the running time can be prohibitively large for obtaining very good
approximation guarantees. An important direction is to obtain scalable
and practical iterative algorithms with a much more beneficial $\log(1/\epsilon)$
dependence on the approximation error. Such algorithms would allow
for obtaining exact solutions (by setting $\epsilon$ polynomially
small in the size of the graph, and thus incurring only a logarithmic
factor in the running time). Currently, the only exact algorithms
known are based on maximum/minimum-cost flow and they are prohibitive
in practice as discussed above.

The aforementioned directions are the main motivation behind this
work, and we make several contributions towards resolving them as
we outline below.

\subsection{Contributions}

Building on the algorithm of \citet{bahmani2014efficient}, we give
an iterative algorithm based on the multiplicative weights update
framework (MWU, \citet{arora2012multiplicative}) that converges in
$O\big(\frac{\log m}{\epsilon^{2}}\big)$ iterations. Each iteration
of our algorithm can be implemented very efficiently in nearly-linear
time, and it can be easily parallelized by processing each vertex
and its incident edges in parallel on separate machines. Through a
combination of the techniques in the work of \citet{bahmani2014efficient}
as well as novel insights and techniques we introduce, we are able
to preserve all of the strengths of the result of \citet{bahmani2014efficient}:
compared to other approaches, the number of iterations is independent
of the maximum degree $\Delta$ of the graph (in contrast, all other
approaches incur an extra $\Delta$ factor); the algorithm can be
applied to many different settings, including to streaming, parallel,
and distributed computation \citep{bahmani2012densest,bahmani2014efficient,su2019distributed},
as well as differentially private algorithms \citep{dhulipala2022differential}.
Simultaneously, we significantly strengthen and simplify the prior
approach, and remove its main limitations: we design a very different
algorithm for constructing a primal solution (the dense subgraph)
from the fractional solution to a modified dual problem that the MWU
algorithm constructs; we give an end-to-end algorithm for implementing
each iteration that does not employ a binary search. Due to the wide
range of applications of this framework mentioned earlier, we expect
that our improved approach will lead to improvements in all of these
diverse settings and for other related problems and beyond.

Our second contribution builds on our MWU algorithm and the area convexity
technique introduced by \citet{sherman2017area} for flow problems
and extensions and further utilized by \citet{boob2019faster} for
solving packing and covering LPs and the densest subgraph problem.
By replacing the entropy regularizer with an area convex regularizer,
we design an iterative algorithm with an improved iteration complexity
of $O\big(\frac{\log m}{\epsilon}\big)$ and a nearly-linear time
per iteration. Our algorithm improves upon the result of \citet{boob2019faster}
by a factor $\Delta$ (the maximum degree in the graph), both in the
number of iterations and overall running time. Furthermore, we show
how to construct the primal solution (the dense subgraph), whereas
\citet{boob2019faster} can only output the value of the solution.
Similarly to prior work based on the area convexity technique \citep{sherman2017area,boob2019faster},
each iteration of is more complex and less practical than our MWU-based
approach. However, the result is theoretically interesting for at
least two reasons: it shows that the $\frac{1}{\epsilon^{2}}$ barrier
for entropy-based MWU algorithms can be overcome without incurring
a polynomial factor loss in the iteration complexity; and the overall
running time matches that of the flow-based algorithm of \citet{chekuri2022densest},
which achieves the fastest theoretical running time currently known
for densest subgraph but is inherently sequential. In contrast to
flow-based algorithms, area convexity is closely related to practical
(extra)gradient methods \citep{jambulapati2023revisiting} and has
found successful applications beyond DSG, including solving structured
LPs \citep{boob2019faster}, optimal transport \citep{jambulapati2019direct}
and matching \citep{assadi2022semi}. Improvements in DSG could potentially
be used as an example to derive new iterative frameworks for other
continuous and combinatorial problems.

Finally, by adapting the approach of\textbf{ }\citet{ene2015random,ene2017decomposable}\textbf{
}for minimizing submodular functions with a decomposable structure,
we obtain the first practical iterative algorithms for DSG and generalizations
with a $\log\frac{1}{\epsilon}$ dependency on the approximation error
$\epsilon$. Similarly to \citet{harb2022faster}, our algorithms
solve a convex programming formulation that captures DSG and its generalization
to finding a dense subgraph decomposition (we defer the definitions
to Section \ref{sec:Preliminaries}). The objective function of the
convex program is smooth with smoothness parameter proportional to
the maximum degree $\Delta$, but importantly it is not strongly convex.
\citet{harb2022faster} used the accelerated FISTA algorithm to solve
the convex program, and obtained a running time of $O\left(\frac{m\sqrt{mn\Delta}}{\epsilon}\right)$.
In contrast, we adapt the coordinate descent algorithm and its accelerated
version developed by \citet{ene2015random} for submodular minimization.
Our accelerated algorithm achieves a running time of $O\left(mn\log\frac{1}{\epsilon}\right)$
in expectation. Crucially, we achieve an exponentially improved dependence
on $1/\epsilon$ (i.e., a linear convergence rate) despite the lack
of strong convexity in the objective, by leveraging the combinatorial
structure as in\textbf{ }\citet{ene2015random,ene2017decomposable}.
Additionally, the objective has constant smoothness in each coordinate
(in contrast to the $\Delta$ global smoothness), leading to further
improvements in the running time. In the high precision regime $\epsilon\ll\frac{1}{n}$
where we can even recover the exact solution, our accelerated algorithm
has a total runtime of $O\left(mn\log n\right)$, matching the algorithm
via parametric flows by \citet{gallo1989fast}. Although this does
not match the state of the art algorithm via minimum-cost flows by
\citet{harb2022faster}, in contrast to these flow-based algorithms,
our algorithms are very simple and easy to implement. Our experimental
evaluation shows that our algorithms are very practical and scalable
to very large graphs, and are competitive with the highly practical
Greedy++ algorithm while enjoying significantly stronger theoretical
guarantees.

We show comparisons of runtime between existing methods and our algorithms
in Table \ref{table:comparison}. 

\begin{table*}
\centering{}{\small{}\caption{{\small{}Comparison between existing algorithms for computing approximate
densest subgraphs/densest subgraph decomposition. $m,n,\Delta$ are
the number of edges and vertices and the maximum degree in the graph.
$\protect\opt$ is the maximum density of a subgraph.}}
\label{table:comparison}}%
\begin{tabular}{>{\raggedright}m{0.34\textwidth}>{\centering}m{0.1\textwidth}>{\centering}m{0.13\textwidth}>{\raggedright}m{0.31\textwidth}}
\hline 
\noalign{\vskip\doublerulesep}
{\small{}Algorithm} & {\small{}No. of Iter.} & {\small{}Per iter.} & {\small{}Note}\tabularnewline[\doublerulesep]
\hline 
\noalign{\vskip\doublerulesep}
\hline 
\noalign{\vskip\doublerulesep}
{\small{}Greedy++ \citep{boob2020flowless}} & {\small{}$O\big(\frac{\Delta\log m}{\opt\epsilon^{2}}\big)$} & {\small{}$O(m\log n)$} & \tabularnewline[\doublerulesep]
\hline 
\noalign{\vskip\doublerulesep}
\noalign{\vskip\doublerulesep}
{\small{}Frank-Wolfe \citep{danisch2017large}} & {\small{}$O\big(\frac{m\Delta}{\epsilon^{2}}\big)$} & {\small{}$O(m)$} & \tabularnewline[\doublerulesep]
\hline 
\noalign{\vskip\doublerulesep}
\noalign{\vskip\doublerulesep}
{\small{}\citet{chekuri2022densest} (flow-based)} & {\small{}$O\big(\frac{\log m}{\epsilon}\big)$} & {\small{}$O(m\log m)$} & {\small{}Based on blocking flows}\tabularnewline[\doublerulesep]
\hline 
\noalign{\vskip\doublerulesep}
\noalign{\vskip\doublerulesep}
{\small{}\citet{gallo1989fast} (flow-based, exact)} & \multicolumn{2}{c}{{\small{}$O\left(nm\log\left(n^{2}/m\right)\right)$ (total)}} & {\small{}Based on push-relabel}\tabularnewline[\doublerulesep]
\hline 
\noalign{\vskip\doublerulesep}
\noalign{\vskip\doublerulesep}
{\small{}\citet{harb2022faster} (flow-based, exact)} & \multicolumn{2}{c}{{\small{}$O(m^{1+o(1)})$ (total)}} & {\small{}Based on min-cost flow algorithm by }\citet{chen2022maximum}\tabularnewline[\doublerulesep]
\hline 
\noalign{\vskip\doublerulesep}
\noalign{\vskip\doublerulesep}
{\small{}\citet{bahmani2014efficient}} & {\small{}$O\big(\frac{\log m}{\epsilon^{2}}\big)$} & {\small{}$O(m)$} & \multirow{2}{0.31\textwidth}{{\small{}New and simpler construction of the solution; Remove binary
search}}\tabularnewline[\doublerulesep]
\cline{1-3} \cline{2-3} \cline{3-3} 
\noalign{\vskip\doublerulesep}
\noalign{\vskip\doublerulesep}
\textbf{\small{}Algorithm \ref{alg:mwu} (ours)} & {\small{}$O\big(\frac{\log m}{\epsilon^{2}}\big)$} & {\small{}$O(m\log\Delta)$} & \tabularnewline[\doublerulesep]
\hline 
\noalign{\vskip\doublerulesep}
\noalign{\vskip\doublerulesep}
{\small{}\citet{boob2019faster} (solution value only)} & {\small{}$O\big(\frac{\Delta\log m}{\epsilon}\big)$} & {\small{}$O(m\log\frac{1}{\epsilon})$} & \multirow{2}{0.31\textwidth}{{\small{}Improve a factor $\Delta$ and construct solution}}\tabularnewline[\doublerulesep]
\cline{1-3} \cline{2-3} \cline{3-3} 
\noalign{\vskip\doublerulesep}
\noalign{\vskip\doublerulesep}
\textbf{\small{}Algorithm \ref{alg:area-convexity} (ours)} & {\small{}$O\big(\frac{\log m}{\epsilon}\big)$} & {\small{}$O\big(m\log\Delta\log\frac{1}{\epsilon}\big)$} & \tabularnewline[\doublerulesep]
\hline 
\noalign{\vskip\doublerulesep}
\noalign{\vskip\doublerulesep}
{\small{}\citet{harb2022faster} (additive error)} & {\small{}$O\big(\frac{\sqrt{mn\Delta}}{\epsilon}\big)$} & {\small{}$O(m)$} & \multirow{2}{0.31\textwidth}{{\small{}Improve total time by a factor at least $\frac{\sqrt{\Delta}}{\epsilon\log\frac{n}{\epsilon}}$}}\tabularnewline[\doublerulesep]
\cline{1-3} \cline{2-3} \cline{3-3} 
\noalign{\vskip\doublerulesep}
\noalign{\vskip\doublerulesep}
\textbf{\small{}Algorithm \ref{alg:acdm} (ours) }{\small{}(in expectation)} & {\small{}$O\big(mn\log\frac{n}{\epsilon}\big)$} & \textbf{\small{}$O(1)$} & \tabularnewline[\doublerulesep]
\hline 
\noalign{\vskip\doublerulesep}
\end{tabular}
\end{table*}

\textbf{}

\section{Preliminaries \label{sec:Preliminaries}}

Let $G=(V,E)$ be an undirected, unweighted graph where $\left|V\right|=n$
and $\left|E\right|=m$. For simplicity, we take $V=\left\{ 1,\dots,n\right\} $.
For a set $S\subseteq V$, let $E(S)$ be the set of edges in the
graph induced by $S$. For a node $u\in V$, let $\deg u$ be the
number of neighbors of $u$. We let $\Delta=\max_{u\in V}\deg u$,
i.e the maximum degree of a node in $V$. We use $[k]$ to denote
the set of integers from $1$ to $k$, and $\opt$ to denote the maximum
density of a subgraph. 

\textbf{Charikar's LP for DSG}$\quad$The LP for finding a densest
subgraph was introduced by \citet{charikar2000greedy} as follows
\begin{align}
\max_{x\ge0} & \sum_{e=uv\in E}\min\left\{ x_{u},x_{v}\right\} \ \mbox{st. }\sum_{u\in V}x_{u}\le1.\label{eq:primal}
\end{align}
\citet{charikar2000greedy} showed that given a feasible solution
$x$ to LP (\ref{eq:primal}) with objective $D$, we can construct
a set $S\subseteq V$ such that the density of $S$ is at least $D$.
The construction takes $O(n\log n+m)$ time: first, sort $\left(x_{v}\right)_{v\in V}$
in a decreasing order then select the prefix set $S$ that maximizes
$\frac{\left|E(S)\right|}{\left|S\right|}$. For this reason, we can
find a $\left(1-\epsilon\right)$ approximately densest subgraph by
finding a $\left(1-\epsilon\right)$ approximate solution to (\ref{eq:primal}).

\textbf{Dual LP}$\quad$\textbf{ }The dual of LP (\ref{eq:primal})
can be written as follows
\begin{align}
\min_{D,z\ge0}D\mbox{ st. } & \sum_{e\in E,u\in e}z_{eu}\le D,\quad\forall u\in V\label{eq:dual}\\
 & \ z_{eu}+z_{ev}\ge1,\quad\forall e=uv\in E.\nonumber 
\end{align}

\textbf{Width-reduced dual LP}$\quad$We use the width reduction technique
introduced in \citet{bahmani2014efficient} to improve the guaranteed
runtime. Since there is always an optimal solution $z$ for the dual
that satisfies $z\le q$ for $q\ge1$, adding this explicit constraint
to the LP as in (\ref{eq:dual-wr}) does not change the objective
of the optimal solutions.
\begin{align}
\min D\mbox{ st. } & \sum_{e\in E,u\in e}z_{eu}\le D,\quad\forall u\in V\label{eq:dual-wr}\\
 & \ z_{eu}+z_{ev}\ge1,\quad\forall e=uv\in E\nonumber \\
 & \ 0\le z_{eu}\le q,\quad\forall e,u\in e\in E.\nonumber 
\end{align}
By parameterizing $D$, \citet{bahmani2014efficient} showed that
we can solve the feasibility version of LP (\ref{eq:dual-wr}) in
$O\left(\frac{m\log m}{\epsilon^{2}}\right)$ time and achieve the
same total time via binary search for the optimal objective. However,
the downside of using the with-reduced LP (\ref{eq:dual-wr}) is that
it corresponds to a different primal than the LP (\ref{eq:primal}).
Thus it is not immediate how one can find an integral solution to
the DSG problem from a solution to (\ref{eq:dual-wr}). Note that,
\citet{bahmani2014efficient} used $q=2$---that is $0\le z_{eu}\le2,\quad\forall e,u\in e\in E$,
which is different from the natural choice of $q=1$. This value of
$q>1$ plays an important role in their intricate rounding scheme,
which involves discretization of the solution and a line sweep. In
contrast, we will show an algorithm that solves (\ref{eq:dual-wr})
for $q=1$ and also retains the simple rounding procedure by \citet{charikar2000greedy}.
Henceforth, we will refer to (\ref{eq:dual-wr}) with $q=1$.

\textbf{Dense subgraph decomposition and quadratic program}$\quad$The
dense subgraph decomposition problem \citep{tatti2015density} extends
DSG in that the output is a partition $S_{1}\cup\dots\cup S_{k}$
of the graph, where for $i\ge1$, $S_{i}$ is the maximal set that
maximizes $\left|E\left(\cup_{j=1}^{i-1}S_{j}\cup S\right)-E\left(\cup_{j=1}^{i-1}S_{j}\right)\right|/\left|S\right|$.
By this, one can simply recover the maximal densest subgraph by outputting
$S_{1}$. \citet{harb2022faster,harb2023convergence} showed that
this problem can be solved via the following quadratic program
\begin{align}
\min\sum_{u\in V}b_{u}^{2}\mbox{ st. } & b_{u}=\sum_{e\in E,u\in e}z_{eu},\quad\forall u\in V\label{eq:dual-quadratic}\\
 & z_{eu}+z_{ev}\ge1,\quad\forall e=uv\in E\nonumber \\
 & 0\le z_{eu}\le1,\quad\forall e,u\in e\in E.\nonumber 
\end{align}
\citet{harb2022faster} also showed that there exists a unique optimal
solution $b^{*}$ to (\ref{eq:dual-quadratic}). More precisely, for
the dense decomposition $S_{1}\cup\dots\cup S_{k}$, and $u\in S_{i}$,
we have $b_{u}^{*}=\frac{\left|E\left(\cup_{j=1}^{i}S_{j}\right)-E\left(\cup_{j=1}^{i-1}S_{j}\right)\right|}{\left|S_{i}\right|}.$
One can solve (\ref{eq:dual-quadratic}) by convex optimization tools
such as Frank-Wolfe algorithm \citep{danisch2017large,harb2023convergence}
or the accelerated FISTA algorithm \citep{harb2022faster,beck2009fast}.
\citet{harb2022faster} also introduced a rounding scheme called fractional
peeling to obtain an approximately densest subgraph decomposition
(see definition \ref{def:Dense-decomposition}).

\section{Algorithm via Multiplicative Weights Update\label{sec:Multiplicative-weights-update}}

In this section we present our algorithm to find a $(1-\epsilon)$
approximate solution to LP (\ref{eq:primal}). First, we give an overview
of our approach. The approach falls into the framework of MWU \citep{arora2012multiplicative}.
Instead of directly working with the dual LP (\ref{eq:dual}), we
will work with the width-reduced LP (\ref{eq:dual-wr}) with $q=1$.
We introduce dual variables $p\in\Delta_{m}$ which correspond to
the constraints $z_{eu}+z_{ev}\ge1$ for $e=uv\in E$. In each iteration
of the algorithm, given the values of $p$, we maintain a solution
$z$ that satisfies the combined constraint $\sum_{e\in E}p_{e}\left(z_{eu}+z_{ev}\right)\ge1$.
Note that, we can always make equality happens without increasing
the objective. This reduces to solving the following LP
\begin{align}
\min_{z\in[0,1]^{2m}}\max_{u\in V} & \ \sum_{e\in E,u\in e}z_{eu}\label{eq:dual-wr-iteration}\\
\sum_{e\in E}p_{e}\left(z_{eu}+z_{ev}\right) & =1\label{eq:2}
\end{align}
The average solution for $z$ ensures that the constraints of LP (\ref{eq:dual-wr})
are satisfied approximately. To update $p$, we use MWU. Using the
value of $p$ in the best iteration, we can construct a feasible solution
to the primal LP (\ref{eq:primal}) with objective at least $\left(1-\epsilon\right)\opt$.
This solution allows us to use Charikar's rounding procedure to obtain
an approximately densest subgraph (see Section \ref{sec:Preliminaries}).

What differs from \citep{bahmani2014efficient} is that we directly
solve problem (\ref{eq:dual-wr-iteration}) instead of parametrizing
$D=\max_{u\in V}\sum_{e\in E,u\in e}z_{eu}$ and solving the feasibility
version of (\ref{eq:dual-wr-iteration}). There are two reasons why
this is a better approach. First, being able to exactly optimize LP
(\ref{eq:dual-wr-iteration}) allows us to use complementary slackness
and recover the primal solution in a simple way. We show that this
primal solution satisfies Charikar's LP (\ref{eq:primal}) and thus
allows us to use Charikar's simple rounding procedure. In this way,
we completely remove the involved rounding procedure in \citet{bahmani2014efficient}.
Second, there is no longer need for using binary search which could
be a concern for the runtime in practice.

\subsection{Algorithm for solving problem (\ref{eq:dual-wr}) \label{subsec:Obtaining-z}}

\begin{algorithm}
{\small{}\caption{{\small{}Multiplicative Weights Update for solving (\ref{eq:dual-wr})}}
\label{alg:mwu}}{\small\par}

{\small{}Let $T=\frac{2\ln m}{\epsilon^{2}}$, $\eta=\epsilon$}{\small\par}

{\small{}Initialize $p^{(1)}=\left(\frac{1}{m},\dots,\frac{1}{m}\right)$,
$G^{(0)}=0\in\R^{m}$}{\small\par}

{\small{}for $t=1\dots T$}{\small\par}

{\small{}$\quad$Let $z^{(t)}$ be an optimal solution to (\ref{eq:dual-wr-iteration})
for $p=p^{(t)}$}{\small\par}

{\small{}$\quad$Let $g_{e}^{(t)}=1-\left(z_{eu}^{(t)}+z_{ev}^{(t)}\right)$
for all $e\in E$}{\small\par}

{\small{}$\quad$Let $G^{(t)}=\sum_{\tau=1}^{t}g^{(\tau)}$}{\small\par}

{\small{}$\quad$Let $p^{(t+1)}=\nabla\smax_{\eta}(G^{(t)})$, ie,
$p_{e}^{(t+1)}=\text{\ensuremath{\frac{\exp(\eta G_{e}^{(t)})}{\sum_{e'}\exp(\eta G_{e'}^{(t)})}}}$}{\small\par}

{\small{}Output $\frac{1}{T}\sum_{t=1}^{T}z^{(t)}$}{\small\par}
\end{algorithm}

In this section, we give our algorithm solving LP (\ref{eq:dual-wr}),
shown in Algorithm \ref{alg:mwu}. The algorithm is based on the multiplicative
weights framework and it uses as a subroutine an algorithm that, given
$p\in\Delta_{m}$, it returns an optimal solution $z$ to the LP (\ref{eq:dual-wr-iteration}).
We show how to efficiently implement this subroutine in the next section.
The following lemma and its corollary show that the output of Algorithm
\ref{alg:mwu} is approximately optimal for (\ref{eq:dual-wr}).
\begin{lem}
\label{lem:mwu-average-solution}Let $z^{*}$ be an optimal solution
to LP (\ref{eq:dual-wr}). Algorithm \ref{alg:mwu} outputs $\overline{z}=\frac{1}{T}\sum_{t=1}^{T}z^{(t)}$
that satisfies
\begin{align*}
\max_{u\in V}\sum_{e\in E,u\in e}\overline{z}_{eu} & \le\max_{u\in V}\sum_{e\in E,u\in e}z_{eu}^{*}
\end{align*}
and for all $e=uv\in E$
\begin{align*}
\overline{z}_{eu}+\overline{z}_{ev} & \ge1-\epsilon.
\end{align*}
\end{lem}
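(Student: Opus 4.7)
\medskip

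\noindent\textbf{Proof plan.} The two claims are essentially independent: the first is a structural comparison between the per-iteration optima and $z^{*}$, while the second is a direct consequence of the standard regret bound for multiplicative weights update.

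For the first inequality, the plan is to show that a rescaling of $z^{*}$ is feasible for problem (\ref{eq:dual-wr-iteration}) at every distribution $p^{(t)}$, and then invoke the per-iteration optimality of $z^{(t)}$. Concretely, since $z^{*}$ is feasible for (\ref{eq:dual-wr}), each edge constraint $z_{eu}^{*}+z_{ev}^{*}\ge 1$ holds, so for every $p^{(t)}\in\Delta_m$ we have $\alpha^{(t)}:=\sum_{e}p_{e}^{(t)}(z_{eu}^{*}+z_{ev}^{*})\ge 1$. Setting $\tilde z^{(t)}:=z^{*}/\alpha^{(t)}$ gives a vector in $[0,1]^{2m}$ that satisfies (\ref{eq:2}) with equality. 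By optimality of $z^{(t)}$ for (\ref{eq:dual-wr-iteration}),
\begin{equation*}
\max_{u\in V}\sum_{e\in E,\, u\in e}z_{eu}^{(t)} \;\le\; \max_{u\in V}\sum_{e\in E,\, u\in e}\tilde z_{eu}^{(t)} \;=\; \frac{1}{\alpha^{(t)}}\max_{u\in V}\sum_{e\in E,\, u\in e}z_{eu}^{*} \;\le\; \max_{u\in V}\sum_{e\in E,\, u\in e}z_{eu}^{*}.
\end{equation*}
The claimed inequality for $\overline z$ then follows from the convexity of $\max$: $\max_u\sum_{e\ni u}\overline z_{eu}\le\tfrac{1}{T}\sum_t\max_u\sum_{e\ni u}z_{eu}^{(t)}$.

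For the second inequality, the plan is to apply the standard MWU regret bound for the gain vectors $g^{(t)}\in[-1,1]^m$ produced by the algorithm. Since $p^{(t+1)}=\nabla\smax_\eta(G^{(t)})$ is the exponential-weights update with step size $\eta$, the regret bound gives, for every $p\in\Delta_m$,
\begin{equation*}
\sum_{t=1}^{T}\langle p,g^{(t)}\rangle - \sum_{t=1}^{T}\langle p^{(t)},g^{(t)}\rangle \;\le\; \frac{\ln m}{\eta} + \eta\sum_{t=1}^{T}\sum_{e}p_{e}^{(t)}(g_{e}^{(t)})^{2} \;\le\; \frac{\ln m}{\eta} + \eta T,
\end{equation*}
using $|g_e^{(t)}|\le 1$. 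The key observation is that the equality constraint (\ref{eq:2}) forces $\langle p^{(t)},g^{(t)}\rangle = 1 - \sum_{e}p_{e}^{(t)}(z_{eu}^{(t)}+z_{ev}^{(t)}) = 0$ in every iteration. Choosing $p$ to be the point mass on an edge $e=uv$, the left-hand side becomes $\sum_{t}g_{e}^{(t)} = T - \sum_{t}(z_{eu}^{(t)}+z_{ev}^{(t)}) = T\bigl(1-\overline z_{eu}-\overline z_{ev}\bigr)$, so rearranging yields
\begin{equation*}
\overline z_{eu}+\overline z_{ev} \;\ge\; 1 - \frac{\ln m}{\eta T} - \eta.
\end{equation*}
Plugging in $T=2\ln m/\epsilon^{2}$ and $\eta=\epsilon$ makes the right-hand side at least $1-O(\epsilon)$, yielding the desired bound (up to the exact constant, which can be absorbed by a constant factor in $T$ or a tighter regret analysis).

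The main obstacle I anticipate is bookkeeping rather than ideas: verifying that the rescaling $z^{*}/\alpha^{(t)}$ indeed lies in the per-iteration feasible region (in particular, that the box constraint is preserved and equality in (\ref{eq:2}) can replace inequality without loss, which is argued in the text) and pinning down the exact constants in the regret bound so that $T=2\ln m/\epsilon^{2}$ with $\eta=\epsilon$ exactly delivers the stated $1-\epsilon$ slack. Everything else is a routine combination of LP duality (per-iteration optimality) and Hedge's regret guarantee.
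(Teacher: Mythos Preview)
Your proposal is correct and follows essentially the same approach as the paper. Two minor differences worth noting: for the first inequality, the paper simply observes that $z^{*}$ can be taken to satisfy $z_{eu}^{*}+z_{ev}^{*}=1$ for every edge (tightening rather than rescaling), which makes $z^{*}$ itself feasible for (\ref{eq:dual-wr-iteration}) at every $p^{(t)}$; your rescaling achieves the same thing. For the second inequality, the paper works directly with the $\eta$-smoothness of $\smax_{\eta}$, which yields the sharper term $\tfrac{\eta}{2}T$ instead of $\eta T$; with $\eta=\epsilon$ and $T=2\ln m/\epsilon^{2}$ this gives exactly $1-\epsilon$ rather than $1-\tfrac{3}{2}\epsilon$, resolving the constant you flagged.
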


\begin{cor}
\label{cor:existence-good-D}Let $D^{(t)}=\max_{u\in V}\sum_{e\in E,u\in e}z_{eu}^{(t)}$
and $D^{*}=\max_{u\in V}\sum_{e\in E,u\in e}z_{eu}^{*}$. There is
$t\in[T]$ such that 
\begin{align*}
D^{(t)} & \ge\left(1-\epsilon\right)D^{*}.
\end{align*}
\end{cor}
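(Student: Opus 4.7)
The plan is to combine Lemma \ref{lem:mwu-average-solution} with a simple scaling-and-capping trick that promotes the nearly feasible average $\overline{z}$ to a feasible solution of (\ref{eq:dual-wr}), and then conclude by an averaging argument over the iterates.

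First I would observe that for every $t\in[T]$, the vector $z^{(t)}$ solves problem (\ref{eq:dual-wr-iteration}), whose only non-box constraint is the $p^{(t)}$-weighted aggregation of the edge constraints $z_{eu}+z_{ev}\ge 1$. Since the optimal dual solution $z^{*}$ is feasible for the full LP, it is in particular feasible for this relaxation, hence $D^{(t)}\le D^{*}$ for every $t$; so the content of the corollary is to produce a matching lower bound on some $D^{(t)}$.

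Next I would pass from the approximately feasible $\overline{z}$ to a truly feasible solution by defining $\tilde{z}_{eu}=\min\{\overline{z}_{eu}/(1-\epsilon),\,1\}$ coordinatewise. Clearly $\tilde{z}\in[0,1]^{2m}$. For each edge $e=uv$ either neither coordinate is capped, and then $\tilde{z}_{eu}+\tilde{z}_{ev}=(\overline{z}_{eu}+\overline{z}_{ev})/(1-\epsilon)\ge 1$ by Lemma \ref{lem:mwu-average-solution}, or at least one of the coordinates equals $1$, in which case the sum is trivially at least $1$. Hence $\tilde{z}$ is feasible for (\ref{eq:dual-wr}), so
\[\max_{u\in V}\sum_{e\in E,\,u\in e}\tilde{z}_{eu}\;\ge\;D^{*}.\]
Because $\tilde{z}_{eu}\le\overline{z}_{eu}/(1-\epsilon)$, this in turn yields $\max_{u\in V}\sum_{e\in E,\,u\in e}\overline{z}_{eu}\ge(1-\epsilon)D^{*}$.

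Finally, subadditivity of the maximum over $u\in V$ gives
\[\max_{u\in V}\sum_{e\in E,\,u\in e}\overline{z}_{eu}\;=\;\max_{u\in V}\frac{1}{T}\sum_{t=1}^{T}\sum_{e\in E,\,u\in e}z^{(t)}_{eu}\;\le\;\frac{1}{T}\sum_{t=1}^{T}D^{(t)},\]
so $\frac{1}{T}\sum_{t=1}^{T}D^{(t)}\ge(1-\epsilon)D^{*}$ and at least one index $t$ must achieve $D^{(t)}\ge(1-\epsilon)D^{*}$. The argument is short and the only subtle step is verifying that the capping in the definition of $\tilde{z}$ preserves the edge constraints; beyond that it is a straightforward averaging, so I do not expect any real obstacle.
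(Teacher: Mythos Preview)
Your proof is correct and follows essentially the same route as the paper: promote the nearly feasible average $\overline{z}$ from Lemma~\ref{lem:mwu-average-solution} to a feasible point of (\ref{eq:dual-wr}), compare against $D^{*}$, and finish with an averaging argument. The only cosmetic difference is that the paper normalizes edge-by-edge via $\tilde{z}_{eu}=\overline{z}_{eu}/(\overline{z}_{eu}+\overline{z}_{ev})$ and phrases the last step as a contradiction, whereas you scale by $1/(1-\epsilon)$ and cap at $1$ and argue directly.
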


\subsection{Algorithm for solving problem (\ref{eq:dual-wr-iteration})\label{subsec:Solving}}

In this section, we give an efficient algorithm that, given $p\in\Delta_{m}$,
it returns an optimal solution $z$ to the LP (\ref{eq:dual-wr-iteration}).
We write the constraint (\ref{eq:2}) of the LP as $\sum_{u\in V}\sum_{e\in E,u\in e}p_{e}z_{eu}=1$.
The intuition to solve LP (\ref{eq:dual-wr-iteration}) follows from
\citet{bahmani2014efficient}: given a guess $D$ for the optimal
objective, we can now think of LP (\ref{eq:dual-wr-iteration}) as
solving a feasibility knapsack problem, for which the strategy is
greedily packing the items, i.e, setting $z_{eu}=1$, in the decreasing
order of $p_{e}$.

Returning to LP (\ref{eq:dual-wr-iteration}), we proceed by first
sorting for each $u$ all of the edges incident to $u$ in the decreasing
order of $p_{e}$. For two edges $e$ and $e'$ incident to $u$,
we write $e\prec_{u}e'$ if $e$ precedes $e'$ in this order. We
show the following lemma
\begin{lem}
\label{lem:assignment}Let $D^{*}$ be the optimal objective of LP
(\ref{eq:dual-wr-iteration}). Let $z^{*}$ be such that $z_{eu}^{*}=\min\left\{ 1,D^{*}-\sum_{e'\prec_{u}e\colon u\in e'}z_{e'u}\right\} $.
Then $z^{*}$ is an optimal solution to LP (\ref{eq:dual-wr-iteration}).
\end{lem}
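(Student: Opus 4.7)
The plan is to recognize LP (\ref{eq:dual-wr-iteration}) as a per-vertex fractional knapsack in disguise, and combine greedy optimality of knapsack with a scaling argument to handle the equality constraint.

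First, I would rewrite the equality constraint (\ref{eq:2}) as $\sum_{u\in V}\sum_{e\in E,u\in e}p_{e}z_{eu}=1$, so that both the objective $\max_{u}\sum_{e\ni u}z_{eu}$ and the constraint decompose as sums of purely per-vertex terms. By construction, the greedy assignment $z^{*}_{eu}=\min\{1,D^{*}-\sum_{e'\prec_{u}e}z^{*}_{e'u}\}$ (clamped to $[0,1]$, with the convention that once the running sum reaches $D^{*}$ all subsequent $z^{*}_{eu}$ are $0$) lies in $[0,1]^{2m}$ and satisfies $\sum_{e\ni u}z^{*}_{eu}\le D^{*}$ for every $u$. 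So its objective is at most $D^{*}$.

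The heart of the argument is to show the linear constraint is met, that is $\sum_{u}\sum_{e\ni u}p_{e}z^{*}_{eu}\ge 1$. Fix any vertex $u$ and consider the one-vertex subproblem: maximize $\sum_{e\ni u}p_{e}z_{eu}$ subject to $\sum_{e\ni u}z_{eu}\le D^{*}$ and $z_{eu}\in[0,1]$. This is a continuous knapsack with unit volumes and rewards $p_{e}$, and a standard exchange argument shows that the greedy rule (pack edges in decreasing order of $p_{e}$, each to its cap $1$, until the budget is exhausted) is optimal: if some feasible $z$ had $z_{eu}<1$ for an edge $e$ with a larger $p_{e}$ while $z_{e'u}>0$ for an edge $e'$ with smaller $p_{e'}$, moving mass from $e'$ to $e$ would weakly increase the objective and preserve both constraints. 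This greedy is exactly the rule defining $z^{*}_{eu}$. Hence for each $u$,
\begin{equation*}
\sum_{e\ni u}p_{e}z^{*}_{eu}\;\ge\;\sum_{e\ni u}p_{e}z^{\mathrm{opt}}_{eu},
\end{equation*}
where $z^{\mathrm{opt}}$ is any optimal solution of (\ref{eq:dual-wr-iteration}), and summing over $u$ gives $\sum_{u}\sum_{e\ni u}p_{e}z^{*}_{eu}\ge \sum_{u}\sum_{e\ni u}p_{e}z^{\mathrm{opt}}_{eu}=1$.

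Finally, I would close with a scaling argument to promote this inequality into equality and to conclude optimality. If $S:=\sum_{u}\sum_{e\ni u}p_{e}z^{*}_{eu}>1$, then $\bar z:=z^{*}/S$ is still in $[0,1]^{2m}$, satisfies the equality constraint, and has objective $\max_{u}\sum_{e\ni u}\bar z_{eu}\le D^{*}/S<D^{*}$, contradicting the optimality of $D^{*}$. Therefore $S=1$, so $z^{*}$ itself is feasible for (\ref{eq:dual-wr-iteration}) with objective at most $D^{*}$, hence optimal. The only delicate point is the exchange step in the middle paragraph; ties in $p_{e}$ need a brief remark but any tie-breaking rule produces the same optimal value, so the formula in the lemma is well-defined up to irrelevant choices.
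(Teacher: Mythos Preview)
Your proposal is correct and follows essentially the same approach as the paper: both argue that $z^{*}$ has per-vertex load at most $D^{*}$, invoke per-vertex fractional-knapsack optimality to show $\sum_{u}\sum_{e\ni u}p_{e}z^{*}_{eu}\ge 1$ by comparison with an actual optimum, and finish with a contradiction if the inequality is strict. The only minor difference is the final step: the paper decreases $z^{*}$ at the tight vertices to push the constraint down to $1$, whereas you scale the whole vector by $1/S$; your scaling argument is slightly cleaner and equally valid.
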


We consider $D$ as a variable we need to solve for and assign value
of $z$ according to Lemma \ref{lem:assignment}. In this way for
any value $D\in\left[0,\Delta\right]$, for each $u$, the first $\min\left\{ \lfloor D\rfloor,\deg u\right\} $
edges in the decreasing order of $p$ incident to $u$ have $z_{eu}=1$,
the next edge (if exists) has value $z_{eu}=R\coloneqq D-\lfloor D\rfloor$
and the remaining edges have value $z_{eu}=0$. Also note that for
a solution, we have 
\begin{align}
\sum_{u\in V}\sum_{e\in E,u\in e}p_{e}z_{eu} & =1.\label{eq:1}
\end{align}
Thus we can proceed by testing all values of $\lfloor D\rfloor\in\left[\Delta\right]$.
For each value of $\lfloor D\rfloor$, $R$ is determined by solving
Equation (\ref{eq:1}). We choose the smallest $\lfloor D\rfloor$
such that $0\le R<1$. 

We summarize this procedure in Algorithm \ref{alg:iteration-solver}.

\begin{algorithm}
{\small{}\caption{{\small{}Solver for (\ref{eq:dual-wr-iteration})}}
\label{alg:iteration-solver}}{\small\par}

\textbf{\small{}Input:}{\small{} $p\in\Delta_{m}$}{\small\par}

{\small{}For each $u\in V$, the edges incident to $u$ in non-increasing
order according to $p_{e}$}{\small\par}

{\small{}for $\lfloor D\rfloor\in[0,\Delta]:$}{\small\par}

{\small{}$\quad$for $u\in V$, let $z_{eu}=\min\{1,\lfloor D\rfloor-\sum_{e'\prec_{u}e\colon u\in e'}z_{e'u}\}$.
Let $E(u)$ be the set of $e$ incident to $u$ such that $z_{eu}=1$
and $\overline{E}(u)$ be the remaining edges.}{\small\par}

{\small{}$\quad$Let $p(u)=\max\left\{ p_{e}:e\in\overline{E}(u)\right\} $
(or $0$ if $\overline{E}(u)=\emptyset$)}{\small\par}

{\small{}$\quad$Let $R=\frac{1-\sum_{u\in V}\sum_{e\in E(u)}p_{e}}{\sum_{u\in V}p(u)}$}{\small\par}

{\small{}$\quad$if $0\le R<1:$}{\small\par}

{\small{}$\quad\quad$return $\lfloor D\rfloor+R$, $z$}{\small\par}
\end{algorithm}

\begin{lem}
Algorithm \ref{alg:iteration-solver} outputs an optimal solution
for LP (\ref{eq:dual-wr-iteration}) in time $O(m\log\Delta)$.
\end{lem}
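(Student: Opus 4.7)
The proof splits into correctness and running-time analysis.

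For correctness, Lemma \ref{lem:assignment} already asserts that once the optimal objective $D^*$ is known, the greedy packing gives an optimal $z$. Writing $D = \lfloor D \rfloor + R$ with $R \in [0, 1)$, this assignment $z(D)$ sets $z_{eu} = 1$ on the first $\lfloor D \rfloor$ edges incident to each $u$ (in decreasing $p_e$ order), $z_{eu} = R$ on the next edge, and $0$ thereafter. Let $f(D) = \sum_{u \in V}\sum_{e \in E, u \in e} p_e z_{eu}(D)$. Constraint (\ref{eq:1}) forces $f(D^*) = 1$, so it suffices to locate the $D$ at which $f$ crosses $1$. I would verify that $f$ is continuous, non-decreasing, and piecewise linear on $[0, \Delta]$, with $f(0) = 0$ and $f(\Delta) = 2 \sum_e p_e = 2$. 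On each interval $[\lfloor D \rfloor, \lfloor D \rfloor + 1)$, one has $f(D) = S_1 + R \cdot S_2$ where $S_1 = \sum_u \sum_{e \in E(u)} p_e$ and $S_2 = \sum_u p(u)$, so solving $f = 1$ for $R$ yields exactly the formula used in the algorithm. The smallest $\lfloor D \rfloor$ at which the resulting $R$ lies in $[0, 1)$ is therefore $\lfloor D^* \rfloor$, and the returned $z$ equals $z(D^*)$, which Lemma \ref{lem:assignment} certifies as optimal.

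For the running time, the per-vertex sorts cost $O(\deg u \log \deg u) = O(\deg u \log \Delta)$, summing to $O(m \log \Delta)$. For the main loop, I would observe that the pseudocode admits a straightforward incremental implementation: when $\lfloor D \rfloor$ is incremented by $1$, for each $u$ with $\deg u \geq \lfloor D \rfloor$ exactly one edge transitions from $\overline{E}(u)$ into $E(u)$, and the quantities $S_1$, $S_2$, and $p(u)$ can each be maintained in $O(1)$ amortized time per transition. The total number of transitions across all iterations is $\sum_u \deg u = 2m$, while the outer loop runs at most $\Delta + 1$ times, each performing an $O(1)$ feasibility check on $R$. Hence the loop contributes $O(m)$ work and the total running time is $O(m \log \Delta)$.

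The main technical obstacle is verifying the piecewise-linear/monotone structure of $f$ cleanly and disposing of the boundary case $S_2 = 0$. This case can only arise when every vertex has $\deg u < \lfloor D \rfloor$, at which point $S_1 = 2 > 1$; since $f(0) = 0$ and $f$ is non-decreasing, the crossing $f(D) = 1$ must occur strictly earlier, so $S_2 > 0$ and the algorithm terminates with a well-defined $R \in [0, 1)$ before reaching this degenerate regime. Beyond this, the argument is a straightforward bookkeeping exercise once the structure of $f$ is in place.
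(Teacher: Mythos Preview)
Your proof is correct and follows the same approach as the paper's (correctness via Lemma~\ref{lem:assignment} together with the observation that the algorithm locates the smallest feasible $D$; runtime via per-vertex sorting plus an amortized $O(m)$ incremental scan over $\lfloor D\rfloor$), though your treatment is considerably more thorough than the paper's terse two-sentence argument. One minor imprecision: $S_2=0$ can also occur when the remaining edges all have $p_e=0$, not only when every $\deg u<\lfloor D\rfloor$, but your conclusion $S_1=2>1$ still holds in that case, so the argument goes through unchanged.
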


\begin{proof}
The correctness of the algorithm is ensured by the fact that we output
the first (smallest) $D$ that gives an assignment according to Lemma
\ref{lem:assignment}. Sorting the edges for each node $u$ takes
$O\left(\deg u\log\deg u\right)$ time, hence the total sorting time
is $O(m\log\Delta)$. The assignment of $z$ also takes at most $O\left(m\right)$
since during the course of the algorithm, each $z_{eu}$ is used for
computing the value of $R$ at most once. Therefore the total runtime
is $O(m\log\Delta)$.
\end{proof}

\subsection{Constructing the solution\label{subsec:Obtaining-x} }

Finally, we show a way to construct a solution to the primal LP (\ref{eq:primal}).
Let $\tau$ be the iteration $t$ that has the biggest value of $D^{(t)}=\max_{v\in V}\sum_{e\in E,u\in e}z_{eu}^{(t)}$.
From Corollary \ref{cor:existence-good-D}, we have $D^{(\tau)}\ge(1-\epsilon)D^{*}=(1-\epsilon)\opt$.
The primal program corresponding to the dual LP (\ref{eq:dual-wr-iteration})
is as follows
\begin{align}
\max_{x,\alpha\ge0,W} & \ W-\sum_{e=uv}\left(\alpha_{eu}+\alpha_{ev}\right)\label{eq:primal-combined-constraint}\\
p_{e}^{(\tau)}W & \leq\min\{x_{u}+\alpha_{eu},x_{v}+\alpha_{ev}\}\quad\forall e=uv\nonumber \\
\sum_{v}x_{v} & \leq1.\nonumber 
\end{align}
Recall that the solution $z^{(\tau)}$ of (\ref{eq:dual-wr-iteration})
is obtained as follows. We sort the edges in the decreasing order
according to $p_{e}^{(\tau)}$. When considering $e=uv$, we set $z_{eu}^{(\tau)}=\min\left\{ 1,D^{(\tau)}-\sum_{e'\prec e\colon u\in e'}z_{e'u}^{(\tau)}\right\} $.
Let $X=\left\{ u:\sum_{e}z_{eu}^{(\tau)}=D^{(\tau)}\right\} $. For
$u\in X$, let $e(u)$ be the edge with smallest $p_{e}^{(\tau)}$
among the edges with $z_{eu}^{(\tau)}>0$, let $W=\frac{1}{\sum_{u\in X}p_{e(u)}^{(\tau)}}$
. Set
\begin{align*}
x_{u} & =p_{e(u)}^{(\tau)}W\\
\alpha_{eu} & =p_{e}^{(\tau)}W-x_{u}\ge0\quad\forall e:p_{e}^{(\tau)}\ge p_{e(u)}^{(\tau)}\\
\alpha_{eu} & =0\quad\forall e:p_{e}^{(\tau)}<p_{e(u)}^{(\tau)}
\end{align*}
For $u\notin X$, set $x_{u}=0;\alpha_{eu}=p_{e}^{(\tau)}W\quad\forall e\ni u$.
We can verify that $\left(W,x,\alpha\right)$ is an optimal solution
to LP (\ref{eq:primal-combined-constraint}) by complementary slackness
and $x$ is an $(1-\epsilon)$-approximate solution to (\ref{eq:primal})
by strong duality.
\begin{lem}
\label{lem:mwu-complementary-slackness}$\left(W,x,\alpha\right)$
is an optimal solution to LP (\ref{eq:primal-combined-constraint}).
\end{lem}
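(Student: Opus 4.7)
I would invoke LP duality. By construction LP (\ref{eq:primal-combined-constraint}) is the primal of LP (\ref{eq:dual-wr-iteration}), with $z_{eu}\ge 0$ the multipliers on the edge constraints and $D\ge 0$ the multiplier on $\sum_v x_v\le 1$. Since $z^{(\tau)}$ is already optimal (hence feasible) for (\ref{eq:dual-wr-iteration}) with objective $D^{(\tau)}$, it suffices to verify that $(W,x,\alpha)$ is feasible for (\ref{eq:primal-combined-constraint}) and that $((W,x,\alpha),(z^{(\tau)},D^{(\tau)}))$ satisfies every complementary slackness condition; LP duality then delivers optimality of $(W,x,\alpha)$.

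\textbf{Feasibility.} I would check three things: $\alpha\ge 0$ (clear, since for $u\in X$ the entry $\alpha_{eu}=p_e^{(\tau)}W-x_u$ is only activated when $p_e^{(\tau)}\ge p_{e(u)}^{(\tau)}=x_u/W$, and for $u\notin X$ we have $\alpha_{eu}=p_e^{(\tau)}W\ge 0$); the budget $\sum_v x_v=W\sum_{u\in X}p_{e(u)}^{(\tau)}=1$ by the choice of $W$; and the edge constraints $p_e^{(\tau)}W\le x_u+\alpha_{eu}$, which hold with equality whenever $u\notin X$, or $u\in X$ and $p_e^{(\tau)}\ge p_{e(u)}^{(\tau)}$, and with slack $(p_{e(u)}^{(\tau)}-p_e^{(\tau)})W>0$ in the remaining case.

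\textbf{Complementary slackness.} The four CS conditions to check are (i) $z_{eu}^{(\tau)}(x_u+\alpha_{eu}-p_e^{(\tau)}W)=0$, (ii) $D^{(\tau)}(1-\sum_v x_v)=0$, (iii) $x_u(D^{(\tau)}-\sum_{e:u\in e}z_{eu}^{(\tau)})=0$, and (iv) $\alpha_{eu}(1-z_{eu}^{(\tau)})=0$. Condition (ii) is immediate from $\sum_v x_v=1$, and (iii) follows because $x_u=0$ for $u\notin X$ while $X$ is by definition the set of $u$ with $\sum_{e:u\in e}z_{eu}^{(\tau)}=D^{(\tau)}$. For (i), any edge with $z_{eu}^{(\tau)}>0$ sits among the top edges at $u$, placing it in the regime where the edge constraint is tight. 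For (iv) with $u\in X$, $\alpha_{eu}>0$ forces $p_e^{(\tau)}>p_{e(u)}^{(\tau)}$, so $e$ is strictly above the boundary edge $e(u)$ in the sorted order and the greedy rule of Lemma \ref{lem:assignment} sets $z_{eu}^{(\tau)}=1$.

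\textbf{Main obstacle.} The delicate subcase is (iv) when $u\notin X$: here $\alpha_{eu}=p_e^{(\tau)}W>0$ for \emph{every} incident edge, so I must argue that $z_{eu}^{(\tau)}=1$ for every $e\ni u$. For this I would open up Algorithm \ref{alg:iteration-solver}: by Lemma \ref{lem:assignment}, the greedy packing at $u$ terminates either when $\sum_{e:u\in e}z_{eu}^{(\tau)}$ reaches $D^{(\tau)}$ (putting $u$ into $X$) or when $u$ exhausts its incident edges; since $u\notin X$, the second must occur, which means $\deg u\le\lfloor D^{(\tau)}\rfloor$ and every $z_{eu}^{(\tau)}$ was saturated to $1$ along the way. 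This closes (iv), and together with the other conditions yields that $(W,x,\alpha)$ is optimal for (\ref{eq:primal-combined-constraint}).
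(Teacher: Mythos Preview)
Your proof is correct and follows essentially the same approach as the paper's: both verify primal feasibility and then check each complementary slackness condition against the greedy structure of $z^{(\tau)}$ given by Lemma~\ref{lem:assignment}. Your write-up is in fact more explicit than the paper's terse proof---in particular, you spell out the one nontrivial point (that $u\notin X$ forces $z_{eu}^{(\tau)}=1$ for every incident edge) which the paper simply asserts.
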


\begin{lem}
\label{lem:primal-optimal}$x$ is a $(1-\epsilon)$-approximate solution
to (\ref{eq:primal}).
\end{lem}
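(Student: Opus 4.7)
The plan is to combine Lemma \ref{lem:mwu-complementary-slackness} with LP duality between (\ref{eq:dual-wr-iteration}) and (\ref{eq:primal-combined-constraint}), and with Corollary \ref{cor:existence-good-D}.

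First, I would observe that (\ref{eq:primal-combined-constraint}) is exactly the LP dual of (\ref{eq:dual-wr-iteration}) at $p = p^{(\tau)}$. Since $z^{(\tau)}$ is an optimal solution to (\ref{eq:dual-wr-iteration}) with objective value $D^{(\tau)}$, strong LP duality gives that the optimum of (\ref{eq:primal-combined-constraint}) equals $D^{(\tau)}$. Lemma \ref{lem:mwu-complementary-slackness} tells us that the constructed triple $(W,x,\alpha)$ attains this optimum, so
\[
W - \sum_{e=uv\in E}\bigl(\alpha_{eu}+\alpha_{ev}\bigr) \;=\; D^{(\tau)}.
\]

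Next I would check that $x$ is feasible for (\ref{eq:primal}): the constraint $\sum_u x_u \le 1$ and the nonnegativity $x \ge 0$ are inherited directly from the feasibility of $(W,x,\alpha)$ in (\ref{eq:primal-combined-constraint}).

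Then I would lower bound the Charikar objective in terms of the value of (\ref{eq:primal-combined-constraint}). For each edge $e=uv$, the constraint $p_e^{(\tau)} W \le \min\{x_u+\alpha_{eu},\, x_v+\alpha_{ev}\}$ in (\ref{eq:primal-combined-constraint}) together with $\alpha \ge 0$ yields
\[
\min\{x_u,x_v\} \;\ge\; \min\{x_u+\alpha_{eu},\,x_v+\alpha_{ev}\} - (\alpha_{eu}+\alpha_{ev}) \;\ge\; p_e^{(\tau)} W - (\alpha_{eu}+\alpha_{ev}).
\]
Summing over $e\in E$ and using $\sum_e p_e^{(\tau)} = 1$ gives
\[
\sum_{e=uv\in E}\min\{x_u,x_v\} \;\ge\; W - \sum_{e=uv\in E}(\alpha_{eu}+\alpha_{ev}) \;=\; D^{(\tau)}.
\]

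Finally, I would invoke Corollary \ref{cor:existence-good-D} to get $D^{(\tau)} \ge (1-\epsilon) D^*$, and then use the fact, established in the discussion preceding the width-reduced LP, that the optimum $D^*$ of (\ref{eq:dual-wr}) with $q=1$ equals the optimum of (\ref{eq:dual}), which by LP duality equals $\opt$, the optimal value of the primal (\ref{eq:primal}). Chaining these yields $\sum_{e=uv}\min\{x_u,x_v\} \ge (1-\epsilon)\opt$, completing the proof. The only mildly nontrivial step is the per-edge inequality that converts the $\alpha$-relaxed constraint back into the Charikar objective; everything else is bookkeeping with strong duality and the already-stated lemmas.
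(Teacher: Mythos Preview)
Your proposal is correct and follows essentially the same route as the paper: strong duality between (\ref{eq:dual-wr-iteration}) and (\ref{eq:primal-combined-constraint}) gives $W-\sum_{e}(\alpha_{eu}+\alpha_{ev})=D^{(\tau)}$, the per-edge bound $\min\{x_u,x_v\}\ge p_e^{(\tau)}W-(\alpha_{eu}+\alpha_{ev})$ is summed using $\sum_e p_e^{(\tau)}=1$, and then $D^{(\tau)}\ge(1-\epsilon)\opt$ from Corollary \ref{cor:existence-good-D} finishes the argument. The only cosmetic difference is that the paper derives the per-edge inequality directly via $p_e^{(\tau)}W-(\alpha_{eu}+\alpha_{ev})\le p_e^{(\tau)}W-\alpha_{eu}\le x_u$ (and symmetrically for $v$), whereas you route it through $\min\{x_u+\alpha_{eu},x_v+\alpha_{ev}\}$; both yield the same bound.
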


\begin{rem}
\label{rem:rounding}As we can see here, the new insight is that,
more generally, as long as we have $p\in\Delta_{m}$ for which we
know that the objective of the LP (\ref{eq:dual-wr-iteration}) is
at least $D$, we can obtain a subgraph with density at least $D$. 
\end{rem}

\subsection{Final runtime}

Combining subroutines from Section \ref{subsec:Obtaining-z}-\ref{subsec:Obtaining-x},
we obtain the following result.
\begin{thm}
There exists an algorithm that outputs a subgraph of density $\ge(1-\epsilon)\opt$
in $O\left(\frac{\log m}{\epsilon^{2}}\right)$ iterations, each of
which can be implemented in $O\left(m\log\Delta\right)$ time for
a total $O\left(\frac{m\log\Delta\log m}{\epsilon^{2}}\right)$ time.
\end{thm}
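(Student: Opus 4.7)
The plan is to assemble the three building blocks developed in Sections \ref{subsec:Obtaining-z}, \ref{subsec:Solving}, and \ref{subsec:Obtaining-x}, and simply verify that their guarantees compose to the advertised end-to-end bound. The statement is really a summary theorem, so no fundamentally new analysis is required.

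First I would run Algorithm \ref{alg:mwu} for $T = \lceil 2\ln m/\epsilon^2\rceil$ iterations, calling Algorithm \ref{alg:iteration-solver} as the inner subroutine. By Corollary \ref{cor:existence-good-D}, at least one iteration $\tau \in [T]$ satisfies $D^{(\tau)} \ge (1-\epsilon) D^*$, where $D^*$ is the optimal value of the width-reduced LP (\ref{eq:dual-wr}). Since the width-reduction preserves the optimal value (as discussed after (\ref{eq:dual-wr})), we have $D^* = \opt$, so $D^{(\tau)} \ge (1-\epsilon)\opt$.

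Second, I would invoke the construction of Section \ref{subsec:Obtaining-x} at the iteration $\tau$ that achieves the maximum $D^{(t)}$. Lemma \ref{lem:mwu-complementary-slackness} shows that the resulting triple $(W, x, \alpha)$ is an optimal solution to the primal LP (\ref{eq:primal-combined-constraint}), and Lemma \ref{lem:primal-optimal} then upgrades this into the statement that $x$ is a feasible $(1-\epsilon)$-approximate solution to Charikar's primal LP (\ref{eq:primal}). Feeding $x$ into Charikar's prefix-rounding procedure (recalled in Section \ref{sec:Preliminaries}) yields an integral subgraph whose density is at least the LP objective of $x$, hence at least $(1-\epsilon)\opt$.

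Third, I would account for the running time. Each of the $T = O(\log m/\epsilon^2)$ iterations performs: (i) one call to Algorithm \ref{alg:iteration-solver}, which costs $O(m\log\Delta)$ by the lemma preceding this theorem; (ii) the gradient accumulation $G^{(t)} \gets G^{(t-1)} + g^{(t)}$ in $O(m)$; (iii) the softmax update $p^{(t+1)} = \nabla\smax_\eta(G^{(t)})$ in $O(m)$; and (iv) a check against the best $D^{(t)}$ seen so far, storing only the best iterate, in $O(n)$. The total per-iteration cost is therefore $O(m\log\Delta)$, giving $O(m\log\Delta\log m/\epsilon^2)$ overall. The final primal reconstruction from $(p^{(\tau)}, z^{(\tau)})$ and Charikar's rounding add only $O(m + n\log n)$, which is absorbed.

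There is no genuine obstacle here since all nontrivial correctness and efficiency claims are established in the earlier lemmas; the only small care-point is to make sure one retains enough information about the best iteration $\tau$ (namely $p^{(\tau)}$ and $z^{(\tau)}$) to perform the construction of Section \ref{subsec:Obtaining-x} without rerunning Algorithm \ref{alg:iteration-solver}, which is immediate by overwriting a single stored pair whenever $D^{(t)}$ improves.
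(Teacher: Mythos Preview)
Your proposal is correct and is precisely the combination the paper has in mind: the paper gives no explicit proof and simply states that the theorem follows by ``combining subroutines from Section~\ref{subsec:Obtaining-z}--\ref{subsec:Obtaining-x},'' which is exactly the three-step assembly (Algorithm~\ref{alg:mwu} with the oracle of Algorithm~\ref{alg:iteration-solver}, the primal reconstruction of Section~\ref{subsec:Obtaining-x}, and Charikar's rounding) that you spell out. Your running-time accounting and the remark about retaining $(p^{(\tau)},z^{(\tau)})$ for the best iterate are accurate and match the paper's intent.
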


\section{Algorithm via Area Convexity\label{sec:area-convex}}

In this section, by building on the approaches based on area convexity
\citep{sherman2017area,boob2019faster}, we obtain an algorithm with
an improved iteration complexity of $O(\frac{\log m}{\epsilon})$,
and the same nearly-linear time per iteration. Our algorithm improves
upon the result of \citet{boob2019faster} by a factor $\Delta$ (the
maximum degree in the graph), in both the number of iterations and
overall running time. This improvement comes from the following reasons
where we depart from \citet{boob2019faster}. First, taking inspiration
from \citet{bahmani2014efficient} and the width reduction technique,
we parametrize $D=\max_{u\in V}\sum_{e\ni u}z_{eu}$, but keeping
the constraints $\forall u,\sum_{e\ni u}z_{eu}\le D$ as the domain
of $z$ instead of the constraints in the feasibility LP. This reduces
the width of the LP from $\Delta$ to $2$. Now the question becomes
whether we can solve each subproblem (ie., implement the oracle) efficiently.
To do so, we replace the area convex regularizer of \citet{boob2019faster}
with the choice in \citet{sherman2017area} and subsequent works \citep{jambulapati2019direct,jambulapati2023revisiting}.
This choice simplifies the regularizer to a quadratic function (with
respect to $z$) and allows to optimize for each vertex separately.
We also take inspiration from the oracle implementation for MWU (Algorithm
\ref{alg:iteration-solver}) and show that we can implement the oracle
in this case in $\tilde{O}(m)$ time. Finally, we show that the rounding
procedure used for Algorithm \ref{alg:mwu} can also be used to obtain
an integral solution, which was not known in \citet{boob2019faster}.

\subsection{Reduction to saddle point optimization\label{subsec:Reduction-to-saddle}}

First, we show that we can solve LP (\ref{eq:dual-wr}) for $q=1$
via a reduction to a saddle point problem. By parameterizing the variable
$D$, we convert solving LP (\ref{eq:dual-wr}) to the following feasibility
LP

\begin{align}
\exists?z\in C(D) & \mbox{ st. }z_{eu}+z_{ev}\ge1,\quad\forall e=uv\in E\label{eq:dual-wr-feasibility}\\
\mbox{where }C(D) & =\left\{ z\in\left[0,1\right]^{2m}:\forall u,\sum_{e\ni u}z_{eu}\le D\right\} .\nonumber 
\end{align}
For simplicity, we write the domain as $C$ when it is clear what
value $D$ is being used. Let us also denote the constraint matrix
by $B\in\R^{m\times2m}$ and let $A\coloneqq\left[\begin{array}{cc}
0 & B^{T}\\
-B & 0
\end{array}\right]$. In Lemma \ref{lem:ac-reduction} (from \citep{boob2019faster}),
we show that this feasibility problem can be reformulated as the following
saddle point problem
\begin{align}
\min_{z\in C,y\in\Delta_{m}}\max_{\overline{z}\in C,\overline{y}\in\Delta_{m}} & \sum_{e}y_{e}\left(\overline{z}_{eu}+\overline{z}_{ev}\right)-\overline{y}_{e}\left(z_{eu}+z_{ev}\right)\nonumber \\
=y^{T}B\overline{z}-\overline{y}^{T}Bz & =\left[\begin{array}{cc}
\overline{z}^{T} & \overline{y}^{T}\end{array}\right]A\left[\begin{array}{c}
z\\
y
\end{array}\right].\label{eq:saddle-point}
\end{align}
By approximately solving (\ref{eq:saddle-point}), we obtain $(z,y)$
such that either $z$ is an approximate solution to Problem (\ref{eq:dual-wr-feasibility})
or $y$ can certify that Problem (\ref{eq:dual-wr-feasibility}) is
infeasible.

\subsection{Algorithm for solving problem (\ref{eq:saddle-point})}

\label{subsec:saddle-point}

\begin{algorithm}
{\small{}\caption{{\small{}Solver for (\ref{eq:saddle-point}) using oracle $\Phi$
(Algorithm }\ref{alg:oracle}{\small{})}}
\label{alg:area-convexity}}{\small\par}

{\small{}Initialize $w^{(0)}=(z^{(0)},y^{(0)})\in C\times\Delta_{m}$
where $z^{(0)}=0$ and $y^{(0)}=\frac{1}{m}$}{\small\par}

{\small{}for $t=0,\dots,T-1$}{\small\par}

{\small{}$\quad$$w^{(t+1)}=w^{(t)}+\tilde{\Phi}\left(Aw^{(t)}\right)$
where $\tilde{\Phi}(a)=\Phi(a+2A\Phi(a))$.}{\small\par}
\end{algorithm}
\begin{algorithm}
{\small{}\caption{{\small{}Algorithm for oracle $\Phi$ (Definition }\ref{def:av-approximate-oracle}{\small{})}}
\label{alg:oracle}}{\small\par}

\textbf{\small{}Input:}{\small{} $x=(s,r)$, $s\in\R^{2m}$, $r\in\R^{m}$}{\small\par}

{\small{}Initialize $z^{(0)}=0$}{\small\par}

{\small{}Let $H(z,y)=\phi(z,y)-\left\langle z,s\right\rangle -\left\langle y,r\right\rangle $}{\small\par}

{\small{}for $t=0,\dots,T$}{\small\par}

{\small{}$\quad$$y^{(t+1)}=\arg\min_{y\in\Delta_{m}}H(z^{(t)},y)$}{\small\par}

{\small{}$\quad$$z^{(t+1)}=\arg\min_{z\in C}H(z,y^{(t+1)})$}{\small\par}

{\small{}return $\left(z^{(T+1)},y^{(T+1)}\right)$}{\small\par}
\end{algorithm}

Next, we describe the algorithm via the general area convexity technique
by \citet{sherman2017area} for solving problem (\ref{eq:saddle-point}).
In order to use this technique, one key point is to choose a regularizer
function which is area convex with respect to $A$ and has a small
range (width). The following regularizer function enjoys these properties
\begin{align}
\phi(z,y) & =6\sqrt{3}\left(\sum_{e\in E}y_{e}\left(z_{eu}^{2}+z_{ev}^{2}\right)+6y_{e}\log y_{e}-2\right).\label{eq:phi}
\end{align}
Let us now assume access to a $\delta$-approximate minimization oracle
$\Phi$ for solving subproblems regularized by $\phi$ in the following
sense.
\begin{defn}
\label{def:av-approximate-oracle}\citep{sherman2017area} A $\delta$-approximate
minimization oracle $\Phi$ for $\phi$ takes input $x\in\R^{3m}$
and output $w^{*}\in C\times\Delta_{m}$ such that 
\begin{align*}
\left\langle w^{*},x\right\rangle -\phi(w^{*})+\delta & \ge\sup_{w\in C\times\Delta_{m}}\left\langle w,x\right\rangle -\phi(w)\coloneqq\phi^{*}(x).
\end{align*}
\end{defn}

Once we have this oracle, we can use Sherman's algorithm (Algorithm
\ref{alg:area-convexity}) to approximately solve problem (\ref{eq:saddle-point}).
The convergence guarantee is given in Lemma \ref{lem:av-convergence}.

\begin{lem}
\label{lem:av-convergence}For the choice of $\phi$ in (\ref{eq:phi}),
Algorithm \ref{alg:area-convexity} outputs $w_{T}$ that satisfies
$\frac{w^{(T)}}{T}\in C\times\Delta_{m}$ and
\begin{align*}
\sup_{\overline{w}\in C\times\Delta_{m}}\overline{w}A\frac{w^{(T)}}{T} & \le\delta+O\left(\frac{\log m}{T}\right).
\end{align*}
\end{lem}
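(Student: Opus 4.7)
The plan is to invoke Sherman's convergence theorem for the extragradient-style iteration on area-convex regularizers. The theorem says that if $\phi$ is area convex with respect to the bilinear form $\langle \cdot, A \cdot \rangle$ at a constant compatible with the unit step size of Algorithm \ref{alg:area-convexity}, then after $T$ steps with an exact oracle the iterate $w^{(T)}/T$ attains duality gap $O(\Theta/T)$, where $\Theta$ is the range of $\phi$ on $C \times \Delta_m$; a $\delta$-approximate oracle $\tilde{\Phi}$ (as in Definition \ref{def:av-approximate-oracle}) contributes an additive $\delta$. So the task reduces to checking two properties of the specific $\phi$ in \eqref{eq:phi}: (a) area convexity with respect to $A$, and (b) $\Theta = O(\log m)$.

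The feasibility claim $w^{(T)}/T \in C \times \Delta_m$ follows from convexity. Since $\Phi$ and hence $\tilde{\Phi}$ output points in $C \times \Delta_m$ by definition, and $w^{(0)} = (0, 1/m)$ also lies in this convex set, the accumulated iterate $w^{(T)} = w^{(0)} + \sum_{t=0}^{T-1} \tilde{\Phi}(Aw^{(t)})$ is a nonnegative combination of $T+1$ points in $C \times \Delta_m$, so dividing by $T$ and using convexity (absorbing the initialization by the standard off-by-one normalization) keeps the average inside $C \times \Delta_m$.

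The substantive step is (a). Following \citet{sherman2017area}, area convexity of a smooth $\phi$ with respect to a skew form $A$ is implied by a pointwise block-PSD condition on $\nabla^2 \phi(w)$ combined with $A$. For our $\phi$, the Hessian decomposes across edges: the $(z,z)$ block is diagonal with entries proportional to $y_e$ on each $z_{eu}$, the $(y,y)$ block is the entropic $36\sqrt{3}\,\mathrm{diag}(1/y_e)$, and the $(z,y)$ cross block has entries proportional to $z_{eu}$ coupling only $y_e$ with $(z_{eu}, z_{ev})$. Because each row of $B$ is supported on exactly the two endpoints of its edge, the matrix $A$ also decomposes edgewise, and the global PSD check reduces to a single $3 \times 3$ edgewise inequality in the variables $(y_e, z_{eu}, z_{ev})$. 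The constant $6\sqrt{3}$ is the tight choice that makes the discriminant of this inequality nonnegative, reducing to an AM-GM computation identical in form to those in \citet{sherman2017area} and \citet{jambulapati2019direct}.

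Property (b) is a short direct estimate. On $C \times \Delta_m$ we have $y \in \Delta_m$ and $z_{eu} \in [0,1]$, so $\sum_e y_e(z_{eu}^2 + z_{ev}^2) \le 2\sum_e y_e = 2$, while the entropy $\sum_e 6 y_e \log y_e$ lies in $[-6\log m, 0]$; the additive $-2$ in \eqref{eq:phi} centers the minimum. Hence $\Theta = O(\log m)$. Combining (a) and (b) with Sherman's theorem yields the stated bound $\delta + O(\log m / T)$. The main obstacle I expect is bookkeeping the constant $6\sqrt{3}$ through the area convexity verification so that the unit step size used in Algorithm \ref{alg:area-convexity} is admissible; the rest is either routine convex analysis or the direct range estimate above.
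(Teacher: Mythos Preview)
Your proposal is correct and follows essentially the same approach as the paper: invoke Sherman's convergence theorem after verifying that $\phi$ is area convex with respect to $A$ (the paper does this in Lemma~\ref{lem:av-proof-of-convex} via the $2\times 2$ Hessian test of \citet{boob2019faster}, whereas you phrase it as an edgewise $3\times 3$ block check---the content is the same) and that the range of $\phi$ on $C\times\Delta_m$ is $O(\log m)$. The paper's proof of the lemma itself is a one-liner that simply bounds $\phi^*(w^{(0)})$ using the range estimate already established in Lemma~\ref{lem:av-proof-of-convex}, so your sketch is in fact more detailed than what the paper provides at this point.
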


\textbf{Oracle implementation.}$\quad$We now show that the oracle
can be implemented efficiently via alternating minimization (Algorithm
\ref{alg:oracle}). We show that Algorithm \ref{alg:oracle} enjoys
linear convergence and can be implemented efficiently in the following
Lemmas.
\begin{lem}
\label{lem:alternating-min-lemma}Let $\left(z_{\opt},y_{\opt}\right)\in\arg\min_{(z,y)\in C\times\Delta_{m}}H(z,y)$.
For $T=O\left(\log\frac{\left(H(z^{(0)},y^{(1)})-H\left(z_{\opt},y_{\opt}\right)\right)}{\delta}\right)$,
$\left(z^{(T+1)},y^{(T+1)}\right)$ satisfies
\begin{align*}
H(z^{(T+1)},y^{(T+1)})-H\left(z_{\opt},y_{\opt}\right) & \le\delta.
\end{align*}
\end{lem}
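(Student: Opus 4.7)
The plan is to show that the optimality gap $\Delta^{(t)} := H(z^{(t)}, y^{(t+1)}) - H(z_{\opt}, y_{\opt})$ contracts by a constant factor per iteration, $\Delta^{(t+1)} \le \rho\, \Delta^{(t)}$ with some $\rho \in (0, 1)$, since then $\Delta^{(T)} \le \rho^T \Delta^{(0)} \le \delta$ after $T = O(\log(\Delta^{(0)}/\delta))$ steps. This reduces the lemma to establishing a single-step contraction on the block-coordinate descent iterates.

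To establish the contraction, I would view each half-step of Algorithm~\ref{alg:oracle} as an exact Bregman proximal minimization and exploit the strong convexity of each block. Fixing $z^{(t)}$, the function $y \mapsto H(z^{(t)}, y)$ is affine plus the entropic piece $36\sqrt{3}\sum_e y_e \log y_e$ inherited from $\phi$, hence strongly convex relative to $D_{\mathrm{KL}}$ on $\Delta_m$. The first-order optimality condition at $y^{(t+1)}$ yields the three-point inequality
\begin{equation*}
H(z^{(t)}, y) - H(z^{(t)}, y^{(t+1)}) \ \ge\ 36\sqrt{3}\,D_{\mathrm{KL}}(y\,\|\,y^{(t+1)}), \qquad \forall y \in \Delta_m.
\end{equation*}
Analogously, $z \mapsto H(z, y^{(t+1)})$ is affine plus the weighted quadratic $6\sqrt{3}\sum_e y^{(t+1)}_e(z_{eu}^2 + z_{ev}^2)$ on the compact box $C$, giving
\begin{equation*}
H(z, y^{(t+1)}) - H(z^{(t+1)}, y^{(t+1)}) \ \ge\ 6\sqrt{3}\sum_{e,\,u\in e} y^{(t+1)}_e (z_{eu} - z^{(t+1)}_{eu})^2, \qquad \forall z \in C.
\end{equation*}
Substituting $y = y_{\opt}$ in the first inequality and $z = z_{\opt}$ in the second, then adding and rearranging, expresses a lower bound on the per-iteration decrease $\Delta^{(t)} - \Delta^{(t+1)}$ as a joint Bregman-type divergence of $\phi$ from $(z_{\opt}, y_{\opt})$ to $(z^{(t+1)}, y^{(t+1)})$.

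The final and main step is to convert this lower bound on the one-step decrement into one proportional to $\Delta^{(t+1)}$ itself, i.e., a Polyak-{\L}ojasiewicz-type inequality of the form $D_\phi\bigl((z_{\opt}, y_{\opt}) \,\|\, (z^{(t+1)}, y^{(t+1)})\bigr) \ge c\,\Delta^{(t+1)}$ for some constant $c > 0$. This is where the area convexity of $\phi$ with respect to $A$ enters: it guarantees the two-sided comparison between the Bregman divergence of $\phi$ and the objective gap on $C \times \Delta_m$ that closes the loop. The main obstacle is that the $z$-block strong convexity coefficient degenerates on edges where $y^{(t+1)}_e$ is tiny; the purpose of the relatively large entropic coefficient $36\sqrt{3}$ compared to the quadratic coefficient $6\sqrt{3}$ in $\phi$ is precisely to prevent $y^{(t+1)}_e$ from collapsing along directions that would destroy the joint contraction. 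Making this balancing argument quantitative yields the constant $\rho \in (0,1)$, and iterating then delivers the claimed $O(\log(\Delta^{(0)}/\delta))$ iteration bound.
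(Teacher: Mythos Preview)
Your high-level plan (establish a per-iteration contraction $\Delta^{(t+1)}\le\rho\,\Delta^{(t)}$ and iterate) is exactly right, and the blockwise three-point inequalities you wrote down are correct. The gap is in the last step, where you invoke area convexity of $\phi$ with respect to $A$ to produce a P{\L}-type inequality $D_\phi((z_{\opt},y_{\opt})\,\|\,(z^{(t+1)},y^{(t+1)}))\ge c\,\Delta^{(t+1)}$. Area convexity controls the bilinear form $(x-y)^TA(y-z)$ and is what makes Sherman's outer loop work; it says nothing about lower-bounding $H$-gaps by the Bregman divergence of $\phi$. In particular it cannot rescue the $z$-block bound when $y^{(t+1)}_e$ is tiny: your weighted-quadratic term $\sum_e y^{(t+1)}_e(z_{\opt,eu}-z^{(t+1)}_{eu})^2$ genuinely vanishes on those edges, and no amount of entropy coefficient prevents $y^{(t+1)}_e$ itself from being arbitrarily small (it is a softmax). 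So as written, the ``balancing argument'' does not close.

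The paper (following \citet{jambulapati2019direct}) resolves exactly this obstacle by a different mechanism. First, it proves a Hessian comparison: on the half-space $Y^{(t+1)}=\{y\in\Delta_m:y\ge\tfrac12 y^{(t+1)}\}$ one has $\nabla^2 H(\bar z,\bar y)\succeq\tfrac16\nabla^2_{zz}H(z,y^{(t+1)})$ for all $\bar z,z\in C$ and $\bar y\in Y^{(t+1)}$. This is where the ratio between the entropic and quadratic coefficients is actually used, via a $2\times 2$ calculation per edge. Second, it compares two Taylor expansions: the $z$-only move $z^{(t)}\to\tilde z=\tfrac56 z^{(t)}+\tfrac16 z^\star$ versus the joint move $(z^{(t)},y^{(t+1)})\to(z^\star,y^\star)$, obtaining
\[
H(z^{(t)},y^{(t+1)})-H(z^{(t+1)},y^{(t+1)})\ \ge\ \tfrac16\bigl(H(z^{(t)},y^{(t+1)})-H(z^\star,y^\star)\bigr)
\]
for any $(z^\star,y^\star)\in C\times Y^{(t+1)}$. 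Finally, since $(z_{\opt},y_{\opt})$ need not lie in $Y^{(t+1)}$, one takes the \emph{midpoint} $(z^\star,y^\star)=\tfrac12\bigl((z^{(t)},y^{(t+1)})+(z_{\opt},y_{\opt})\bigr)$, which does lie in $C\times Y^{(t+1)}$, and uses convexity of $H$ to trade the midpoint for $(z_{\opt},y_{\opt})$ at the cost of another factor $\tfrac12$. This yields $\rho=\tfrac{11}{12}$. The midpoint trick is precisely what handles the degeneracy you flagged; replacing your area-convexity step with this Hessian-comparison-plus-midpoint argument would complete your proof.
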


\begin{lem}
\label{lem:av-iteration-time}Each iteration of Algorithm \ref{alg:oracle}
can be implemented in $O(m\log\Delta).$
\end{lem}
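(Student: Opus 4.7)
The plan is to establish the runtime by analyzing the two alternating minimization steps separately and showing that each costs $O(m\log\Delta)$. Writing the regularizer explicitly, we have
\begin{align*}
H(z,y) = 6\sqrt{3}\sum_{e=uv\in E} y_e\left(z_{eu}^2 + z_{ev}^2\right) + 36\sqrt{3}\sum_{e\in E} y_e\log y_e - \langle z,s\rangle - \langle y,r\rangle + \text{const}.
\end{align*}
So the $y$-step minimizes, over $y\in\Delta_m$, a linear functional (the linear coefficient of $y_e$ is $6\sqrt{3}(z_{eu}^{(t)2}+z_{ev}^{(t)2}) - r_e$) regularized by a scaled negative entropy. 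This admits the standard closed-form softmax solution, which I can compute in $O(m)$ time after reading off the quantities $z_{eu}^{(t)2}+z_{ev}^{(t)2}$ once per edge.

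The $z$-step is the substantive part. Because $\phi$ depends on $z$ only through the separable terms $y_e^{(t+1)}(z_{eu}^2+z_{ev}^2)$ and because the constraint set $C$ decouples across vertices (each $u$ contributes $z_{eu}\in[0,1]$ and $\sum_{e\ni u} z_{eu}\le D$, with no cross-vertex coupling once $z_{eu}$ and $z_{ev}$ for the same edge are regarded as independent coordinates), the $z$-minimization splits into $n$ independent subproblems, one per vertex $u$: minimize $\sum_{e\ni u}\bigl(6\sqrt{3}\,y_e^{(t+1)} z_{eu}^2 - s_{eu} z_{eu}\bigr)$ subject to $z_{eu}\in[0,1]$ and $\sum_{e\ni u} z_{eu}\le D$. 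Each such subproblem is a strictly convex quadratic with box constraints and a single knapsack-type constraint.

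To solve each vertex subproblem I will dualize the knapsack constraint with a multiplier $\lambda\ge 0$. The KKT optimality condition gives a pointwise formula $z_{eu}(\lambda) = \mathrm{clip}_{[0,1]}\bigl((s_{eu}-\lambda)/(12\sqrt{3}\,y_e^{(t+1)})\bigr)$, and the function $\lambda\mapsto \sum_{e\ni u} z_{eu}(\lambda)$ is continuous, monotonically non-increasing, and piecewise linear with at most $2\deg u$ breakpoints (two per edge, corresponding to the transitions at the box boundaries). I sort these breakpoints in $O(\deg u\log\deg u)$ time and then sweep them in order, maintaining the running value of the sum, until I locate the unique $\lambda\ge 0$ at which the sum first reaches $\min\{D,\text{(sum at }\lambda=0)\}$; if the unconstrained optimum already satisfies the knapsack constraint I simply take $\lambda=0$. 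Summing over vertices gives $\sum_u O(\deg u\log\deg u) = O(m\log\Delta)$.

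The main obstacle to watch is confirming that the $z$-problem really decouples by vertex, i.e., that the cross-terms $z_{eu}z_{ev}$ do not appear in $\phi$ and that the constraints in $C$ do not couple $z_{eu}$ with $z_{ev}$ — this follows because the LP treats the two half-edge variables as independent coordinates in $[0,1]^{2m}$ and the degree constraint only involves one side of each edge per vertex. Once decoupling is granted, the remaining work is the standard 1D parametric search over $\lambda$ for a capped-simplex-type projection, which is clean and fits within the claimed bound.
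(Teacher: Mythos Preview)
Your proposal is correct and follows essentially the same approach as the paper: the $y$-step is a softmax computed in $O(m)$, and the $z$-step decouples per vertex into a box-constrained quadratic with a single knapsack constraint, which is solved by dualizing with a multiplier $\lambda$, sorting the $2\deg u$ breakpoints of the piecewise-linear function $\lambda\mapsto\sum_{e\ni u}z_{eu}(\lambda)$, and sweeping to locate the correct $\lambda$ in $O(\deg u\log\deg u)$ per vertex. The paper's proof is the same up to how the constants from $\phi$ are absorbed.
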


\subsection{Constructing the solution}

Putting together the reduction from Section \ref{subsec:Reduction-to-saddle}
and the algorithm from Section \ref{subsec:saddle-point}, we obtain
an algorithm that returns an approximate solution $z$ to the feasibility
Problem (\ref{eq:dual-wr-feasibility}) or it returns that (\ref{eq:dual-wr-feasibility})
is infeasible. By combining this algorithm with binary search over
$D$, we obtain the result in the following theorem. We note that
we can use the binary search approach of \citet{bahmani2014efficient}
to avoid incurring any extra overhead in the running time.
\begin{thm}
There exists an algorithm that outputs $z$ and $\tilde{D}=\max\sum_{e\ni u}z_{eu}$
such that $z_{eu}+z_{ev}\ge1$ for all $e=uv\in E$ and where $D^{*}(1-\epsilon)\le\tilde{D}\le D^{*}(1+\epsilon)$
and $D^{*}=\opt$ is the optimal value of $D$ in LP (\ref{eq:dual-wr}).
\end{thm}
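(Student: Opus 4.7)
The plan is to combine Algorithm \ref{alg:area-convexity}, which approximately solves the saddle point problem (\ref{eq:saddle-point}) for a fixed $D$, with a binary search over $D$ on a geometric grid covering $[1,\Delta]$. For each candidate $D$, the algorithm will either return an approximately feasible $z$ for (\ref{eq:dual-wr-feasibility}) or a dual certificate that no feasible $z$ exists; the binary search locates the smallest $D$ for which approximate feasibility holds, and we then rescale the output.

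For a fixed $D$, I would instantiate Algorithm \ref{alg:area-convexity} with $T=\Theta(\log m/\epsilon)$ and oracle precision $\delta=\Theta(\epsilon)$. By Lemma \ref{lem:av-convergence}, the averaged iterate $(z,y)=w^{(T)}/T\in C(D)\times\Delta_{m}$ satisfies, after choosing $\bar{y}\in\Delta_{m}$ and $\bar{z}\in C(D)$ as the respective optimizers,
\[
\max_{\bar{z}\in C(D)}y^{T}B\bar{z}\;-\;\min_{e\in E}(Bz)_{e}\;\le\;\epsilon.
\]
I would then distinguish two cases. If $\min_{e}(Bz)_{e}\ge 1-\epsilon$, the rescaled vector $z':=z/(1-\epsilon)$ satisfies $z'_{eu}+z'_{ev}\ge 1$ for every $e=uv$, and $\max_{u}\sum_{e\ni u}z'_{eu}\le D/(1-\epsilon)$. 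Otherwise $\min_{e}(Bz)_{e}<1-\epsilon$, whence $\max_{\bar{z}\in C(D)}y^{T}B\bar{z}<1$; since any true feasible $z^{*}\in C(D)$ with $Bz^{*}\ge\mathbf{1}$ must obey $y^{T}Bz^{*}\ge 1$ (because $y\in\Delta_{m}$), this contradiction certifies that no such $z^{*}$ exists, so $D^{*}>D$.

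Using this decision procedure, I would binary search on the $(1+\epsilon)$-geometric grid $\{(1+\epsilon)^{i}\}$ covering $[1,\Delta]$. Let $\hat{D}$ be the smallest grid point at which feasibility is declared; infeasibility at the previous point implies $D^{*}>\hat{D}/(1+\epsilon)$. Setting $z$ to the rescaled solution $z'$ and $\tilde{D}:=\max_{u}\sum_{e\ni u}z_{eu}$, we obtain
\[
\tilde{D}\;\le\;\frac{\hat{D}}{1-\epsilon}\;\le\;\frac{1+\epsilon}{1-\epsilon}\,D^{*}\;\le\;(1+O(\epsilon))D^{*}.
\]
For the matching lower bound, observe that $z\ge 0$ and $z_{eu}+z_{ev}\ge 1$ make $z$ feasible for the dual LP obtained from (\ref{eq:dual-wr}) by dropping the box constraint $z\le 1$; by the width-reduction discussion in Section \ref{sec:Preliminaries}, this relaxed LP has the same optimum $D^{*}$, so $\tilde{D}\ge D^{*}\ge(1-\epsilon)D^{*}$. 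Rescaling $\epsilon$ by a constant yields the theorem.

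The main obstacle is controlling the binary-search overhead so that it does not degrade the overall running time. Following \citet{bahmani2014efficient}, I would use a coarse phase that localizes $D^{*}$ to within a constant factor (reducing the effective search range), followed by a fine phase using $O(\log(1/\epsilon))$ calls to the subroutine above, so that the total cost of the outer search is subsumed by the per-iteration cost of Algorithm \ref{alg:area-convexity} and the whole procedure composes cleanly with Lemmas \ref{lem:alternating-min-lemma} and \ref{lem:av-iteration-time}.
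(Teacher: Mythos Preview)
Your proposal is correct and follows essentially the same approach the paper sketches: combine the saddle-point solver (Algorithm \ref{alg:area-convexity}) with Lemma \ref{lem:ac-reduction} to get a feasibility/infeasibility oracle for each candidate $D$, then binary search over $D$ using the scheme of \citet{bahmani2014efficient} to avoid log-factor overhead. The paper's own treatment is a one-line pointer to these ingredients; your write-up supplies the details (the rescaling $z'=z/(1-\epsilon)$ and the lower bound $\tilde D\ge D^{*}$ via feasibility of the unconstrained dual LP (\ref{eq:dual})), which are all sound.
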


Finally, to reconstruct the integral solution, let $\overline{D}=\tilde{D}(1-2\epsilon)$
and $(\overline{z},\overline{y})$ be an $\epsilon$-approximate solution
for problem (\ref{eq:saddle-point}) on domain $C(\overline{D})$
output by Algorithm \ref{alg:area-convexity}. We show the following
lemma:
\begin{lem}
\label{lem:av-objective}The objective of the following LP
\begin{align*}
\min_{z\in[0,1]^{2m}}D\mbox{ st. } & \sum_{e\ni u}z_{eu}\le D;\ \sum_{e\in E}\overline{y}_{e}\left(z_{eu}+z_{ev}\right)=1.
\end{align*}
is strictly more than $\overline{D}>(1-3\epsilon)\opt$.
\end{lem}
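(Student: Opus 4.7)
The plan is to split the claim into two parts. The inequality $\overline{D} > (1-3\epsilon)\opt$ is a direct calculation: since the preceding theorem gives $\tilde{D} \ge (1-\epsilon)\opt$ and by definition $\overline{D} = (1-2\epsilon)\tilde{D}$, one has $\overline{D} \ge (1-\epsilon)(1-2\epsilon)\opt = (1-3\epsilon+2\epsilon^2)\opt > (1-3\epsilon)\opt$, the strict inequality using $\epsilon>0$.

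For the main claim, I will argue by contradiction. Suppose some $z\in[0,1]^{2m}$ achieves $\sum_e \overline{y}_e(z_{eu}+z_{ev}) = 1$ and $\max_u \sum_{e\ni u} z_{eu} \le \overline{D}$; then $z\in C(\overline{D})$. Applying Lemma~\ref{lem:av-convergence} to the $\epsilon$-approximate output $(\overline{z},\overline{y})$ on the domain $C(\overline{D})\times\Delta_m$ yields
\begin{align*}
\sum_e \overline{y}_e(z'_{eu}+z'_{ev}) - \sum_e y'_e(\overline{z}_{eu}+\overline{z}_{ev}) & \le \epsilon, \quad \forall (z',y')\in C(\overline{D})\times\Delta_m.
\end{align*}
Instantiating $z'=z$ (so the first sum equals $1$) and taking $y'$ to be the Dirac mass on an arbitrary edge $\hat{e}=\hat{u}\hat{v}$ gives $\overline{z}_{\hat{e}\hat{u}}+\overline{z}_{\hat{e}\hat{v}} \ge 1-\epsilon$ for every edge. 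Thus $\overline{z}$ is $\epsilon$-close to being feasible for problem~(\ref{eq:dual-wr-feasibility}).

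Next I will promote $\overline{z}$ to an exactly feasible point on a slightly inflated domain. Define $\tilde{z}_{eu} = \min\{1,\, \overline{z}_{eu}/(1-\epsilon)\}$, so clearly $\tilde{z}\in[0,1]^{2m}$ and $\sum_{e\ni u}\tilde{z}_{eu} \le \overline{D}/(1-\epsilon)$. A short case analysis shows $\tilde{z}_{eu}+\tilde{z}_{ev}\ge 1$ for every $e=uv$: either neither coordinate is truncated, in which case the sum is $(\overline{z}_{eu}+\overline{z}_{ev})/(1-\epsilon)\ge 1$, or some coordinate is truncated to $1$ and the other is nonnegative. Hence $\tilde{z}$ certifies $D^{*}\le \overline{D}/(1-\epsilon)$.

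Combining this with the binary-search guarantee $\tilde{D}\le(1+\epsilon)D^{*}$ and the definition $\overline{D}=(1-2\epsilon)\tilde{D}$ yields $\tilde{D}\le (1+\epsilon)(1-2\epsilon)\tilde{D}/(1-\epsilon)$, i.e., $(1-\epsilon)\le(1+\epsilon)(1-2\epsilon) = 1-\epsilon-2\epsilon^2$, which is impossible for $\epsilon>0$. The assumed $z$ therefore cannot exist, so the LP objective strictly exceeds $\overline{D}$. The delicate step is the capping-and-scaling construction of $\tilde{z}$: the entire slack built into the choice $\overline{D}=(1-2\epsilon)\tilde{D}$ is expended there to drive the final arithmetic to a contradiction.
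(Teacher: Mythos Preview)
Your proof is correct and uses the same three ingredients as the paper's proof---the saddle-point inequality from Lemma~\ref{lem:av-convergence}, the cap-and-scale reduction of an $\epsilon$-approximately feasible point to an exactly feasible one (which the paper phrases by citing Corollary~\ref{cor:existence-good-D}), and the arithmetic $(1+\epsilon)(1-2\epsilon)<1-\epsilon$. The only difference is the direction of the argument: the paper first establishes that the $(1-\epsilon)$-relaxed feasibility problem on $C(\overline{D})$ is infeasible, deduces $\min_e(\overline{z}_{eu}+\overline{z}_{ev})<1-\epsilon$, and then applies the saddle-point bound to get $\max_{z\in C(\overline{D})}\sum_e\overline{y}_e(z_{eu}+z_{ev})<1$, whereas you run the contrapositive starting from a hypothetical $z$ achieving equality---logically equivalent and equally clean.
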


Due to this lemma and Remark \ref{rem:rounding}, we can now follow
the procedure in Section \ref{subsec:Obtaining-x} and reconstruct
the primal solution and obtain an $(1-3\epsilon)$-approximately densest
subgraph.

\subsection{Final runtime}
\begin{thm}
There exists an algorithm that outputs a subgraph of density $\ge(1-\epsilon)\opt$
in $O\left(\frac{\log m}{\epsilon}\right)$ iterations, each of which
can be implemented in $O\left(m\log\Delta\log\frac{1}{\epsilon}\right)$
time for a total $O\left(\frac{m}{\epsilon}\log m\log\Delta\log\frac{1}{\epsilon}\right)$
time.
\end{thm}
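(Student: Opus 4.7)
The plan is to assemble the runtime bound from the three ingredients established earlier in Section \ref{sec:area-convex}: the outer convergence rate of Algorithm \ref{alg:area-convexity} (Lemma \ref{lem:av-convergence}), the inner linear convergence of Algorithm \ref{alg:oracle} (Lemma \ref{lem:alternating-min-lemma}), and the per-step cost of the alternating minimization (Lemma \ref{lem:av-iteration-time}); then wrap this with a binary search over $D$ and the rounding guaranteed by Lemma \ref{lem:av-objective}.

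First I would fix a target saddle-point accuracy $\epsilon' = \Theta(\epsilon)$ and set the oracle error to $\delta = \Theta(\epsilon)$. By Lemma \ref{lem:av-convergence}, running Algorithm \ref{alg:area-convexity} for $T = O(\log m / \epsilon)$ outer iterations yields a point whose duality gap is at most $\delta + O(\log m / T) = O(\epsilon)$. Each outer iteration performs two calls to the oracle $\Phi$ (to evaluate $\tilde\Phi(a) = \Phi(a + 2A\Phi(a))$), and each call runs Algorithm \ref{alg:oracle}. Lemma \ref{lem:alternating-min-lemma} guarantees that, after $K = O(\log(H_0/\delta))$ inner steps, the oracle produces a $\delta$-approximate minimizer; since the initial gap $H_0$ is bounded by a polynomial in $m$ (from the range of $\phi$ in (\ref{eq:phi}) over $C \times \Delta_m$), this is $K = O(\log(1/\epsilon))$ up to absorbing $\log m$ into later factors. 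Combined with Lemma \ref{lem:av-iteration-time}, each oracle call costs $O(m \log\Delta \log(1/\epsilon))$, and so does each outer iteration, matching the per-iteration figure in the statement. Multiplying $T$ by the per-iteration cost yields the claimed total time $O\!\left(\tfrac{m}{\epsilon}\log m \log\Delta \log\tfrac{1}{\epsilon}\right)$ to obtain an $\epsilon$-approximate saddle point of (\ref{eq:saddle-point}) for a fixed $D$.

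Next I would handle the dependence on $D$. Since we do not know $\opt$, I would binary-search over $D \in [1, \Delta]$ (or equivalently over powers of $(1+\epsilon)$) to locate $\tilde D$ with $(1-\epsilon) D^* \le \tilde D \le (1+\epsilon) D^*$, as stated in the theorem preceding Lemma \ref{lem:av-objective}. To keep this from multiplying the runtime by an extra $\log$ factor, I would adopt the Bahmani--Goel binary-search strategy referenced in the section: progressively shrink the interval and warm-start the next search from the previous iterate, charging the geometric sequence of accuracies to a single run. This gives the same $\tilde{O}(m/\epsilon)$ total, so no additional factor appears in the final bound. Finally, with $\tilde D$ and the associated dual multiplier $\overline{y}$ in hand, Lemma \ref{lem:av-objective} states that the optimal value of the LP (\ref{eq:dual-wr-iteration}) with weights $p_e = \overline{y}_e$ is at least $(1-3\epsilon)\opt$. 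By Remark \ref{rem:rounding}, the construction in Section \ref{subsec:Obtaining-x} then produces a primal vector $x$ with $\sum_v x_v \le 1$ and LP value $\ge (1-3\epsilon)\opt$, and Charikar's sweep recovers a subgraph of density $\ge (1-3\epsilon)\opt$ in $O(n \log n + m)$ time. Rescaling $\epsilon$ by a constant absorbs the factor of $3$ and yields the stated $(1-\epsilon)\opt$ guarantee.

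The main obstacle I anticipate is bookkeeping the dependence of $\delta$ on $\epsilon$ through the oracle: because Lemma \ref{lem:av-convergence} absorbs $\delta$ additively into the outer error, one must pick $\delta$ small enough that the oracle error does not dominate, yet the inner iteration count $K$ depends only logarithmically on $1/\delta$, so this is not delicate so long as the initial potential $H(z^{(0)}, y^{(1)}) - H(z_{\opt}, y_{\opt})$ is controlled by $\mathrm{poly}(m)$. A minor subtlety is ensuring that the binary search and the rescaling of $\epsilon$ do not inflate constants in a way that changes the asymptotic runtime; this is handled exactly as in Section \ref{sec:Multiplicative-weights-update} and in \citet{bahmani2014efficient}, so no new ideas are needed beyond invoking the lemmas already proved in this section.
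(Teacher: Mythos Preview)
Your proposal is correct and follows essentially the same assembly of ingredients that the paper intends: outer iteration count from Lemma~\ref{lem:av-convergence}, inner oracle cost from Lemmas~\ref{lem:alternating-min-lemma} and~\ref{lem:av-iteration-time}, the Bahmani--Goel binary search to avoid an extra logarithmic factor, and the rounding via Lemma~\ref{lem:av-objective} together with Remark~\ref{rem:rounding}. The paper does not write out a proof of this theorem explicitly, but your plan is exactly the intended synthesis of the section's lemmas.
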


\section{Algorithm via Random Coordinate Descent\label{sec:DSM}}

In this section, we give an algorithm for finding an approximate dense
decomposition. First, we recall the definition of an $\epsilon$-approximate
dense decomposition.
\begin{defn}
\label{def:Dense-decomposition} \citep{harb2022faster} We say a
partition $T_{1},\dots,T_{r}$ is an $\epsilon$-approximate dense
decomposition to $S_{1},\dots,S_{k}$ (the true decomposition) if,
for all $i,j$ and $S_{i}\cap T_{j}\neq\emptyset$ then $\frac{\left|E(T_{j})\right|+\left|E\left(T_{j},\cup_{h<j}T_{h}\right)\right|}{\left|T_{j}\right|}\ge\frac{\left|E(S_{i})\right|+\left|E\left(S_{i},\cup_{h<j}S_{h}\right)\right|}{\left|S_{i}\right|}-\epsilon.$
\end{defn}

We adapt the accelerated random coordinate descent of \citet{ene2015random}
to find a fractional solution to (\ref{eq:dual-quadratic}) and then
use the fractional peeling procedure by \citet{harb2022faster} to
obtain the decomposition. 

\subsection{Continuous formulation}

We first write (\ref{eq:dual-quadratic}) in the following equivalent
way. For each $e\in E$, let $F_{e}:2^{V}\to\R$ be such that $F_{e}(S)=1$
if $e\subseteq S$, $F_{e}(S)=0$ otherwise. We have $F_{e}$ is a
supermodular function since $F_{e}(S)+F_{e}(T)\le F_{e}(S\cup T)+F_{e}(S\cap T)$
for all $S,T\subseteq V$. The base contrapolymatroid of $F_{e}$
is
\begin{align*}
B(F_{e}) & =\left\{ z_{e}\in\R^{n}\colon z_{e}(S)\ge F_{e}(S)\;\forall S\subseteq V,z_{e}(V)=1\right\} 
\end{align*}
We show in Appendix \ref{sec:appdx-DSM} that (\ref{eq:dual-quadratic})
is equivalent to
\begin{align}
\min_{z:z_{e}\in B(F_{e}),\forall e\in E} & f(z)\coloneqq\left\Vert \sum_{e\in E}z_{e}\right\Vert _{2}^{2}\label{eq:supermod-min}
\end{align}
Problem (\ref{eq:supermod-min}) has exactly the same form as the
continuous formulation for decomposable submodular minimization studied
in \citet{nishihara2014convergence,ene2015random}, except that now
we minimize over the base contrapolymatroid of a supermodular function
instead of the base polytope of a submodular function. We provide
further details about this connection in Appendix \ref{sec:appdx-DSM}.
This connection allows us to adapt the Accelerated Coordinate Descent
algorithm by \citet{ene2015random} to solve (\ref{eq:dual-quadratic}).

\subsection{Accelerated Random Coordinate Descent }

For an edge $e=uv\in E$, \citet{harb2022faster} show that projection
onto $B(F_{e})$ can be done via the following operator $\proj_{e}$.
For simplicity, we only consider the relevant component $s_{eu}$
and $s_{ev}$ of $s$ (the remaining components are all $0$). The
projected solution onto $B(F_{e})$, $\proj_{e}((s_{eu}\ s_{ev}))$,
is given by
\begin{align*}
\proj_{e}((s_{eu}\ s_{ev})) & =\begin{cases}
\left(\frac{s_{eu}-s_{ev}+1}{2}\ \frac{s_{ev}-s_{eu}+1}{2}\right) & \mbox{if }\left|s_{eu}-s_{ev}\right|\le1\\
(1\ 0) & \mbox{if }s_{eu}-s_{ev}>1\\
(0\ 1) & \mbox{otherwise}.
\end{cases}
\end{align*}
Note that with $\proj_{e}$, for $x,y\in\R^{n}$ and $\eta>0$, we
can solve the following problem in $O(1)$ time.
\begin{align*}
\arg\min_{s\in B(F_{e})}\bigg(\left\langle \nabla_{e}f(x),(s_{eu}\ s_{ev})\right\rangle +\eta\left\Vert s-y\right\Vert _{2}^{2}\bigg)= & \proj_{e}\bigg((y_{eu}\ y_{ev})-\frac{1}{2\eta}\nabla_{e}f(x)\bigg).
\end{align*}
where we use $\nabla_{e}f\in\R^{2}$ to denote the gradient with respect
to the component $eu$ and $ev$. We present the Accelerated Random
Coordinate Descent Algorithm in Algorithm \ref{alg:acdm}. The algorithm
and its convergence analysis stay close to the analysis in \citet{ene2015random},
which we omit.

\begin{algorithm}[H]
{\small{}\caption{{\small{}Accelerated Random Coordinate Descent}}
\label{alg:acdm}}{\small\par}

{\small{}Initialize $z^{(0)}\in{\cal P}$}{\small\par}

{\small{}for $k=1\dots K=O\left(\log\frac{n}{\epsilon}\right)$:}{\small\par}

{\small{}$\quad$$y^{(k,0)}=z^{(k-1)}\in{\cal P}$, $\theta^{(k,0)}=\frac{1}{m},$
$w^{(k,0)}=0$}{\small\par}

{\small{}$\quad$for $t=1\dots T=O\left(mn\right)$:}{\small\par}

{\small{}$\quad\quad$select a set $R^{(t)}$ of edges, each $e\in E$
with probability $\frac{1}{m}$}{\small\par}

{\small{}$\quad\quad$for $e\in R^{(t)}:$}{\small\par}

{\small{}$\quad\quad\quad$$x^{(k,t)}=\theta^{(k,t-1)2}w^{(k,t-1)}+y^{(k,t-1)}$}{\small\par}

{\small{}$\quad\quad\quad$$y^{(k,t)}=\arg\min_{s\in B(F_{e})}\Bigg(\left\langle \nabla_{e}f(x^{(k,t)}),(s_{eu}\ s_{ev})\right\rangle +2m\theta^{(k,t-1)}\left\Vert (s_{eu}\ s_{ev})-(y_{eu}^{(k,t-1)}\ y_{ev}^{(k,t-1)})\right\Vert _{2}^{2}\Bigg)$}{\small\par}

{\small{}$\quad\quad\quad$$w^{(k,t)}=w^{(k,t-1)}-\frac{1-m\theta^{(k,t-1)}}{\theta^{(k,t-1)2}}\left(y^{(k,t)}-y^{(k,t-1)}\right)$}{\small\par}

{\small{}$\quad\quad\quad$$\theta^{(k,t)}=\frac{\sqrt{\theta^{(k,t-1)4}+4\theta^{(k,t-1)2}}-\theta^{(k,t-1)2}}{2}$}{\small\par}

{\small{}$\quad$$z^{(k)}=\theta^{(k,T-1)2}w^{(k,T)}+y^{(k,T)}$}{\small\par}

{\small{}return $z^{(K)}$}{\small\par}
\end{algorithm}

The runtime of Algorithm \ref{alg:acdm} is given in the next lemma.
\begin{lem}
\label{lem:quadratic-convergence-1}Algorithm \ref{alg:acdm} produces
in expected time $O\left(mn\log\frac{n}{\epsilon}\right)$ a solution
$z$ such that $\E\left[f(z)-f(z^{*})\right]\le\epsilon.$
\end{lem}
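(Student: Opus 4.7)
The plan is to adapt, essentially verbatim, the accelerated random coordinate descent analysis of \citet{ene2015random,ene2017decomposable}, originally developed for decomposable submodular minimization, to our supermodular setting. The structural parallel is strong: we minimize the same quadratic $f(z) = \|\sum_{e} z_e\|_2^2$ over a product of base polytopes (here the contrapolymatroids $B(F_e)$ instead of submodular base polytopes), and the single-block projection subproblem has a closed-form $O(1)$ solution via $\proj_e$. Three ingredients are needed: (a) block-coordinate smoothness of $f$, (b) $O(1)$-expected-time inner iterations, and (c) a surrogate for strong convexity that produces linear convergence.

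For (a), note that $\nabla_e f(z) = 2 (\sum_{e'} z_{e'})|_{\{u,v\}}$, so the restriction of $f$ to a single block $e$ has $O(1)$ Lipschitz gradient; this is the right smoothness parameter for the $2m\theta^{(k,t-1)}$ prox step in Algorithm \ref{alg:acdm}. For (b), in each inner iteration the set $R^{(t)}$ has expected size $1$, and the only work on $e \in R^{(t)}$ is one call to $\proj_e$ on a vector of size $2$, plus constant-time updates of $x^{(k,t)}, y^{(k,t)}, w^{(k,t)}, \theta^{(k,t)}$ in the coordinates of $e$. Thus one outer epoch of $T = O(mn)$ inner iterations has expected cost $O(mn)$.

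The main obstacle is (c). Since $f$ depends only on $\sum_e z_e$, it is flat along any direction preserving that sum and is not strongly convex on $\prod_e B(F_e)$. Following \citet{ene2015random}, we bypass this with a combinatorial error bound of the form
\begin{align*}
\min_{z^* \in Z^*} \|z - z^*\|_2^2 \ \le\ c \cdot n \cdot \bigl(f(z) - f(z^*)\bigr) \qquad \forall z \in \textstyle\prod_e B(F_e),
\end{align*}
for a universal constant $c$. This is proved via the exchange property of base contrapolymatroids: one writes $z - z^*$ as a signed combination of feasible $B(F_e)$-swaps along the endpoints of each edge, and aggregates a potential over the $n$ vertices using that the gradient signs of $f$ agree with the sweep direction when one conditions on the blocks that differ. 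The uniqueness of the optimum $b^*$ shown by \citet{harb2022faster} ensures the right-hand side controls the relevant displacement, and concentration of the swaps at edge endpoints keeps the effective condition number at $O(n)$ rather than $O(n\Delta)$.

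With this error bound as a stand-in for strong convexity, the standard accelerated-coordinate-descent potential argument of \citet{ene2015random} (Nesterov-style, tracking $\theta^{(k,t)}$, $w^{(k,t)}$, $y^{(k,t)}$) yields that one epoch of $T = O(mn)$ inner iterations contracts $\E[f(z^{(k)}) - f(z^*)]$ by a constant factor. Since $f(z^{(0)}) - f(z^*) = O(n)$, after $K = O(\log(n/\epsilon))$ epochs the expected suboptimality is below $\epsilon$, and multiplying by the $O(mn)$ expected cost per epoch gives the total runtime $O(mn \log(n/\epsilon))$. The delicate step I expect to require the most care is tracking how the constant in the error bound and the coordinate smoothness constant combine inside the accelerated potential, so as to confirm that the condition number really is $O(n)$ and no $\Delta$ factor slips in.
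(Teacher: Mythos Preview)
Your overall plan matches the paper's: the paper likewise defers the ACDM convergence proof to \citet{ene2015random} and isolates as the one genuinely new ingredient the restricted strong convexity (error bound), proved as Lemma~\ref{lem:dsm-convexity-1} in the appendix.

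There is, however, a quantitative slip in your sketch. The error bound you state, $\|z-z^*\|_2^2 \le c\,n\,(f(z)-f(z^*))$, is off by a factor of $n$. What the paper actually proves is
\[
\|z-z^*\|_2^2 \;\le\; \tfrac{n^2}{4}\,\|A(z-z^*)\|_2^2,
\]
i.e.\ restricted strong convexity $\ell^*\ge 4/n^2$, which combined with $f(z)-f(z^*)\ge \tfrac12\|A(z-z^*)\|_2^2$ gives $\|z-z^*\|_2^2 \le \tfrac{n^2}{2}(f(z)-f(z^*))$. With $O(1)$ coordinate smoothness this is a condition number of $O(n^2)$; it is the \emph{square root} coming from acceleration that yields the per-epoch count $T=O(m\sqrt{n^2})=O(mn)$. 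If your $c\,n$ bound held, acceleration would in fact give $T=O(m\sqrt{n})$, a strictly better runtime than the lemma asserts --- so your error-bound constant and your stated $T$ are inconsistent with each other.

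The paper's proof of $\ell^*\ge 4/n^2$ is also more concrete than the generic ``exchange property'' you invoke. One builds a directed graph on $V$ with capacities $c(u\!\to\!v)=z_{eu}$, $c(v\!\to\!u)=z_{ev}$ for each $e=uv$, shows that whenever $Az\neq b^*$ there is a positive-capacity directed path from $N=\{u:(Az)_u>b^*_u\}$ to $P=\{u:(Az)_u<b^*_u\}$, and pushes flow along shortest such paths. Each $\epsilon$ of flow reduces $\|Az-b^*\|_1$ by $2\epsilon$ while changing $\|z\|_\infty$ by at most $\epsilon$ and $\|z\|_1$ by at most $n\epsilon$ (path length $\le n$); combining via $\|\cdot\|_2^2\le\|\cdot\|_\infty\|\cdot\|_1$ and then $\|\cdot\|_1^2\le n\|\cdot\|_2^2$ on $\R^n$ yields the $n^2/4$ constant. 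You should either reproduce this argument or sharpen your exchange sketch so that it lands on the correct parameter.
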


\subsection{Fractional peeling}

The fractional peeling procedure in \citet{harb2022faster} takes
an $\epsilon$ approximate solution $(z,b)$ for the program (\ref{eq:dual-quadratic})
in the sense that $\left\Vert b-b^{*}\right\Vert _{2}\le\epsilon$
and returns an $\epsilon\sqrt{n}$- approximate dense decomposition.
The procedure is as follows: 

For the first partition $T_{1}$, starting with $b'=b$ and $G^{(0)}=G$,
in each iteration $t$, we select the vertex $u$ with the smallest
value $b'_{u}$ and update $b'_{v}\gets b'_{v}-z_{ev}$ for all $v$
adjacent to $u$ in $G^{(t-1)}$ and $G^{(t)}\gets G^{(t-1)}-u$.
We return the graph $G^{(t)}$ with the maximum density. For the subsequent
partition $T_{t}$, we update for all $e=uv$ such that $u\in T_{1}\cup\dots\cup T_{t-1}$
and $v\in G\setminus(T_{1}\cup\dots\cup T_{t-1})$: $z_{eu}\gets0$
and $z_{ev}\gets1$. Remove $T_{1}\cup\dots\cup T_{t-1}$ from $G$
and repeat the above procedure for the remaining graph.

\citet{harb2022faster} show the following result.
\begin{lem}
For $(z,b)$ satisfying $\left\Vert b-b^{*}\right\Vert _{2}\le\epsilon$,
the fractional peeling procedure described above output $\epsilon\sqrt{n}$-
approximate dense decomposition in $\tilde{O}(mn)$ time.
\end{lem}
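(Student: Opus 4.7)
The plan is to leverage the structural characterization of $b^*$ due to Harb and Qu (namely, that $b^*_u = |E(\cup_{j\le i} S_j) - E(\cup_{j<i} S_j)|/|S_i|$ for $u \in S_i$, and $b^*$ is strictly decreasing across layers), and to show that fractional peeling with the exact $b^*$ recovers the true decomposition exactly, then control the error incurred by using $b$ instead of $b^*$ via the bound $\|b - b^*\|_2 \le \epsilon$.

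First I would focus on the extraction of $T_1$. The key invariant to maintain is that after removing a set $P$ of vertices via the peeling steps, the remaining vector $b'$ restricted to the surviving graph $G \setminus P$ equals $\sum_{e \in E(G \setminus P)} z_e$; this follows inductively because when we peel $u$, we subtract off precisely the contributions $z_{ev}$ of edges incident to $u$ from its neighbors, and edges entirely inside $P$ no longer contribute. In particular, the average of $b'_v$ over $v \in G \setminus P$ equals twice the induced edge density of $G \setminus P$, up to the contribution of edges between $P$ and $G \setminus P$ that have been zeroed out. Using $\|b - b^*\|_\infty \le \|b - b^*\|_2 \le \epsilon$ and the fact that $b^*$ is constant on each $S_i$ with a strictly positive gap between layers, one shows that the peeling cannot confuse layers by more than an $\epsilon$-sized boundary; the prefix attaining maximum density in the peeling sequence therefore differs from $S_1$ only in vertices whose $b^*$ value lies within $\epsilon$ of the $S_1$ density, and the resulting density loss is at most a function of $\|b - b^*\|_2$ aggregated via Cauchy--Schwarz.

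Second, I would make the perturbation bound quantitative. Writing $\rho(T) = (|E(T)| + |E(T, P)|)/|T|$ for the marginal density of candidate set $T$ (with $P$ the union of previously extracted partitions), the density of any peeling prefix $T$ can be written as $\tfrac{1}{|T|}\sum_{v \in T} b'_v$ by the invariant above. Thus $|\rho_{\text{peel}}(T) - \rho_{\text{true}}(T)| \le \tfrac{1}{|T|} \sum_{v \in T} |b_v - b^*_v| \le \|b - b^*\|_2 / \sqrt{|T|} \le \epsilon$ for any $T$ with $|T| \ge 1$, where the middle step uses Cauchy--Schwarz. Combined with the fact that exact peeling on $b^*$ recovers the correct $T_j = S_j$, the greedy choice of the densest prefix with respect to $b$ picks a $T_1$ whose marginal density is within an additive $\epsilon \sqrt{n}/\sqrt{|T_1|} \le \epsilon \sqrt{n}$ of the true marginal density, which is exactly the guarantee required by Definition~\ref{def:Dense-decomposition}. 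Iterating this for each subsequent layer --- noting that the update $z_{eu} \gets 0$, $z_{ev} \gets 1$ preserves the invariant with respect to the residual graph --- completes the approximation argument.

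Finally, for the runtime, each peeling phase removes vertices from the current graph one at a time, maintains $b'$ under neighbor updates, and tracks the densest prefix. Using a priority queue keyed on $b'_v$, each deletion and neighbor update costs $O(\log n)$, so a single phase runs in $\tilde{O}(m + n)$. Since there can be at most $n$ extracted partitions, and the total work across phases for updating $z$ and re-peeling is bounded by $O(\sum_j m_j) = O(mn)$ (each edge participates in at most one boundary update per phase), the total runtime is $\tilde{O}(mn)$. The main obstacle is the first part, namely rigorously arguing that the greedy densest-prefix rule is stable under $\ell_2$ perturbations of $b$ --- the estimate above via Cauchy--Schwarz is tight only when the perturbation is spread evenly across $T_1$, and showing that adversarial perturbations localized near a layer boundary do not create a larger error requires the strict layer-gap property of $b^*$ established by Harb et al.
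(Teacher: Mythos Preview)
The paper does not prove this lemma at all: it is quoted verbatim as a result of \citet{harb2022faster}, with no argument given. So there is no ``paper's own proof'' to compare your proposal against; any assessment has to be on the merits of your sketch alone.

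On those merits, there is a genuine gap in the perturbation step. By your own invariant, for any surviving set $T$ in the peeling sequence one has $\tfrac{1}{|T|}\sum_{v\in T} b'_v = |E(T)|/|T|$, and this holds \emph{regardless} of whether the peeling was driven by $(z,b)$ or by $(z^*,b^*)$, since both satisfy $z_{eu}+z_{ev}=1$ on every edge. Thus there is no distinction between a ``peeled density'' $\rho_{\text{peel}}(T)$ and a ``true density'' $\rho_{\text{true}}(T)$ for a fixed set $T$; your displayed inequality is comparing two equal quantities. What actually changes between $b$ and $b^*$ is \emph{which} set $T_j$ the procedure outputs, and the lemma asks you to compare the marginal density of that output $T_j$ to the marginal density of any true layer $S_i$ meeting it. Your Cauchy--Schwarz bound on $\tfrac{1}{|T|}\sum_{v\in T}|b_v-b^*_v|$ does not, by itself, control this: you need an argument tying the peeling order (which depends on the evolving $b'$, not the initial $b$) to the layer structure of $b^*$, and explaining why the $\sqrt{n}$ blow-up is the right price. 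In Harb et al.\ this goes through the observation that when a vertex is peeled its current $b'$ value lower-bounds the best prefix density seen so far, together with an $\ell_1$--$\ell_2$ conversion along the peeling sequence; your sketch gestures at stability but does not supply this link. The runtime portion is fine.
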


\subsection{Final runtime}

Combining the guarantees of Algorithm \ref{alg:acdm} and the fractional
peeling procedure, we obtain the following result.
\begin{thm}
Algorithm \ref{alg:acdm} and the fractional peeling procedure \citep{harb2022faster}
output an $\epsilon$-approximate dense decomposition in $O\left(mn\log\frac{n}{\epsilon}\right)$
time in expectation.
\end{thm}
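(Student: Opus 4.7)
The plan is to combine the two lemmas stated immediately above the theorem, with one bridging step that converts an optimality gap in $f$ to a distance bound on the aggregated vector $b$.

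First, I would recall that $f(z) = \|b(z)\|_2^2$ where the map $z \mapsto b(z)$ defined by $b_u(z) = \sum_{e \ni u} z_{eu}$ is linear, and that by the uniqueness of $b^*$ established by \citet{harb2022faster} every optimal $z^*$ satisfies $b(z^*) = b^*$. The function $\|\cdot\|_2^2$ is $2$-strongly convex, so for any feasible $z$, combining strong convexity with the first-order optimality condition $\langle 2b^*, b(z) - b^*\rangle \ge 0$ yields
\begin{align*}
f(z) - f(z^*) \;\ge\; \|b(z) - b^*\|_2^2.
\end{align*}
Thus an additive gap in $f$ translates directly into a squared-distance bound on $b$.

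Next, I would run Algorithm \ref{alg:acdm} with target precision $\epsilon' = \epsilon^2/n$ in place of $\epsilon$. By Lemma \ref{lem:quadratic-convergence-1}, this produces in expected time $O\bigl(mn \log (n/\epsilon')\bigr) = O\bigl(mn \log (n/\epsilon)\bigr)$ a solution $z$ with $\E[f(z) - f(z^*)] \le \epsilon^2/n$. Plugging this into the strong-convexity inequality gives $\E\bigl[\|b(z) - b^*\|_2^2\bigr] \le \epsilon^2/n$, and Jensen's inequality then yields
\begin{align*}
\E\bigl[\|b(z) - b^*\|_2\bigr] \;\le\; \sqrt{\E\bigl[\|b(z) - b^*\|_2^2\bigr]} \;\le\; \epsilon/\sqrt{n}.
\end{align*}

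Finally, I would invoke the fractional peeling lemma of \citet{harb2022faster}: feeding the pair $(z,b(z))$ with $\|b(z) - b^*\|_2 \le \epsilon/\sqrt{n}$ into fractional peeling yields a $(\epsilon/\sqrt{n}) \cdot \sqrt{n} = \epsilon$-approximate dense decomposition, and this step costs $\tilde O(mn)$ time. Since fractional peeling is deterministic, the expectation passes through to the final approximation guarantee. Adding the two running times gives the claimed $O\bigl(mn \log (n/\epsilon)\bigr)$ expected total runtime.

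The only nontrivial step is the translation between the $f$-optimality gap and the $b$-distance; everything else is a direct application of the cited lemmas. That translation is short but essential: it is where the factor $\sqrt{n}$ loss in fractional peeling is absorbed by merely a logarithmic overhead in the coordinate descent budget, which is what keeps the final rate a $\log(n/\epsilon)$ rather than $\log(n/\epsilon) / \epsilon^{\Theta(1)}$.
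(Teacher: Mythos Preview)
Your proposal is correct and matches the paper's approach: the paper gives no proof beyond the sentence ``Combining the guarantees of Algorithm~\ref{alg:acdm} and the fractional peeling procedure, we obtain the following result,'' and you have correctly filled in the one missing link---the inequality $f(z)-f(z^{*})\ge\|b(z)-b^{*}\|_{2}^{2}$ via first-order optimality of $b^{*}$ over the convex image $A(\mathcal{P})$---together with the appropriate rescaling $\epsilon'=\epsilon^{2}/n$ so that the $\sqrt{n}$ loss from fractional peeling is absorbed into the logarithm. The only caveat is that ``the expectation passes through'' is slightly informal (the guarantee is really that the \emph{expected} additive error of the output decomposition is at most $\epsilon$, or equivalently via Markov one gets a constant-probability $O(\epsilon)$-approximation), but the paper's own statement carries the same ambiguity, so this is not a gap relative to the paper.
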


\section{Experiments\label{sec:Experiments}}

\begin{figure}
{\small{}}\subfloat[Iteration/Best Density]{{\small{}\includegraphics[width=0.48\columnwidth]{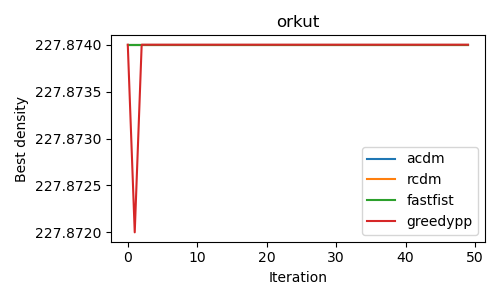}\includegraphics[width=0.48\columnwidth]{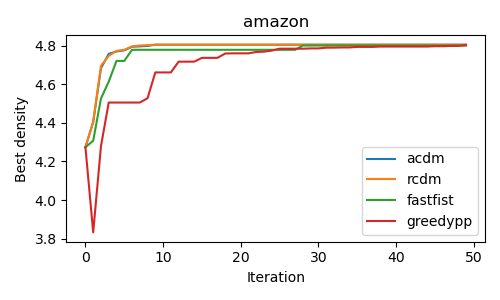}}{\small\par}

{\small{}\label{fig:iteration-best-density}\vspace{-10bp}
}{\small\par}}{\small{}\hfill{}\vspace{-10bp}
}\subfloat[Time/Best Density]{{\small{}\includegraphics[width=0.48\columnwidth]{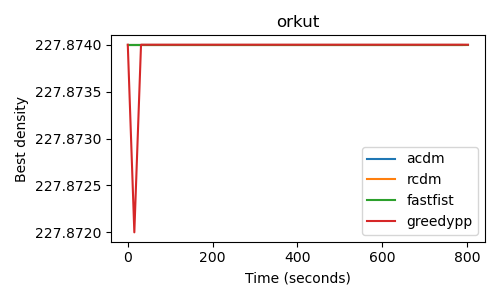}\includegraphics[width=0.48\columnwidth]{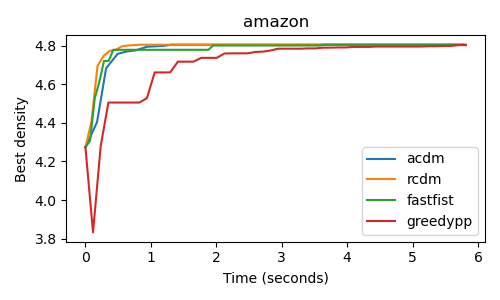}}{\small\par}

{\small{}\label{fig:time-best-density}\vspace{-10bp}
}{\small\par}}{\small{}\hfill{}\vspace{-10bp}
}\subfloat[Iteration/Load norm $\left(\sum_{u\in V}b_{u}^{2}\right)^{1/2}$]{{\small{}\includegraphics[width=0.48\columnwidth]{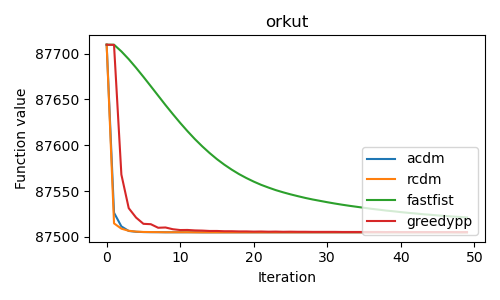}\includegraphics[width=0.48\columnwidth]{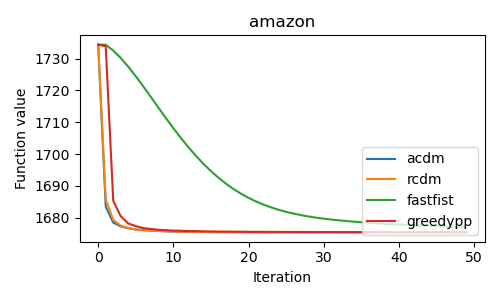}}{\small\par}

{\small{}\label{fig:iteration-function-value}\vspace{-10bp}
}{\small\par}}{\small\par}

{\small{}\caption{{\small{}Experiment results on orkut and com-Amazon. In the legends:
acdm, rcdm, fastfist, greedypp respectively represent Algorithm \ref{alg:acdm},
Algorithm \ref{alg:rcdm}, FISTA-based algorithm by \citet{harb2022faster}
and Greedy++ \citep{boob2020flowless}}}
}{\small\par}
\end{figure}

In this section, we compare the performance of existing algorithms
and Algorithm \ref{alg:acdm}. We also consider the version of Algorithm
\ref{alg:acdm} without acceleration, Algorithm \ref{alg:rcdm} shown
in the appendix. On the other hand, due to the involved subroutines,
we do not compare the performance of Algorithm \ref{alg:mwu} and
\ref{alg:area-convexity}. We follow the experimental set up in prior
works, including \citet{boob2020flowless} and \citet{harb2022faster}.

\textbf{Benchmark.$\quad$}We consider three algorithms: Frank-Wolfe
\citep{danisch2017large}, implemented by \citet{harb2022faster},
Greedy++ \citep{boob2020flowless} and FISTA for DSG \citep{harb2022faster}.

\textbf{Implementation.}$\quad$We use the implementation of all benchmark
algorithms provided by \citet{harb2022faster}. The implementation
of our algorithms also uses the same code base by \citet{harb2022faster}.
For practical purposes, we modify Algorithms \ref{alg:acdm} and \ref{alg:rcdm}
by replacing the inner loop with making passes over random permutations
of the edges instead of random sampling edges. We also restart Algorithm
\ref{alg:acdm} after each pass that increases the function value.
We show the pseudocode for these variants in Algorithms \ref{alg:acdm-practical}
in the appendix and in Option 2 of Algorithm \ref{alg:rcdm}. For
all algorithms, we initialize at the solution by the Greedy peeling
algorithm \citep{charikar2000greedy}.

\textbf{Datasets.$\quad$}The algorithms are compared on eight different
datasets, summarized in Table \ref{table:datasets} (appendix). 

For a fair comparison, for all algorithms considered, \textbf{we define
an iteration as a run of $m$ edge updates}, and each update can be
implemented in a constant time. 

We plot the best density obtained by each algorithm over the iterations
in Figure \ref{fig:iteration-best-density}. In Figure \ref{fig:time-best-density},
we plot the best density over wall clock time. Finally, Figure \ref{fig:iteration-function-value}
shows the function value ($L_{2}$-norm of the load vector) $\left(\sum_{u\in V}b_{u}^{2}\right)^{1/2}$
over the iterations. Due to space limit, we only show plots for two
datasets: com-Amazon and orkut. We also exclude Frank-Wolfe in the
plots due to its significantly worse performance in all instances.
We defer the remaining plots and plots that include Frank-Wolfe to
the appendix. 

\textbf{Discussion.$\quad$} We can observe that Algorithms \ref{alg:acdm}
and \ref{alg:rcdm} are practical and can run on relatively fast large
instances (for example, orkut has more than $3$ million vertices
and $100$ million edges). Figure \ref{fig:iteration-function-value}
shows that both Algorithms \ref{alg:acdm} and \ref{alg:rcdm} outperform
the others at minimizing the function value. Especially in comparison
with FISTA, both Algorithms \ref{alg:acdm} and \ref{alg:rcdm} are
significantly better.

\citet{boob2020flowless} observed that in most instances, the Greedy
peeling algorithm by \citet{charikar2000greedy} already finds a near-optimal
densest subgraph. Greedy++ inherits this feature of Greedy and generally
has a very good performance across instances. Algorithms \ref{alg:acdm}
and \ref{alg:rcdm} with initialization by the Greedy algorithm have
competitive performances with Greedy++ and FISTA both in terms of
the number of iterations and time.

\section{Conclusion}

In this paper, we present several algorithms for the DSG problems.
We show new algorithms via multiplicative weights update and area
convexity with improved running times. We also give the first practical
algorithm with a linear convergence rate via random coordinate descent.
Obtaining a practical implementation of our multiplicative weights
update algorithm in the streaming and distributed settings, and using
our results to improve algorithms for DSG problems in other settings
such as differential privacy are among potential future works. 

\section*{Acknowledgements}

The authors were supported in part by NSF CAREER grant CCF-1750333,
NSF grant III-1908510, and an Alfred P. Sloan Research Fellowship.
The authors thank Cheng-Hao Fu for helpful discussions in the preliminary
stages of this work.

\bibliographystyle{plainnat}
\bibliography{ref}

\appendix

\section{Additional Proofs from Section \ref{sec:Multiplicative-weights-update}\label{sec:appdx-MWU}}

\subsection{Multiplicative Weights Update analysis tool: $\protect\smax$ function
\label{subsec:MWU-tool}}

We analyze our MWU algorithm via the $\smax$ function, defined as
follows. For $x\in\R^{n}$ and $\eta\in\R_{+}$, we define
\[
\smax_{\eta}(x)=\frac{1}{\eta}\ln\left(\sum_{i=1}^{n}\exp\left(\eta x_{i}\right)\right).
\]
$\smax_{\eta}(x)$ can be seen as a smooth approximation of $\max(x)\overset{\mathrm{def}}{=}\max_{i}x_{i}\leq\max_{i}\left|x_{i}\right|=\left\Vert x\right\Vert _{\infty}$,
in the following sense
\begin{align*}
\left\Vert x\right\Vert _{\infty} & \le\smax_{\eta}(x)\leq\frac{\ln n}{\eta}+\left\Vert x\right\Vert _{\infty},
\end{align*}
and $\smax_{\eta}$ is $\eta$-smooth with respect to $\left\Vert \cdot\right\Vert _{\infty}$:
$\forall x,u,$
\[
\smax_{\eta}(x+u)\leq\smax_{\eta}(x)+\left\langle \nabla\smax_{\eta}(x),u\right\rangle +\text{\ensuremath{\frac{\eta}{2}}}\left\Vert u\right\Vert _{\infty}^{2}.
\]
The gradient of $\smax_{\eta}$ is a probability distribution in $\Delta_{n}=\left\{ p\in\R_{\geq0}^{n}\colon p_{1}+\dots+p_{n}=1\right\} $:
\[
\left(\nabla\smax_{\eta}(x)\right)_{i}=\text{\ensuremath{\frac{\exp(\eta x_{i})}{\sum_{j}\exp(\eta x_{j})}}}.
\]

\subsection{Proof of Lemma \ref{lem:mwu-average-solution} }

\begin{proof}
Note that $z^{*}$ can be decreased so that $z_{eu}^{*}+z_{ev}^{*}=1,\ \forall e=uv\in E$,
without increasing the objective. Hence, we can have $\sum_{e\in E}p_{e}^{(t)}\left(z_{eu}^{*}+z_{ev}^{*}\right)=\sum_{e\in E}p_{e}^{(t)}=1$.
This means $z^{*}$ satisfies the constraint of LP (\ref{eq:dual-wr-iteration}),
thus for all $t$, since $z^{(t)}$ is an optimal solution to (\ref{eq:dual-wr-iteration})
with $p^{(t)}$, we have
\begin{align*}
\max_{v\in V}\sum_{e\in E,u\in e}z_{eu}^{(t)} & \le\max_{v\in V}\sum_{e\in E,u\in e}z_{eu}^{*},
\end{align*}
which implies 
\begin{align*}
\max_{v\in V}\sum_{e\in E,u\in e}\overline{z}_{eu} & =\max_{v\in V}\frac{1}{T}\sum_{t=1}^{T}\sum_{e\in E,u\in e}z_{eu}^{(t)}\\
 & \le\frac{1}{T}\sum_{t=1}^{T}\max_{v\in V}\sum_{e\in E,u\in e}z_{eu}^{(t)}\\
 & \le\max_{v\in V}\sum_{e\in E,u\in e}z_{eu}^{*}.
\end{align*}
Moreover, since $\smax_{\eta}$ is $\eta$-smooth wrt $\left\Vert \cdot\right\Vert _{\infty}$,
we have
\begin{align*}
\smax_{\eta}(G^{(t)})-\smax_{\eta}(G^{(t-1)}) & \leq\left\langle \nabla\smax_{\eta}(G^{(t-1)}),g^{(t)}\right\rangle +\frac{\eta}{2}\left\Vert g^{(t)}\right\Vert _{\infty}^{2}\\
 & =\sum_{e\in E}p_{e}^{(t)}\left(1-\left(z_{eu}^{(t)}+z_{ev}^{(t)}\right)\right)+\frac{\eta}{2}\left\Vert g^{(t)}\right\Vert _{\infty}^{2}\\
 & =\frac{\eta}{2}\left\Vert g^{(t)}\right\Vert _{\infty}^{2},
\end{align*}
where we use $g^{(t)}=1-\left(z_{eu}^{(t)}+z_{ev}^{(t)}\right)$ and
$\nabla\smax_{\eta}(G^{(t-1)})=p^{(t)}$. Note that $g_{e}^{(t)}=1-\left(z_{eu}^{(t)}+z_{ev}^{(t)}\right)\in\left[-1,1\right].$Thus
\begin{align*}
\smax_{\eta}(G^{(T)}) & \leq\underbrace{\smax_{\eta}(G^{(0)})}_{=\frac{\ln m}{\eta}}+\sum_{t=1}^{T}\frac{\eta}{2}\underbrace{\left\Vert g^{(t)}\right\Vert _{\infty}^{2}}_{\le1}\leq\frac{\ln m}{\eta}+\frac{\eta}{2}T.
\end{align*}
Thus we have
\[
\max\left(G^{(T)}\right)\leq\smax_{\eta}(G^{(T)})\leq\frac{\ln m}{\eta}+\frac{\eta}{2}T
\]
and hence
\[
\max\left(\left(1-\left(\overline{z}_{eu}+\overline{z}_{ev}\right)\right)_{e\in E}\right)\leq\frac{1}{T}\frac{\ln m}{\eta}+\frac{\eta}{2}.
\]
By the choice $\eta=\epsilon$ and $T=\frac{2\ln m}{\epsilon\eta}=\frac{2\ln m}{\epsilon^{2}}$,
we obtain $\frac{1}{T}\frac{\ln m}{\eta}+\frac{\eta}{2}\leq\epsilon$,
i.e.,
\[
1-\left(\overline{z}_{eu}+\overline{z}_{ev}\right)\leq\epsilon\quad\forall e=uv.
\]
\end{proof}

\subsection{Proof of Corollary \ref{cor:existence-good-D}}

\begin{proof}
Assume the contradiction: for all $t\in[T]$ we have $D^{(t)}<\left(1-\epsilon\right)D^{*},$
which means $\max_{v\in V}\sum_{e\in E,u\in e}z_{eu}^{(t)}<\left(1-\epsilon\right)D^{*}.$
Then 
\begin{align*}
\max_{v\in V}\sum_{e\in E,u\in e}\overline{z}_{eu} & =\max_{v\in V}\frac{1}{T}\sum_{t=1}^{T}\sum_{e\in E,u\in e}z_{eu}^{(t)}\\
 & \le\frac{1}{T}\sum_{t=1}^{T}\max_{v\in V}\sum_{e\in E,u\in e}z_{eu}^{(t)}\\
 & <\left(1-\epsilon\right)D^{*}.
\end{align*}
On the other hand, we have for all $e=uv$, $\overline{z}_{eu}+\overline{z}_{ev}\ge1-\epsilon$.
We let $\widetilde{z}_{eu}=\frac{\overline{z}_{eu}}{\overline{z}_{eu}+\overline{z}_{ev}}$.
In this way we have $\widetilde{z}_{eu}+\widetilde{z}_{ev}=1$ and
$\widetilde{z}_{eu}\le\frac{\overline{z}_{eu}}{1-\epsilon}$. Therefore
$\widetilde{z}$ satisfies the constraint of (\ref{eq:dual-wr}).
Furthermore 
\begin{align*}
\max_{v\in V}\sum_{e\in E,u\in e}\widetilde{z}_{eu} & \le\frac{1}{1-\epsilon}\max_{v\in V}\sum_{e\in E,u\in e}\overline{z}_{eu}<D^{*}.
\end{align*}
which means $\widetilde{z}$ has a better objective than $z^{*}$,
contradiction.
\end{proof}

\subsection{Proof of Lemma \ref{lem:assignment}}

\begin{proof}
First, $z^{*}$ satisfies $\sum_{e\in E,u\in e}z_{eu}^{*}\le D^{*}$
for all $u$. Assume that $\widetilde{z}$ is an optimal solution
to (\ref{eq:dual-wr-iteration}). For each $u$ we show that 
\begin{align}
\sum_{e\in E,u\in e}p_{e}z_{eu}^{*} & \ge\sum_{e\in E,u\in e}p_{e}\widetilde{z}_{eu}.\label{eq:3}
\end{align}
Indeed, if $\deg u\le\lfloor D^{*}\rfloor$ we have $z_{eu}^{*}=1$
for all $e\ni u$. Hence (\ref{eq:3}) holds. Otherwise we have 
\begin{align*}
\sum_{e\in E,u\in e}z_{eu}^{*} & =D^{*}\ge\sum_{e\in E,u\in e}\widetilde{z}_{eu}.
\end{align*}
Furthermore $z^{*}$ satisfies $z_{eu}^{*}\ge z_{e'u}^{*}$ if $p_{e}\ge p_{e'}$,
hence $\sum_{e\in E,u\in e}p_{e}z_{eu}^{*}$ maximizes $\sum_{e\in E,u\in e}p_{e}z_{eu}$
subject to $\sum_{e\in E,u\in e}z_{eu}\le D^{*}$. Thus (\ref{eq:3})
holds. Therefore 
\begin{align*}
\sum_{u\in V}\sum_{e\in E,u\in e}p_{e}z_{eu}^{*} & \ge\sum_{u\in V}\sum_{e\in E,u\in e}p_{e}\widetilde{z}_{eu}=1.
\end{align*}
If $\sum_{u\in V}\sum_{e\in E,u\in e}p_{e}z_{eu}^{*}>1$, we can decrease
the value of $z$ for all the vertices where $\sum_{e\in E,u\in e}z_{eu}^{*}=D^{*}$
thus obtains a solution with strictly better objective than $D^{*}$,
which is a contradiction. Therefore $z^{*}$ is an optimal solution.
\end{proof}

\subsection{Proof of Lemma \ref{lem:mwu-complementary-slackness}}

\begin{proof}
We verify by complementary slackness.

1) We have$\sum_{v}x_{v}=1$.

2) $\sum_{e\ni u}z_{eu}^{(\tau)}<D^{(\tau)}\Leftrightarrow u\notin X\Leftrightarrow x_{u}=0$.

3) For all $u\notin X$ we have $z_{eu}^{(\tau)}=1$ for all $e\ni u$
and $\alpha_{eu}=p_{e}^{(\tau)}W$. For $u\in X$ such that $z_{eu}^{(\tau)}=0$,
we have $p_{e}^{(\tau)}\le p_{e(u)}^{(\tau)}$, so $\alpha_{eu}=0$.
For $u\in X$, $z_{eu}^{(\tau)}=z_{e(u)u}^{(\tau)}$, we have also
$\alpha_{eu}^{(\tau)}=0$. For $u\in X$ such that $z_{eu}^{(\tau)}>0$,
we also guarantee $p_{e}^{(\tau)}W+\alpha_{eu}^{(\tau)}=x_{u}$. $p_{e}^{(\tau)}W<x_{u}+\alpha_{eu}$
happens only when $p_{e}^{(\tau)}<p_{e(u)}^{(\tau)}$ which gives
$z_{eu}=0$.
\end{proof}

\subsection{Proof of Lemma \ref{lem:primal-optimal}}

\begin{proof}
By strong duality we have
\begin{align*}
W-\sum_{e=uv}\left(\alpha_{eu}+\alpha_{ev}\right) & =D^{(\tau)}.
\end{align*}
We know that $D^{(\tau)}\ge\left(1-\epsilon\right)D^{*}=\left(1-\epsilon\right)\opt$.
Hence $\sum_{e}\left(p_{e}^{(\tau)}W-\left(\alpha_{eu}+\alpha_{ev}\right)\right)=W-\sum_{e=uv}\left(\alpha_{eu}+\alpha_{ev}\right)\ge\left(1-\epsilon\right)\opt.$
On the other hand, since 
\begin{align*}
p_{e}^{(\tau)}W-\left(\alpha_{eu}+\alpha_{ev}\right) & \le p_{e}^{(\tau)}W-\alpha_{eu}\le x_{u},\\
p_{e}^{(\tau)}W-\left(\alpha_{eu}+\alpha_{ev}\right) & \le p_{e}^{(\tau)}W-\alpha_{ev}\le x_{v}.
\end{align*}
We have $\sum_{e=uv}\min\{x_{u},x_{v}\}\ge\sum_{e=uv}(p_{e}^{(\tau)}W-(\alpha_{eu}+\alpha_{ev}))\ge(1-\epsilon)\opt,$
as needed.
\end{proof}

\section{Additional Proofs from Section \ref{sec:area-convex}}

\subsection{Area convexity functions review\label{subsec:appdx-Area-convexity}}

We first review the notion of area convexity introduced by \citet{sherman2017area}.
\begin{defn}
A function $\phi$ is area convex with respect to an anti-symmetric
matrix $A$ on a convex set $K$ if for every $x,y,z\in K$,
\begin{align*}
\phi\left(\frac{x+y+z}{3}\right) & \le\frac{1}{3}\left(\phi\left(x\right)+\phi\left(y\right)+\phi\left(z\right)\right)-\frac{1}{3\sqrt{3}}\left(x-y\right)^{T}A\left(y-z\right).
\end{align*}
 To show that a function is area convex, \citet{boob2019faster} employ
operator $\succeq_{i}$. For a symmetric matrix $A$ and an anti-symmetric
matrix $B$, we say $A\succeq_{i}B$ iff \textbf{$\left[\begin{array}{cc}
A & -B^{T}\\
B & A
\end{array}\right]$} is PSD. The following two lemmas are from \citet{boob2019faster}.
\end{defn}

\begin{lem}
\label{lem:av-succi}(Lemma 4.5 in \citet{boob2019faster}) Let $A$
be a $\R^{2\times2}$ symmetric matrix. $A\succeq_{i}\left[\begin{array}{cc}
0 & -1\\
1 & 0
\end{array}\right]$ iff $A\succeq0$ and $\det A\ge1$.
\end{lem}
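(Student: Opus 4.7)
The plan is to unfold the definition of $\succeq_i$ and reduce the question to a Schur complement calculation. Set $B=\left[\begin{smallmatrix} 0 & -1 \\ 1 & 0 \end{smallmatrix}\right]$ and write the associated block matrix
\[
M \;=\; \begin{pmatrix} A & -B \\ B & A \end{pmatrix},
\]
which is a genuinely symmetric $4\times 4$ matrix because $A=A^T$ and $B^T=-B$. By definition, $A\succeq_i B$ is the statement $M\succeq 0$. The key structural fact I will exploit about this particular $B$ is $B^2=-I$ (the $90^\circ$ rotation).

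For the forward direction, that $A\succeq 0$ comes for free: $A$ sits as the top-left $2\times 2$ principal submatrix of $M$. To pin down $\det A\ge 1$, assume first that $A$ is non-singular (which, combined with $A\succeq 0$, means $A\succ 0$). Then by the standard Schur complement criterion, $M\succeq 0$ is equivalent to
\[
S \;\coloneqq\; A - B\,A^{-1}(-B) \;=\; A + B\,A^{-1}B \;\succeq\; 0.
\]
Here comes the main calculation: writing $A=\left[\begin{smallmatrix} a & b \\ b & c \end{smallmatrix}\right]$ and $A^{-1}=\tfrac{1}{\det A}\left[\begin{smallmatrix} c & -b \\ -b & a \end{smallmatrix}\right]$, a direct multiplication gives $B A^{-1} B = -\tfrac{1}{\det A}\,A$. (A conceptual way to anticipate this: for any $2\times 2$ matrix $N$, $B\,N\,B=-\mathrm{adj}(N)^T$ because $B$ implements $90^\circ$ rotation, so $BA^{-1}B=-\mathrm{adj}(A^{-1})^T=-A/\det A$ since $A$ is symmetric.) Consequently $S=\bigl(1-\tfrac{1}{\det A}\bigr)A$, and with $A\succ 0$ this is PSD exactly when $\det A\ge 1$.

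The singular case $\det A=0$ (where the statement claims $M$ is not PSD) can be handled by continuity: perturb to $A+\varepsilon I\succ 0$, apply the formula above, and note that $\det(A+\varepsilon I)\to 0<1$, forcing $S<0$ in a limit sense; alternatively exhibit an explicit test vector $x=(u,v)^{\!\top}$ with $v$ a null vector of $A$, for which $x^{\!\top}Mx$ becomes a linear function of $u$ that takes negative values. Either way this confirms the forward direction in full.

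The backward direction is then immediate from the same Schur complement computation run in reverse: $\det A\ge 1$ forces $A\succ 0$, the Schur complement equals $(1-1/\det A)A\succeq 0$, and hence $M\succeq 0$. The only real work is the identity $BA^{-1}B=-A/\det A$, which is a one-line calculation specific to the $2\times 2$ rotation $B$; everything else is a routine application of Schur complements and the definition of $\succeq_i$.
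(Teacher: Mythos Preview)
The paper does not supply its own proof of this lemma; it is quoted as Lemma~4.5 of \citet{boob2019faster} and used as a black box, so there is no in-paper argument to compare against.

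Your Schur-complement approach is correct and is the standard route. The identity $BA^{-1}B=-A/\det A$ for the $2\times2$ rotation $B$ is the crux, and once you have it the equivalence for nonsingular $A\succ0$ is immediate. Two cosmetic remarks on the degenerate case $\det A=0$. First, your perturbation sketch becomes rigorous once you observe that the block matrix built from $A+\varepsilon I$ equals $M+\varepsilon I_4$; thus $M\succeq0$ would force $M+\varepsilon I_4\succ0$, contradicting the Schur-complement conclusion that $\bigl(1-1/\det(A+\varepsilon I)\bigr)(A+\varepsilon I)\not\succeq0$ for small $\varepsilon>0$. Second, in the test-vector alternative, $x^\top M x=u^\top A u-2u^\top Bv$ is quadratic rather than linear in $u$; but scaling $u\mapsto tu$ for small $|t|$ of the appropriate sign makes the linear term $-2t\,u^\top Bv$ dominate (note $Bv\neq0$ since $B$ is invertible), which yields a negative value. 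Neither point affects the soundness of your argument.
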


\begin{lem}
\label{lem:av-hessian} (Lemma 4.6 in \citet{boob2019faster}) Let
$\phi$ be twice differentiable on the interior of convex set $K$,
i.e $\mathrm{int}(K)$. If $\nabla^{2}\phi(x)\succeq_{i}A$ for all
$x\in\mathrm{int}(K)$ then $\phi$ is area convex with respect to
$\frac{1}{3}A$ on $\mathrm{int}(K)$. If moreover, $\phi$ is continuous
on $\mathrm{cl}(K)$ then $\phi$ is area convex with respect to $\frac{1}{3}A$
on $\mathrm{cl}(K)$.
\end{lem}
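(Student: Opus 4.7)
The plan is to reduce the three-point area convexity inequality to a parametrized second-order condition along a contraction of the triangle $(x,y,z)$ toward its centroid, using a complex-variable rewriting of the pointwise $\succeq_{i}$ condition that naturally surfaces the $\sqrt{3}$ and $\tfrac{1}{3}$ constants in the statement. The boundary case reduces to the interior case by continuity: for $x,y,z \in \mathrm{cl}(K)$, fix any $c_0 \in \mathrm{int}(K)$, apply the interior inequality to $x_\lambda = (1-\lambda) c_0 + \lambda x$ and its analogues for $y,z$, and let $\lambda \to 1^-$ using continuity of $\phi$ on $\mathrm{cl}(K)$. For the interior case, fix $x,y,z \in \mathrm{int}(K)$, let $c = (x+y+z)/3$, and set $u = x-c$, $v = y-c$, $w = z-c$ with $u+v+w = 0$. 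A short antisymmetry calculation gives $(x-y)^T A (y-z) = 3\, u^T A v$, so area convexity with respect to $\tfrac{1}{3}A$ is equivalent to
\[
\phi(x) + \phi(y) + \phi(z) - 3\phi(c) \;\ge\; \tfrac{1}{\sqrt{3}}\, u^T A v.
\]

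For the key single-point Hessian identity, let $\omega = e^{2\pi i/3}$ and set $Z = \omega u + \omega^2 v + w \in \mathbb{C}^n$. A direct expansion using $u+v+w=0$ and $1+\omega+\omega^2 = 0$ yields
\[
\bar Z^{T} H\, Z \;=\; \tfrac{3}{2}\bigl(u^T H u + v^T H v + w^T H w\bigr), \qquad \bar Z^{T} A\, Z \;=\; 3 i \sqrt{3}\, u^T A v,
\]
for any symmetric real $H$ and our antisymmetric $A$. Interpreting $\nabla^2\phi(p) \succeq_i A$ as the Hermitian PSD condition $H(p) + iA \succeq 0$ and applying it to $Z$ gives $\bar Z^{T}(H(p)+iA)Z \ge 0$ at every $p$; taking real parts yields the pointwise bound
\[
u^T H(p) u + v^T H(p) v + w^T H(p) w \;\ge\; 2\sqrt{3}\, u^T A v \quad \text{at every } p \in \mathrm{int}(K).
\]
Applied at $p = c$ together with a second-order Taylor expansion, this already delivers the target inequality with factor $\sqrt 3$ in place of $\tfrac{1}{\sqrt 3}$ to leading order in $|u|,|v|,|w|$, exposing a factor-$3$ slack that must be carried through to the full expansion.

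To close the argument beyond the leading order, parametrize $x_t = c+tu$, $y_t = c+tv$, $z_t = c+tw$ and define $F(t) = \phi(x_t)+\phi(y_t)+\phi(z_t) - 3\phi(c) - \tfrac{t^2}{\sqrt{3}} u^T A v$. Then $F(0) = 0$, and since $u+v+w=0$ we also get $F'(0) = \nabla\phi(c)^T(u+v+w) = 0$, so $F(1) = \int_0^1 (1-s) F''(s)\,ds$ where $F''(s) = u^T H(x_s) u + v^T H(y_s) v + w^T H(z_s) w - \tfrac{2}{\sqrt{3}} u^T A v$. The main obstacle is that the three Hessians in $F''(s)$ sit at three \emph{distinct} points, whereas the boxed single-point inequality bounds the Hessian sum only when all three vectors are paired to a \emph{common} point; a direct pointwise bound on $F''(s)$ from the pointwise $\succeq_i$ condition alone is therefore unavailable in general. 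The plan to overcome this is to apply the single-point inequality at each of $x_s,y_s,z_s$ and sum, which bounds the averaged-Hessian quantity $u^T \bar H(s) u + v^T \bar H(s) v + w^T \bar H(s) w \ge 2\sqrt{3}\,u^T A v$ for $\bar H(s) = \tfrac{1}{3}(H(x_s)+H(y_s)+H(z_s))$, and then to exchange the roles of averaging and path-integration by exploiting integrability of $\nabla^2\phi$ (symmetry of $D^3\phi$) to rewrite the three-path Taylor integral for $F(1)$ as an averaged-Hessian integral plus a symmetric correction whose contribution is absorbed into the factor-$3$ slack identified in the second paragraph. This integrability-based reorganization is the technical heart of the proof; once it is carried out, the combination $F(0)=F'(0)=0$ with the integrated lower bound yields $F(1) \ge 0$, which is the desired area convexity inequality.
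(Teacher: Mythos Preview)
First, a framing note: the paper under review does not prove this lemma at all---it is quoted verbatim from \citet{boob2019faster} as a tool, with no proof given. So there is no ``paper's own proof'' to compare against; I can only assess your argument on its merits.

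Your setup is correct and the complex-variable trick is elegant. The identity $(x-y)^{T}A(y-z)=3\,u^{T}Av$ is right, and so are the two expansions
\[
\bar Z^{T}H\,Z=\tfrac{3}{2}\bigl(u^{T}Hu+v^{T}Hv+w^{T}Hw\bigr),\qquad \bar Z^{T}A\,Z=3i\sqrt{3}\,u^{T}Av,
\]
once one uses $u+v+w=0$. Reading $\nabla^{2}\phi(p)\succeq_{i}A$ as Hermitian positive semidefiniteness of $H(p)+iA$ and evaluating at $Z$ then correctly yields the single-point bound
\[
u^{T}H(p)u+v^{T}H(p)v+w^{T}H(p)w\;\ge\;2\sqrt{3}\,u^{T}Av\qquad\text{for every }p\in\mathrm{int}(K).
\]

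The gap is in the last paragraph. You correctly diagnose the obstacle: in $F''(s)=u^{T}H(x_{s})u+v^{T}H(y_{s})v+w^{T}H(z_{s})w-\tfrac{2}{\sqrt{3}}u^{T}Av$ the three Hessians sit at three distinct points, so the single-point bound does not apply. But your proposed remedy---summing the single-point bound at $x_{s},y_{s},z_{s}$ and then ``exchanging averaging and path-integration via symmetry of $D^{3}\phi$'' to absorb the discrepancy into the factor-$3$ slack---is not a proof. Concretely:
\begin{itemize}
\item Averaging the single-point inequality over the three base points yields $u^{T}\bar H(s)u+v^{T}\bar H(s)v+w^{T}\bar H(s)w\ge 2\sqrt{3}\,u^{T}Av$ with $\bar H(s)=\tfrac{1}{3}\sum H(\cdot)$, which is the \emph{same} inequality as before with $H$ replaced by $\bar H(s)$; it gives no control on the ``matched'' sum $u^{T}H(x_{s})u+v^{T}H(y_{s})v+w^{T}H(z_{s})w$.
\item The ``symmetry of $D^{3}\phi$'' is just Schwarz's theorem for mixed partials; it provides no inequality. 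There is no mechanism here that bounds the signed quantity $\int_{0}^{1}(1-s)\bigl[u^{T}(\bar H(s)-H(x_{s}))u+\cdots\bigr]\,ds$ by a multiple of $u^{T}Av$, and indeed the constraint $H(p)\succeq_{i}A$ does \emph{not} force $H(p)\succeq H_{0}$ for any fixed $H_{0}\succeq_{i}A$ (e.g.\ $H(p_{1})=\mathrm{diag}(2,\tfrac12)$ and $H(p_{2})=\mathrm{diag}(\tfrac12,2)$ both satisfy $\succeq_{i}A$ in $\mathbb{R}^{2}$ but admit no common minorant with determinant $\ge 1$). So you cannot decouple into a ``constant $H_{0}$'' part plus a PSD remainder, and the factor-$3$ slack is not usable in the way you suggest.
\end{itemize}
In short, the reduction to the pointwise Hermitian inequality is sound and pretty, but the passage from the pointwise bound to the three-point area-convexity inequality is missing a genuine idea; what you have written for that step is a plan, not an argument, and the plan as stated does not go through.
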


\subsection{Reduction to the saddle point problem}
\begin{lem}
\label{lem:ac-reduction}(Lemma 4.3 in \citet{boob2019faster}) Suppose
$z\in C,y\in\Delta_{m}$ satisfy
\begin{align*}
\max_{\overline{z}\in C,\overline{y}\in\Delta_{m}}\sum_{e}y_{e}\left(\overline{z}_{eu}+\overline{z}_{ev}\right)-\overline{y}_{e}\left(z_{eu}+z_{ev}\right) & \le\epsilon,
\end{align*}
then either of the following happens:

1. $z$ is an $\epsilon$-approximate solution to the feasibility
problem,

2. $y$ satisfies for all $\overline{z}\in C$, $\sum_{e}y_{e}\left(\overline{z}_{eu}+\overline{z}_{ev}\right)<1$.
\end{lem}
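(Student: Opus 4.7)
The plan is to exploit the fact that the bilinear expression inside the max over $(\overline{z},\overline{y})$ is separable: the first term depends only on $\overline{z}$ and the second only on $\overline{y}$. Consequently the hypothesis can be rewritten as
\begin{align*}
\max_{\overline{z}\in C}\sum_{e}y_{e}(\overline{z}_{eu}+\overline{z}_{ev}) \;+\; \max_{\overline{y}\in\Delta_{m}}\Bigl(-\sum_{e}\overline{y}_{e}(z_{eu}+z_{ev})\Bigr)\;\le\;\epsilon.
\end{align*}
Since $\overline{y}$ ranges over the simplex, the second max equals $-\min_{e=uv\in E}(z_{eu}+z_{ev})$ (the supporting $\overline{y}$ concentrates all mass on the edge that most violates the constraint). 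Substituting, the hypothesis becomes the single scalar inequality
\begin{align*}
\max_{\overline{z}\in C}\sum_{e}y_{e}(\overline{z}_{eu}+\overline{z}_{ev}) \;-\; \min_{e=uv\in E}(z_{eu}+z_{ev})\;\le\;\epsilon. \tag{$\star$}
\end{align*}

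From here I would do a short case analysis. Suppose condition 2 fails, i.e.\ there exists some $\overline{z}\in C$ with $\sum_{e}y_{e}(\overline{z}_{eu}+\overline{z}_{ev})\ge 1$; then the first term in $(\star)$ is at least $1$, so
\begin{align*}
\min_{e=uv\in E}(z_{eu}+z_{ev})\;\ge\;1-\epsilon,
\end{align*}
which is precisely the statement that $z\in C$ is an $\epsilon$-approximate solution to the feasibility problem \eqref{eq:dual-wr-feasibility}; hence condition 1 holds. The contrapositive then yields the stated dichotomy.

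The proof is essentially mechanical once one recognises that the saddle-point objective decouples in the dual variables; there is no serious obstacle. The only step that warrants care is writing the second max as the ``$-\min$'' — this uses nothing beyond the fact that $\overline{y}$ lives in the simplex — and noting that because $C$ was defined by constraints that force $z\in[0,1]^{2m}$, whenever the feasibility LP does admit a true solution $z^{\ast}$ one automatically has $\max_{\overline{z}\in C}\sum_{e}y_{e}(\overline{z}_{eu}+\overline{z}_{ev})\ge\sum_{e}y_{e}=1$, so the dichotomy is non-vacuous (it correctly reports infeasibility only when feasibility actually fails, consistent with the role of $y$ as a separating certificate).
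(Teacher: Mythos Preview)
Your proof is correct and follows essentially the same approach as the paper: both exploit the separability of the bilinear saddle objective to rewrite the hypothesis as $\max_{\overline{z}\in C}\sum_{e}y_{e}(\overline{z}_{eu}+\overline{z}_{ev})-\min_{e}(z_{eu}+z_{ev})\le\epsilon$ and then read off the dichotomy. The only cosmetic difference is that the paper argues the contrapositive in the other direction (assuming condition~1 fails and deducing condition~2), whereas you assume condition~2 fails and deduce condition~1.
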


\begin{proof}
Suppose that $z$ is not an $\epsilon$-approximate solution to the
problem. This means there exists $e=uv$ such that $z_{eu}+z_{ev}<1-\epsilon$,
which implies 
\begin{align*}
\min_{\overline{y}\in\Delta_{m}}\overline{y}_{e}\left(z_{eu}+z_{ev}\right) & <1-\epsilon
\end{align*}
Since 
\begin{align*}
 & \max_{\overline{z}\in C,\overline{y}\in\Delta_{m}}\sum_{e}y_{e}\left(\overline{z}_{eu}+\overline{z}_{ev}\right)-\overline{y}_{e}\left(z_{eu}+z_{ev}\right)\\
= & \max_{\overline{z}\in C}\sum_{e}y_{e}\left(\overline{z}_{eu}+\overline{z}_{ev}\right)-\min_{\overline{y}\in\Delta_{m}}\overline{y}_{e}\left(z_{eu}+z_{ev}\right)\le\epsilon,
\end{align*}
we can conclude that
\begin{align*}
\max_{\overline{z}\in C}\sum_{e}y_{e}\left(\overline{z}_{eu}+\overline{z}_{ev}\right) & <1.
\end{align*}
\end{proof}

\subsection{Properties of the regularizer }

We recall the choice of the regularizer function
\begin{align*}
\phi(z,y) & =6\sqrt{3}\left(\sum_{e\in E}y_{e}\left(z_{eu}^{2}+z_{ev}^{2}\right)+6y_{e}\log y_{e}-2\right).
\end{align*}
Our goal is to show that $\phi$ is area convex with respect to $A$
and has a small range. 
\begin{lem}
\label{lem:av-proof-of-convex}$\frac{1}{6\sqrt{3}}\phi$ is area
convex with respect to $\frac{1}{3}A$. Furthermore $-6\sqrt{3}\left(6\log m+2\right)\le\phi(z,y)\le0$.
\end{lem}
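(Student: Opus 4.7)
The plan is to prove the range bound and the area convexity claim separately.

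For the range, write $\phi(z,y)=6\sqrt{3}(S_1+6S_2-2)$ with $S_1=\sum_e y_e(z_{eu}^2+z_{ev}^2)$ and $S_2=\sum_e y_e\log y_e$. Since $z\in[0,1]^{2m}$ and $y\in\Delta_m$, we have $0\le S_1\le 2\sum_e y_e=2$, and the standard entropy bound gives $-\log m\le S_2\le 0$. Hence $S_1+6S_2-2\in[-6\log m-2,0]$, so $\phi\in[-6\sqrt{3}(6\log m+2),0]$.

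For area convexity, I would apply Lemma~\ref{lem:av-hessian}: it suffices to show $\nabla^2(\tfrac{1}{6\sqrt{3}}\phi)\succeq_i A$ on the interior of $C\times\Delta_m$, i.e.\ $\nabla^2\psi\succeq_i A$ for $\psi(z,y)=\sum_e[y_e(z_{eu}^2+z_{ev}^2)+6y_e\log y_e]$ (the constant $-2$ drops out of the Hessian). A direct computation shows $\nabla^2\psi$ is block-diagonal with one $3\times 3$ block per edge $e=uv$, acting on $(z_{eu},z_{ev},y_e)$:
\[
\nabla^2\psi_e=\begin{pmatrix} 2y_e & 0 & 2z_{eu} \\ 0 & 2y_e & 2z_{ev} \\ 2z_{eu} & 2z_{ev} & 6/y_e \end{pmatrix},\qquad A_e=\begin{pmatrix} 0 & 0 & 1 \\ 0 & 0 & 1 \\ -1 & -1 & 0 \end{pmatrix}.
\]
Since $A$ shares the same block structure, it is enough to check $\nabla^2\psi_e\succeq_i A_e$ for each edge.

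To reduce to a case covered by Lemma~\ref{lem:av-succi}, I would split $\psi_e=\psi_e^u+\psi_e^v$ with $\psi_e^u=y_e z_{eu}^2+3y_e\log y_e$, and correspondingly $A_e=A_e^u+A_e^v$, where $A_e^u$ only couples $(z_{eu},y_e)$ (and analogously $v$). The defining condition $\bigl(\begin{smallmatrix}M & -N^T \\ N & M\end{smallmatrix}\bigr)\succeq 0$ is linear in $(M,N)$, so $\succeq_i$ is additive and it suffices to show $\nabla^2\psi_e^u\succeq_i A_e^u$ (and the symmetric statement). Both matrices have an identical zero row/column at the $z_{ev}$ coordinate, so deleting that coordinate reduces the $3\times 3$ relation to the $2\times 2$ relation
\[
\begin{pmatrix} 2y_e & 2z_{eu} \\ 2z_{eu} & 3/y_e \end{pmatrix}\succeq_i\begin{pmatrix} 0 & 1 \\ -1 & 0 \end{pmatrix}.
\]
By Lemma~\ref{lem:av-succi}, this holds provided the left-hand matrix is PSD with determinant at least $1$, and indeed $\det=6-4z_{eu}^2\ge 2$ for $z_{eu}\in[0,1]$. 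The $v$-case is symmetric. Continuity of $\phi$ on $\mathrm{cl}(C\times\Delta_m)$ extends area convexity to the full domain via the second clause of Lemma~\ref{lem:av-hessian}.

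The main obstacle is purely bookkeeping: decomposing the Hessian and the antisymmetric matrix into matching edge blocks, then splitting each edge block further so that the PSD certificate reduces to the $2\times 2$ form of Lemma~\ref{lem:av-succi}. Once additivity of $\succeq_i$ and the reduction through a shared zero row/column are in place, the residual $\det\ge 1$ check is immediate.
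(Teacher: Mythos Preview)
Your proposal is correct and follows essentially the same approach as the paper: reduce via Lemma~\ref{lem:av-hessian} to a Hessian condition, split the Hessian and $A$ into per-edge per-endpoint pieces, and verify each $2\times 2$ block via Lemma~\ref{lem:av-succi} using $\det=6-4z_{eu}^2\ge 2$; the range bound is identical. The only cosmetic differences are that the paper orders the variables as $(y_e,z_{eu})$ and cites an external lemma (Lemma~4.10 in \citet{boob2019faster}) for the additivity of $\succeq_i$, whereas you justify additivity directly from linearity of the defining PSD condition and make explicit the deletion of the shared zero row/column when passing from the $3\times 3$ to the $2\times 2$ block.
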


\begin{proof}
By Lemma \ref{lem:av-hessian}, it suffices to show that
\begin{align*}
\nabla^{2}\phi(z,y) & \succeq_{i}A.
\end{align*}
Let $\vec{f}_{e}$ denote the vector with all $0$'s and one $1$
at the index of $y_{e}$, $\vec{f}_{eu}$ denote the vector with all
$0$'s and one $1$ at the index of $z_{ev}$ and $\vec{f}_{u}$ for
$u$. Consider two variables $y_{e}$ and $z_{eu}$, we have

\begin{align*}
\nabla^{2}y_{e}\left(\left(z_{eu}^{2}+3\log y_{e}\right)\right) & =\left[\begin{array}{cc}
\frac{3}{y_{e}} & 2z_{eu}\\
2z_{eu} & 2y_{e}
\end{array}\right]\succeq_{i}\left[\begin{array}{cc}
0 & -1\\
1 & 0
\end{array}\right]
\end{align*}
where the last inequality comes from Lemma \ref{lem:av-succi} and
that
\begin{align*}
\det\left[\begin{array}{cc}
\frac{3}{y_{e}} & 2z_{eu}\\
2z_{eu} & 2y_{e}
\end{array}\right] & =6-4z_{eu}^{2}>1,
\end{align*}
which holds because $0\le z_{eu}\le1$. By Lemma 4.10 in \citet{boob2019faster}
\begin{align*}
 & \nabla^{2}y_{e}\left(\left(z_{eu}^{2}+3\log y_{e}\right)+\left(z_{ev}^{2}+3\log y_{e}\right)\right)\\
\succeq_{i} & \left(\vec{f}_{e}\vec{f}_{eu}^{T}-\vec{f}_{eu}\vec{f}_{e}^{T}\right)+\left(\vec{f}_{e}\vec{f}_{ev}^{T}-\vec{f}_{ev}\vec{f}_{e}^{T}\right).
\end{align*}
Sum up the RHS we get exactly $A.$

For the lower bound
\begin{align*}
 & \frac{1}{6\sqrt{3}}\phi(z,y)=\sum_{e}y_{e}\left(z_{eu}^{2}+z_{ev}^{2}\right)+6\sum_{e}y_{e}\log y_{e}-2\\
\ge & 6\underbrace{\sum_{e}y_{e}\log y_{e}}_{\mbox{convexity}}-2\ge6m\frac{1}{m}\log\frac{1}{m}-2=-6\log m-2.
\end{align*}
For the upper bound
\begin{align*}
\frac{1}{6\sqrt{3}}\phi(z,y) & \le2\sum_{e}y_{e}-2=0.
\end{align*}
\end{proof}

\subsection{Proof of Lemma \ref{lem:av-convergence}}

\begin{proof}
The proof of this Lemma directly follows from Theorem 1.3 in \citet{sherman2017area}
and that 
\begin{align*}
\phi^{*}(w^{(0)}) & =\sup_{w\in C\times\Delta_{m}}\left\langle w,w^{(0)}\right\rangle -\phi(w)\\
 & =\sup_{w\in C\times\Delta_{m}}1-\phi(w)\\
 & \le1+6\sqrt{3}\left(6\log m+2\right).
\end{align*}
\end{proof}

\subsection{Proof of Lemma \ref{lem:alternating-min-lemma}}

\begin{proof}
The proof of Lemma \ref{lem:alternating-min-lemma} follows from the
general framework for analyzing alternating minimization by \citet{beck2015convergence}.
The proof detail below follows from \citet{jambulapati2019direct}.

For simplicity, let us recall the definition of $H$ in Algorithm
\ref{alg:oracle}, after scaling by $\frac{1}{6\sqrt{3}}$. Given
$x=(s,r)$ is the input, we have 
\begin{align*}
H(z,y)\coloneqq & \sum_{e=uv\in E}y_{e}\left(z_{eu}^{2}+z_{ev}^{2}\right)+6\sum_{e}y_{e}\log y_{e}\\
 & -\frac{1}{6\sqrt{3}}\Big(\sum_{e=uv\in E}(z_{eu}s_{eu}+z_{ev}s_{ev}+y_{e}r_{e})\Big).
\end{align*}

Let $\nabla_{zz}^{2}$ be the Hessian with all but the $zz$ block
zeroed out. We use $\nabla_{y}$ and $\nabla_{z}$ to denote the gradient
with only the $y$ and $z$ components kept.

Let $Y^{(t+1)}=\left\{ y\in\Delta_{m}:y\ge\frac{1}{2}y^{(t+1)}\right\} $.
We will first show that for all $z,\overline{z}\in C$ and $\overline{y}\in Y^{(t+1)}$
\begin{align}
\nabla^{2}H(\overline{z},\overline{y}) & \succeq\frac{1}{6}\nabla_{zz}^{2}H(z,y^{(t+1)})\label{eq:hessian-bound}
\end{align}

Since we do not have any cross term between $e$ and $e'$ for any
$e\neq e'$ we can consider edge separately. For the same reason,
we can also separate $z_{eu}$ and $z_{ev}$ for each edge $e$. The
non-zero term after taking the Hessian for edge $e$ and vertex $u$
is
\begin{align*}
\nabla^{2}\overline{y}_{e}\left(\left(\overline{z}_{eu}^{2}+3\log\overline{y}_{e}\right)\right) & =\left[\begin{array}{cc}
\frac{3}{\overline{y}_{e}} & 2\overline{z}_{eu}\\
2\overline{z}_{eu} & 2\overline{y}_{e}
\end{array}\right];\\
\nabla_{zz}^{2}y_{e}^{(t+1)}\left(\left(z_{eu}^{2}+3\log y_{e}^{(t+1)}\right)\right) & =\left[\begin{array}{cc}
0 & 0\\
0 & 2y_{e}^{(t+1)}
\end{array}\right].
\end{align*}
For all $a,b\in\R$
\begin{align*}
\left[\begin{array}{cc}
a & b\end{array}\right]\nabla^{2}\overline{y}_{e}\left(\left(\overline{z}_{eu}^{2}+3\log\overline{y}_{e}\right)\right)\left[\begin{array}{c}
a\\
b
\end{array}\right]= & \frac{3}{\overline{y}_{e}}a^{2}+2\overline{y}_{e}b^{2}+4\overline{z}_{eu}ab;\\
\left[\begin{array}{cc}
a & b\end{array}\right]\nabla_{zz}^{2}y_{e}^{(t+1)}\left(\left(z_{eu}^{2}+3\log y_{e}^{(t+1)}\right)\right)\left[\begin{array}{c}
a\\
b
\end{array}\right]= & 2y_{e}^{(t+1)}b^{2}.
\end{align*}
Since $z_{eu},\overline{z}_{eu}\le1$
\begin{align*}
\frac{3}{\overline{y}_{e}}a^{2}+2\overline{y}_{e}b^{2}+4\overline{z}_{eu}ab & \ge\frac{3}{\overline{y}_{e}}a^{2}+2\overline{y}_{e}b^{2}-4\left|ab\right|\\
 & \ge\frac{2}{3}\overline{y}_{e}b^{2}\\
 & \ge\frac{1}{3}y_{e}^{(t+1)}b^{2}\mbox{ for all }y\ge\frac{1}{2}y^{(t+1)}\\
 & =\frac{1}{6}\times2y_{e}^{(t+1)}b^{2}.
\end{align*}
Hence for all $y\ge\frac{1}{2}y^{(t+1)}$ 
\begin{align*}
\nabla^{2}\overline{y}_{e}\left(\overline{z}_{eu}^{2}+3\log\overline{y}_{e}\right) & \succeq\nabla_{zz}^{2}y_{e}^{(t+1)}\left(z_{eu}^{2}+3\log y_{e}^{(t+1)}\right)
\end{align*}
which gives us (\ref{eq:hessian-bound}).

Now we show that or all $y^{*}\in Y^{(t+1)}$ and $z^{*}\in C$
\begin{align*}
H(z^{(t)},y^{(t+1)})-H(z^{(t+1)},y^{(t+1)}) & \ge\frac{1}{6}\left(H(z^{(t)},y^{(t+1)})-H(z^{*},y^{*})\right).
\end{align*}
Let $\tilde{z}=\frac{5}{6}z^{(t)}+\frac{1}{6}z^{*}$. By the definition
of $z^{(t+1)}$ we have 
\begin{align*}
H(z^{(t+1)},y^{(t+1)}) & \le H(\tilde{z},y^{(t+1)}.)
\end{align*}
By the optimality of $y^{(t+1)}$ and the convexity of $H$
\begin{align*}
\left\langle \nabla_{y}H(z^{(t)},y^{(t+1)}),y^{(t+1)}-y^{*}\right\rangle  & \le0
\end{align*}
which gives us
\begin{align*}
\left\langle \nabla_{z}H(z^{(t)},y^{(t+1)}),z^{(t)}-\tilde{z}\right\rangle = & \frac{1}{6}\left\langle \nabla_{z}H(z^{(t)},y^{(t+1)}),z^{(t)}-z^{*}\right\rangle \\
\ge & \frac{1}{6}\left\langle \nabla_{z}H(z^{(t)},y^{(t+1)}),z^{(t)}-z^{*}\right\rangle \\
 & +\frac{1}{6}\left\langle \nabla_{y}H(z^{(t)},y^{(t+1)}),y^{(t+1)}-y^{*}\right\rangle \\
= & \frac{1}{6}\left\langle \nabla H(z^{(t)},y^{(t+1)}),w^{(t+\frac{1}{2})}-w^{*}\right\rangle 
\end{align*}
where $w^{(t+\frac{1}{2})}=(z^{(t)},y^{(t+1)})$, $w^{*}=(z^{*},y^{*})$.
Also define $z_{\alpha}=\left(1-\alpha\right)z^{(t)}+\alpha z^{*}$,
$\tilde{z}_{\alpha}=\left(1-\alpha\right)z^{(t)}+\alpha\tilde{z}$,
$y_{\alpha}=\left(1-\alpha\right)y^{(t+1)}+\alpha y^{*}$. With a
slight abuse of notion, we also use $\nabla_{zz}^{2}$ to also mean
the Hessian with respect to the variable $z$. Using Taylor expansion
\begin{align*}
H(z^{(t)},y^{(t+1)})-H(\tilde{z},y^{(t+1)})= & \left\langle \nabla_{z}H(z^{(t)},y^{(t+1)}),z^{(t)}-\tilde{z}\right\rangle \\
 & -\int_{0}^{1}\int_{0}^{\beta}\left(\tilde{z}-z^{(t)}\right)^{T}\nabla_{zz}^{2}H(\tilde{z}_{\alpha},y^{(t+1)})\left(\tilde{z}-z^{(t)}\right)d\alpha d\beta\\
\ge & \frac{1}{6}\left\langle \nabla H(z^{(t)},y^{(t+1)}),w^{(t+\frac{1}{2})}-w^{*}\right\rangle \\
 & -\frac{1}{36}\int_{0}^{1}\int_{0}^{\beta}\left(z^{*}-z^{(t)}\right)^{T}\nabla_{zz}^{2}H(\tilde{z}_{\alpha},y^{(t+1)})\left(z^{*}-z^{(t)}\right)d\alpha d\beta\\
\ge & \frac{1}{6}\left\langle \nabla H(z^{(t)},y^{(t+1)}),w^{(t+\frac{1}{2})}-w^{*}\right\rangle \\
 & -\frac{1}{6}\int_{0}^{1}\int_{0}^{\beta}\left(w^{*}-w^{(t+\frac{1}{2})}\right)^{T}\nabla^{2}H(z_{\alpha},y_{\alpha})\left(w^{*}-w^{(t+\frac{1}{2})}\right)d\alpha d\beta\\
 & =\frac{1}{6}\left(H(z^{(t)},y^{(t+1)})-H(z^{*},y^{*})\right).
\end{align*}
Hence 
\begin{align*}
H(z^{(t)},y^{(t+1)})-H(z^{(t+1)},y^{(t+1)})\ge & H(z^{(t)},y^{(t+1)})-H(\tilde{z},y^{(t+1)})\\
\ge & \frac{1}{6}\left(H(z^{(t)},y^{(t+1)})-H(z^{*},y^{*})\right).
\end{align*}
Take $z^{*}=\frac{1}{2}\left(z^{(t)}+z_{\opt}\right)$, $y^{*}=\frac{1}{2}\left(y^{(t+1)}+y_{\opt}\right)$
\begin{align*}
H(z^{(t)},y^{(t+1)})-H(z^{(t+1)},y^{(t+2)})\ge & H(z^{(t)},y^{(t+1)})-H(z^{(t+1)},y^{(t+1)})\\
\ge & \frac{1}{6}\left(H(z^{(t)},y^{(t+1)})-H(z^{*},y^{*})\right)\\
\ge & \frac{1}{6}\left(H(z^{(t)},y^{(t+1)})-\left(\frac{1}{2}H(z^{(t)},y^{(t+1)})+\frac{1}{2}H\left(z_{\opt},y_{\opt}\right)\right)\right)\\
 & \qquad\qquad(\mbox{by convexity of }H)\\
= & \frac{1}{12}\left(H(z^{(t)},y^{(t+1)})-H\left(z_{\opt},y_{\opt}\right)\right)
\end{align*}
which means
\begin{align*}
H(z^{(t+1)},y^{(t+2)})-H\left(z_{\opt},y_{\opt}\right) & \le\frac{11}{12}\left(H(z^{(t)},y^{(t+1)})-H\left(z_{\opt},y_{\opt}\right)\right).
\end{align*}
Therefore 
\begin{align*}
H(z^{(T+1)},y^{(T+1)})-H\left(z_{\opt},y_{\opt}\right) & \le\left(\frac{11}{12}\right)^{T}\left(H(z^{(0)},y^{(1)})-H\left(z_{\opt},y_{\opt}\right)\right).
\end{align*}
This gives us the convergence rate.
\end{proof}

\subsection{Proof of Lemma \ref{lem:av-iteration-time}\label{subsec:av-iter-time}}

\begin{proof}
For the first minimization, we have $y^{(t+1)}=\arg\min_{y\in\Delta_{m}}H(z^{(t)},y)=\arg\max_{y\in\Delta_{m}}\left\langle L^{(t)},y\right\rangle -\sum_{e}y_{e}\log y_{e}$,
where $L_{e}^{(t)}=-\frac{1}{6}\left(\left(z_{eu}^{(t)2}+z_{ev}^{(t)2}\right)-\frac{1}{6\sqrt{3}}r_{e}\right)$.
The solution is simply $\nabla\smax(L^{(t)})$ (definition in Section
\ref{subsec:MWU-tool}), which can be computed in $O(m)$.

The second minimization $z^{(t+1)}=\arg\min_{z\in C}H(z,y^{(t+1)})$.
Here we build on the insights from the oracle implementation for MWU
and reduce the problem to computing for each $u$ separately
\begin{align*}
\min_{z\in[0,1]^{\deg(u)}} & \sum_{e\ni u}z_{eu}^{2}y_{e}^{(t+1)}-\frac{1}{6\sqrt{3}}z_{eu}s_{eu}\mbox{ st. }\sum_{e\ni u}z_{eu}\le D
\end{align*}
Let $\tilde{s}=\frac{1}{6\sqrt{3}}s$ and take the Lagrangian, we
have
\begin{align*}
\min_{z\in[0,1]^{\deg(u)}}\max_{\lambda\ge0} & \sum_{e\ni u}\left(z_{eu}^{2}y_{e}^{(t+1)}-z_{eu}\tilde{s}_{eu}+\lambda z_{eu}\right)-\lambda D\\
\Leftrightarrow\max_{\lambda\ge0}\min_{z\in[0,1]^{\deg(u)}} & \sum_{e\ni u}\left(z_{eu}^{2}y_{e}^{(t+1)}-z_{eu}\tilde{s}_{eu}+\lambda z_{eu}\right)-\lambda D
\end{align*}
For $\lambda\ge0$, we obtain for each $e\ni u$
\begin{align*}
z_{eu} & =\max\left\{ 0,\min\left\{ 1,\frac{\tilde{s}_{eu}-\lambda}{2y_{e}^{(t+1)}}\right\} \right\} 
\end{align*}
Now we need to solve for $\lambda$
\begin{align*}
 & \max_{\lambda\ge0}\sum_{e\ni u}\left(z_{eu}^{2}y_{e}^{(t+1)}-z_{eu}\tilde{s}_{eu}+\lambda z_{eu}\right)-\lambda D\\
= & \max_{\lambda\ge0}-\lambda D+\sum_{z:0\le\tilde{s}_{eu}-\lambda\le2y_{e}^{(t+1)}}-\frac{\left(\tilde{s}_{eu}-\lambda\right)^{2}}{4y_{e}^{(t+1)}}\\
 & \quad+\sum_{z:\tilde{s}_{eu}-\lambda>2y_{e}^{(t+1)}}\left(y_{e}^{(t+1)}-\tilde{s}_{eu}+\lambda\right).
\end{align*}
Again, we take the inspiration from Algorithm \ref{alg:iteration-solver}
and see that we can also perform a search for $\lambda$. Here each
$e\ni u$ belongs to one of the three category, $\tilde{s}_{eu}-\lambda<0$
or $0\le\tilde{s}_{eu}-\lambda\le2y_{e}^{(t+1)}$ or $\tilde{s}_{eu}-\lambda>2y_{e}^{(t+1)}$.
To solve the above problem, we must determine which category each
$e$ belongs to. To do this, we can sort 2$\deg u$ numbers $\left\{ \max\left\{ 2y_{e}^{(t+1)}-\tilde{s}_{eu},0\right\} ,\tilde{s}_{eu}\right\} _{e\ni u}$
and find the optimal value of $\lambda$ on each interval. When testing
$\lambda$ increasingly, the category of each $z_{eu}$ only changes
at most twice. For this we can use a data structure (eg. Fibonacci
heap) to determine which $z_{eu}$ changes category when $\lambda$
jumps to the next interval. This means the total time to find $z_{eu}$
for all $e\ni u$ is at most $O(\deg u\log\deg u)$. Summing the total
over all vertex $u$, we have solving the second minimization problem
each iteration takes $O(m\log\Delta)$ time.
\end{proof}

\subsection{Proof of Lemma \ref{lem:av-objective}}

\begin{proof}
Observe that the following LP is infeasible for $z\in[0,1]^{2m}$
\begin{align*}
\sum_{e\ni u}z_{eu} & \le\overline{D},\quad\forall u\in V\\
z_{eu}+z_{ev} & \ge1-\epsilon\quad\forall e=uv\in E.
\end{align*}
Because otherwise, similar to lemma $\ref{cor:existence-good-D}$,
we must have $\overline{D}\ge D^{*}(1-\epsilon)$, while we have $\overline{D}=\tilde{D}(1-2\epsilon)\le D^{*}(1+\epsilon)(1-2\epsilon)<D^{*}(1-\epsilon)$,
contradiction. Thus we have that

\begin{align*}
\max_{z\in C(\overline{D})}\sum_{e}\overline{y}_{e}\left(z_{eu}+z_{ev}\right)\le & \epsilon+\min_{y\in\Delta_{m}}\sum_{e}y_{e}\left(\overline{z}_{eu}+\overline{z}_{ev}\right)<\epsilon+1-\epsilon=1.
\end{align*}
This gives us the claim in the lemma.
\end{proof}

\section{Additional Proofs from Section \ref{sec:DSM}\label{sec:appdx-DSM}}

\subsection{Continuous formulation}

We recall with the quadratic program for finding a dense decomposition
\begin{align}
\min f(z)\coloneqq\sum_{u\in V}b_{u}^{2}\mbox{ st. } & b_{u}=\sum_{e\in E,u\in e}z_{eu},\quad\forall u\in V\label{eq:dual-quadratic-1}\\
 & z_{eu}+z_{ev}\ge1,\quad\forall e=uv\in E\nonumber \\
 & 0\le z_{eu}\le1,\quad\forall e,u\in e\in E.\nonumber 
\end{align}
Now, we show how to reformulate this problem as (\ref{eq:supermod-min}).
Recall that we define for $e\in E$, $F_{e}(S)=1$ if $e\subseteq S$,
$F_{e}(S)=0$ otherwise and the base contrapolymatroid
\begin{align*}
B(F_{e}) & =\left\{ z_{e}\in\R^{n},z_{e}(S)\ge F_{e}(S)\;\forall S\subseteq V,z_{e}(V)=F_{e}(V)=1\right\} 
\end{align*}
Specifically, for $z_{e}\in B(F_{e})$, we have
\begin{align*}
z_{eu}+z_{ev} & =1,\quad\mbox{for }e=uv\\
z_{ew} & =0,\quad\forall w\neq u,v
\end{align*}
In this view, it is immediate to see that we can rewrite the above
problem as
\begin{align}
\min_{z_{e}\in B(F_{e}),\forall e\in E} & \left\Vert \sum_{e\in E}z_{e}\right\Vert _{2}^{2}\label{eq:supermod-min-1}
\end{align}
Following the framework by \citet{ene2015random}, let us write
\begin{align*}
A=\underbrace{\left[I_{n}\dots I_{n}\right]}_{n\mbox{ times}};\quad & {\cal P}=\Pi_{e\in E}B(F_{e})\subseteq\R^{mn}
\end{align*}
The problem can then be written as
\begin{align}
\min_{z\in{\cal P}} & \frac{1}{2}\left\Vert Az\right\Vert _{2}^{2}\label{eq:supermod-prob-1}
\end{align}
The objective function is $2$-smooth with respect to each coordinate.
However, it is not strongly convex. In order to show an algorithm
with linear convergence, our goal is to prove a property similar to
strong convexity.
\begin{defn}[Restricted strong convexity \citep{ene2015random}]
 For $z\in{\cal P}$, let $z^{*}=\arg\min_{p}\left\{ \left\Vert p-z\right\Vert _{2}\colon Ap=b^{*}\right\} $
where $b^{*}$ is the unique optimal solution to (\ref{eq:dual-quadratic}).
We say that $\frac{1}{2}\left\Vert Az\right\Vert _{2}^{2}$ is restricted
$\ell$-strongly convex if for all $y\in{\cal P}$
\begin{align*}
\left\Vert A(z-z^{*})\right\Vert _{2}^{2} & \ge\ell\left\Vert z-z^{*}\right\Vert _{2}^{2}.
\end{align*}
\end{defn}

\begin{lem}
\label{lem:dsm-convexity-1}Let $\ell^{*}=\sup\left\{ \ell:\frac{1}{2}\left\Vert Az\right\Vert _{2}^{2}\mbox{ is restricted }\ell\mbox{-strongly convex}\right\} .$
We have $\ell^{*}\ge\frac{4}{n^{2}}$.
\end{lem}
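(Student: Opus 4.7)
The plan is to reduce the restricted strong convexity claim to a spectral Hoffman-type estimate for the signed vertex-edge incidence matrix $M$ of $G$, exploiting both the one-dimensional nature of each factor $B(F_e)$ and the optimality of the projection $z^*$.

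\emph{Reparametrization.} Every $z_e \in B(F_e)$ is determined by a single scalar $t_e \in [0,1]$ via $z_{eu} = t_e$, $z_{ev} = 1-t_e$ after fixing an orientation $u \to v$ on each edge $e = uv$. Setting $s_e = t_e - t_e^*$ and letting $M \in \{-1, 0, +1\}^{n \times m}$ be the signed incidence matrix of $G$ relative to this orientation, a direct computation gives $\left\Vert z - z^* \right\Vert _{2}^{2} = 2\left\Vert s\right\Vert _{2}^{2}$ and $A(z-z^*) = Ms$. The claim thus reduces to proving $\left\Vert Ms\right\Vert _{2}^{2} \geq (8/n^{2})\left\Vert s\right\Vert _{2}^{2}$ for all admissible perturbations $s$.

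\emph{KKT and spectral bound on the dual.} Treating $z^{*}$ as the minimizer of $\tfrac{1}{2}\left\Vert p - z\right\Vert _{2}^{2}$ over $\{p \in \mathcal{P} : Ap = b^{*}\}$, KKT yields the decomposition $z - z^{*} = A^{\top}\mu + \nu$ with $\nu \in N_{\mathcal{P}}(z^{*})$. Taking inner product with $z - z^{*}$ and using $\langle \nu, z - z^{*}\rangle \leq 0$ (normal cone plus $z \in \mathcal{P}$), Cauchy-Schwarz gives $\left\Vert z - z^{*}\right\Vert _{2}^{2} \leq \left\Vert A(z-z^{*})\right\Vert _{2} \left\Vert \mu\right\Vert _{2}$, so it suffices to exhibit a multiplier with $\left\Vert \mu\right\Vert _{2} \leq (n/2)\left\Vert z - z^{*}\right\Vert _{2}$. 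Analyzing $N_{B(F_e)}(z_e^{*})$ edge-by-edge---orthogonal to $e_{u} - e_{v}$ at interior points $t_{e}^{*} \in (0,1)$, a one-sided half-space at the extremes $t_{e}^{*} \in \{0,1\}$---translates the KKT system into $(M^{\top}\mu)_{e} = 2s_{e}$ at interior edges and $|(M^{\top}\mu)_{e}| \geq 2|s_{e}|$ with matching sign at extremal edges. Picking the \emph{minimum-norm} valid multiplier makes the one-sided inequalities tight, giving $M^{\top}\mu = 2s$ and hence $\left\Vert M^{\top}\mu\right\Vert _{2}^{2} = 4\left\Vert s\right\Vert _{2}^{2}$. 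Centering $\mu$ separately on each connected component $C$ of $G$ and applying $\left\Vert M^{\top}\mu\right\Vert _{2}^{2} \geq \lambda_{2}(L_{C})\left\Vert \mu|_{C}\right\Vert _{2}^{2}$ together with the sharp bound $\lambda_{2}(L_{C}) \geq 2(1-\cos(\pi/n)) \geq 8/n^{2}$ on the algebraic connectivity of a connected graph on at most $n$ vertices (attained asymptotically by the path graph), one obtains $\left\Vert \mu\right\Vert _{2}^{2} \leq (n^{2}/8) \cdot 4\left\Vert s\right\Vert _{2}^{2} = (n^{2}/2)\left\Vert s\right\Vert _{2}^{2}$, i.e., $\left\Vert \mu\right\Vert _{2} \leq (n/\sqrt{2})\left\Vert s\right\Vert _{2} = (n/2)\left\Vert z - z^{*}\right\Vert _{2}$. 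Substituting back yields $\left\Vert z - z^{*}\right\Vert _{2} \leq (n/2)\left\Vert A(z-z^{*})\right\Vert _{2}$, hence $\ell^{*} \geq 4/n^{2}$.

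\emph{Main obstacle.} The delicate part is verifying that the minimum-norm $\mu$ realizes $M^{\top}\mu = 2s$ at the \emph{extremal} edges: there the KKT conditions only yield a one-sided inequality, so one must show that $s$ lies in $\mathrm{image}(M^{\top})$ modulo the sign-feasible cycle directions, which follows from the feasibility of the projection and the sign restrictions $s_{e} \geq 0$ (resp. $\leq 0$) at $t_{e}^{*} = 0$ (resp. $t_{e}^{*} = 1$). A secondary subtlety is handling disconnected graphs, which requires component-wise centering of $\mu$ and taking the worst per-component spectral bound. The constant $8/n^{2}$ on $\lambda_{2}$ of a connected graph on $n$ vertices is the sharp value, pinned down by the path graph, and it is precisely this constant that produces the claimed factor $4/n^{2}$ rather than a looser asymptotic estimate.
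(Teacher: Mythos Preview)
Your approach is genuinely different from the paper's---the paper constructs an explicit feasible $y$ with $Ay=b^*$ via augmenting paths in a flow network and bounds $\|z-y\|_1$ and $\|z-y\|_\infty$ combinatorially, whereas you try a KKT/spectral route through the algebraic connectivity of $G$. The spectral idea is appealing, but the argument has a real gap at exactly the place you flag as delicate.

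The problematic step is the claim that (for some valid multiplier) $M^\top\mu=2s$, equivalently that $s$ lies in $\mathrm{image}(M^\top)$, the cut space of $G$. Projection optimality only gives $\langle s,c\rangle\le 0$ for every $c$ in the tangent cone $\{c:Mc=0,\;c_e\ge 0\text{ at }t_e^*=0,\;c_e\le 0\text{ at }t_e^*=1\}$, hence $s$ is orthogonal only to cycles that avoid all extremal edges. When an extremal edge lies on a cycle, that cycle need not be in the tangent cone in either orientation, so $s$ is \emph{not} forced to be orthogonal to it. Concretely, take $K_6$ on $\{1,\dots,6\}$, $K_4$ on $\{7,\dots,10\}$, and bridge edges $17$ and $28$. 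Summing $b^*$ over the $K_6$ side forces $z^*_{17,1}=z^*_{28,2}=0$, so $t_{17}^*=t_{28}^*=0$ are extremal; yet the cycle $1\text{--}2\text{--}8\text{--}7\text{--}1$ passes through both extremal edges with opposite signs, so neither orientation is tangent-feasible and $s$ carries an unconstrained component along this cycle. Then no $\mu$ satisfies $M^\top\mu=2s$, and the spectral inequality $\|M^\top\mu\|^2\ge\lambda_2\|\mu\|^2$ gives you nothing useful (you only know $\|M^\top\mu\|\ge 2\|s\|$, the wrong direction). Your one-line justification ``follows from the feasibility of the projection and the sign restrictions'' does not address this; nor does choosing the minimum-norm multiplier help, since the obstruction is nonexistence, not nonuniqueness.

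By contrast, the paper's flow argument never needs $s$ to be in the cut space: it pushes $\epsilon$ units along a path of length $\le n$, so each push changes $\|z-y\|_1$ by at most $n\epsilon$ and $\|Ay-b^*\|_1$ by exactly $2\epsilon$, yielding $\|z-z^*\|_2^2\le\|z-z^*\|_1\|z-z^*\|_\infty\le\tfrac{n}{4}\|A(z-z^*)\|_1^2\le\tfrac{n^2}{4}\|A(z-z^*)\|_2^2$. If you want to rescue a spectral proof, you would need a Hoffman-type bound that controls the cycle component of $s$ in terms of $\|Ms\|$ under the sign constraints at extremal edges; that is a substantially harder statement than what you have written.
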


\begin{proof}
The proof essentially follow from \citet{ene2017decomposable}. For
$b=\sum_{e}z_{e}$ we construct the following directed graph on $G=(V,E)$
and capacities $c$. For $e=uv\in E$, $c(uv)=z_{eu}$, $c(vu)=z_{ev}$.
If an arc has capacity $0$ we just delete the arc from the graph.

We transform $z$ to $y$ that satisfies $Ay=b^{*}$. We initialize
$y=z$. Let $N=\left\{ v:\left(Ay\right)(v)>b^{*}(v)\right\} $ and
$P=\left\{ v:\left(Ay\right)(v)<b^{*}(v)\right\} $. Once we have
$N=P=\emptyset,$ we have $Ay=b^{*}$.
\begin{claim}
If $N\neq\emptyset$ there exists a directed path of positive capacity
between $N$ and $P$.
\end{claim}

\begin{proof}
Let $b=Ay$. Let $S$ be the set of vertices reachable from $N$ on
a directed path of positive capacity. For a contradiction, assume
$S\cap P=\emptyset$. For all $e=uv\subseteq S$ we have $z_{eu}+z_{ev}=1$.
Also there is no out-going edge from $S$ (ie, if there is a edge
$e=uv$ such that $u\in S$ with $v\notin S$, we have $z_{eu}=0$).
By this observation we have
\begin{align*}
b(S) & =\left|S\right|
\end{align*}
On the other hand, since $N\subseteq S$, we have $b(S)=b(N)+b(N\setminus S)>b^{*}(S)+b^{*}(N\setminus S)=b^{*}(S)\ge\left|S\right|$.
So we can conclude that $S\cap P\neq\emptyset$.
\end{proof}
In every step of the algorithm we take the shortest directed path
$p$ of positive capacity from $N$ to $P$ and update $y$. Let $\epsilon$
be the minimum capacity of an arc on $p$. For an arc $(u,v)$, we
update $z_{eu}=z_{eu}-\epsilon$ and $z_{ev}=z_{ev}+\epsilon$. By
doing this, the set of shortest paths of the same length as $p$ strictly
shrinks, until the length of the shortest paths in the graph increases.
For this reason, we know that the algorithm must terminate, which
is when we have $N=P=\emptyset$ and $Ay=b^{*}$.

Every path update changes $\left\Vert y\right\Vert _{\infty}$ at
most $\epsilon$ and $\left\Vert y\right\Vert _{1}$ at most $n\epsilon$.
At the same time $\sum_{v\in N}b(v)-b^{*}(v)$ decreases by $\epsilon$
and $\sum_{v\in P}b^{*}(v)-b(v)$ decreases by $\epsilon$ and $b(v)-b^{*}(v)=0$
for the remaining nodes. Hence $\left\Vert Ay-b^{*}\right\Vert _{1}$
decreases by $2\epsilon$ 
\begin{align*}
\left\Vert z-z^{*}\right\Vert _{\infty} & \le\frac{1}{2}\left\Vert Az-b^{*}\right\Vert _{1}=\frac{1}{2}\left\Vert A(z-z^{*})\right\Vert _{1},\\
\left\Vert z-z^{*}\right\Vert _{1} & \le\frac{n}{2}\left\Vert A(z-z^{*})\right\Vert _{1}.
\end{align*}
Hence we have
\begin{align*}
 & \left\Vert z-z^{*}\right\Vert _{2}^{2}\le\left\Vert z-z^{*}\right\Vert _{\infty}\left\Vert z-z^{*}\right\Vert _{1}\\
\le & \frac{n}{4}\left\Vert A(z-z^{*})\right\Vert _{1}^{2}\le\frac{n^{2}}{4}\left\Vert A(z-z^{*})\right\Vert _{2}^{2}.
\end{align*}
\end{proof}

\subsection{Practical implementation of Accelerated Coordinate Descent}

The implementation of Algorithm \ref{alg:acdm} that we use in our
experiments is shown in Algorithm \ref{alg:acdm-practical}. The main
implementation details are that we select the coordinates via a random
permutation and we restart when the function value increases.

\begin{algorithm}[H]
{\small{}\caption{{\small{}Practical Accelerated Coordinate Descent}}
\label{alg:acdm-practical}}{\small\par}

{\small{}Initialize $z^{(0)}\in{\cal P}$, $b_{u}=\sum_{e\ni u}z_{eu}^{(0)}$,
for all $u$}, $f=\sum_{u\in V}b_{u}^{2}$, $f_{\mathrm{last}}=0$

{\small{}for $k=1\dots K$:}{\small\par}

{\small{}$\quad$for $t=1\dots T$:}{\small\par}

$\quad\quad$if $t=1$ and $f>f_{\mathrm{last}}$:{\small{} $y^{(k,0)}=z^{(k-1)}\in{\cal P}$,
$w^{(k,0)}=0$, $\theta^{(k,1)}=\frac{1}{m}$ // restart when the
function value increases}{\small\par}

$\quad\quad$else: {\small{}$\theta^{(k,t)}=\frac{\sqrt{\theta^{(k,t-1)4}+4\theta^{(k,t-1)2}}-\theta^{(k,t-1)2}}{2}$}{\small\par}

{\small{}$\quad\quad$pick a permutation $R^{(t)}$ of $[m]$}{\small\par}

{\small{}$\quad\quad$for $e\in R^{(t)}:$}{\small\par}

{\small{}$\quad\quad\quad$$x^{(k,t)}=\theta^{(k,t)2}w^{(k,t-1)}+y^{(k,t-1)}$}{\small\par}

{\small{}$\quad\quad\quad$$y^{(k,t)}=\arg\min_{s\in B(F_{e})}\Bigg(\left\langle \nabla_{e}f(x^{(k,t)}),(s_{eu}\ s_{ev})\right\rangle +2m\theta^{(k,t)}\left\Vert (s_{eu}\ s_{ev})-(y_{eu}^{(k,t-1)}\ y_{ev}^{(k,t-1)})\right\Vert _{2}^{2}\Bigg)$}{\small\par}

{\small{}$\quad\quad\quad$$w^{(k,t)}=w^{(k,t-1)}-\frac{1-m\theta^{(k,t-1)}}{\theta^{(k,t-1)2}}\left(y^{(k,t)}-y^{(k,t-1)}\right)$}{\small\par}

{\small{}$\quad\quad$$f_{\mathrm{last}}=f$}{\small\par}

$\quad\quad$update {\small{}$b_{u}=\sum_{e\ni u}\theta^{(k,t)2}w_{e}^{(k,t)}+y_{e}^{(k,t)}$;
$f=\sum_{u\in V}b_{u}^{2}$}{\small\par}

{\small{}$\quad$$z^{(k)}=\theta^{(k,T)2}w^{(k,T)}+y^{(k,T)}$}{\small\par}

{\small{}return $z^{(K)}$}{\small\par}
\end{algorithm}

\subsection{Random Coordinate Descent for solving (\ref{eq:dual-quadratic})}

We also consider random coordinate descent algorithm (the version
of Algorithm \ref{alg:acdm} without acceleration). 

\begin{algorithm}[H]
\caption{Random Coordinate Descent}
\label{alg:rcdm}

Initialize $z^{(0)}\in{\cal P}$

for $t=1\dots T$

$\quad$Option 1: Sample a set $R$ of $m$ edges from $E$ uniformly
at random with replacement

$\quad$Option 2: Pick a random permutation $R$ of $E$

$\quad$for $e\in R$:

$\quad\quad$Update $z^{(k)}=\arg\min_{s\in B(F_{e})}\Bigg(\left\langle \nabla_{e}f(z^{(t-1)}),(s_{eu}\ s_{ev})\right\rangle +\left\Vert (s_{eu}\ s_{ev})-(z_{eu}^{(t-1)}\ z_{ev}^{(t-1)})\right\Vert _{2}^{2}\Bigg)$

return $z^{(T)}$
\end{algorithm}

We state without proof the following theorem which is similar to the
Algorithm \ref{alg:acdm}, whose proof also follows similarly from
\citet{ene2015random}.
\begin{thm}
Algorithm \ref{alg:rcdm} (option 1) and the fractional peeling procedure
\citep{harb2022faster} output an $\epsilon$-approximate dense decomposition
in $O\left(mn^{2}\log\frac{n}{\epsilon}\right)$ time in expectation.
\end{thm}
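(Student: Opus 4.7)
The plan is to mirror the proof of the corresponding accelerated theorem: first establish expected linear convergence of Algorithm \ref{alg:rcdm} in function value, then translate that into an $\ell_{2}$ error bound on the load vector $b$, and finally feed the result into the fractional peeling guarantee of \citet{harb2022faster}.

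For the convergence step, two ingredients drive the rate. First, the objective $f(z)=\|Az\|_{2}^{2}$ has per-edge block smoothness $L=2$, since the columns of $A$ corresponding to $z_{eu}$ and $z_{ev}$ are distinct standard basis vectors, so $A^{T}A$ restricted to the block of an edge $e$ is the $2\times 2$ identity. Second, the restricted strong convexity parameter satisfies $\ell^{*}\ge 4/n^{2}$ by Lemma \ref{lem:dsm-convexity-1}. Plugging these into the randomized block coordinate descent analysis of \citet{ene2015random}, adapted to supermodular base contrapolymatroids via the equivalence (\ref{eq:supermod-prob-1}), gives an expected contraction of $1-\Omega(\ell^{*}/(mL))=1-\Omega(1/(mn^{2}))$ per single-edge update, and each such update costs $O(1)$ time through the closed-form projection $\proj_{e}$. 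Since every $z_{e}\in B(F_{e})$ satisfies $z_{eu}+z_{ev}=1$, we have $\sum_{u}b_{u}=m$ and hence $f(z^{(0)})\le n\sum_{u}b_{u}=nm$. Therefore $O(mn^{2}\log(nm/\epsilon'))$ single-edge updates suffice to obtain $\E[f(z)-f(z^{*})]\le\epsilon'$.

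To convert this into an error in $b$, first-order optimality of $z^{*}$ on $\mathcal{P}$ gives $\langle\nabla f(z^{*}),z-z^{*}\rangle=2\langle b^{*},b-b^{*}\rangle\ge 0$, so $f(z)-f(z^{*})\ge\|b-b^{*}\|_{2}^{2}$. Choosing $\epsilon'=\epsilon^{2}/n$ and applying Jensen's inequality yields $\E\|b-b^{*}\|_{2}\le\epsilon/\sqrt{n}$, which by the cited fractional peeling lemma produces an $\epsilon$-approximate dense decomposition in an additional $\tilde{O}(mn)$ time, absorbed by the descent cost. Since option 1 performs $m$ single-edge updates per outer iteration, the total expected runtime is $O(mn^{2}\log(n/\epsilon))$.

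The main obstacle I expect is verifying that the Ene--Nguyen convergence theorem, originally written for the base polytope of a submodular function, transfers to base contrapolymatroids of supermodular functions. Conceptually this is immediate, since Lemma \ref{lem:dsm-convexity-1} plays exactly the role of the corresponding Ene--Nguyen structural lemma and the quadratic objective has the same form; but one has to check that the flow-based rerouting argument establishing restricted strong convexity (Appendix \ref{sec:appdx-DSM}) goes through for $B(F_{e})$, which it does because $B(F_{e})$ inherits the same local update structure. A minor bookkeeping subtlety is comparing ``$m$ edges per outer iteration with replacement'' to the single-sample analysis, but this only reshuffles how one counts updates and leaves the final runtime unchanged.
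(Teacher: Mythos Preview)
Your proposal is correct and follows exactly the approach the paper intends: the paper itself gives no proof of this theorem, stating only that it ``follows similarly from \citet{ene2015random},'' and your sketch---block smoothness $L=O(1)$, restricted strong convexity $\ell^{*}\ge 4/n^{2}$ from Lemma \ref{lem:dsm-convexity-1}, hence a $1-\Omega(1/(mn^{2}))$ contraction per update, combined with the fractional peeling lemma---is precisely that adaptation. If anything, you have supplied more detail than the paper does, including the conversion $f(z)-f(z^{*})\ge\|b-b^{*}\|_{2}^{2}$ via first-order optimality; the only residual quibble (passing from $\E\|b-b^{*}\|_{2}$ to a deterministic input for fractional peeling) is a standard Markov/boosting step that the paper's accelerated theorem elides in the same way.
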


\section{Additional Experiment Results\label{sec:Additional-Experiment-Results}}

\subsection{Data summary}

We use eight datasets to be consistent with previous works, eg. \citet{boob2020flowless,harb2022faster}:
cit-Patents, com-Amazon, com-Enron, dblp-author, roadNet-CA, roadNet-PA,
wiki-topcats from SNAP collection \citet{leskovec2014snap} and orkut
from Konect collection \citet{kunegis2013konect}. We remark, however,
that road networks datasets (roadNet-CA, roadNet-PA) are expected
to be close to planar graphs, and therefore have very low maximum
density.

\begin{table}[H]
\centering{}{\small{}\caption{{\small{}Summary of datasets}}
\label{table:datasets}}%
\begin{tabular}{lcc}
\hline 
{\small{}Dataset} & {\small{}No. vertices} & {\small{}No. edges}\tabularnewline
\hline 
\hline 
{\small{}cit-Patents} & {\small{}3774768} & {\small{}16518947}\tabularnewline
\hline 
{\small{}com-Amazon} & {\small{}334863} & {\small{}925872}\tabularnewline
\hline 
{\small{}com-Enron} & {\small{}36692} & {\small{}367662}\tabularnewline
\hline 
{\small{}dblp-author} & {\small{}317080} & {\small{}1049866}\tabularnewline
\hline 
{\small{}roadNet-CA} & {\small{}1965206} & {\small{}5533214}\tabularnewline
\hline 
{\small{}roadNet-PA} & {\small{}1088092} & {\small{}3083796}\tabularnewline
\hline 
{\small{}wiki-topcats} & {\small{}1791489} & {\small{}25444207}\tabularnewline
\hline 
{\small{}orkut} & {\small{}3072441} & {\small{}117185083}\tabularnewline
\hline 
\end{tabular}
\end{table}

\subsection{Additional plots}

\begin{figure}[H]
{\small{}}\subfloat[wiki-topcats]{{\small{}\includegraphics[width=0.33\columnwidth]{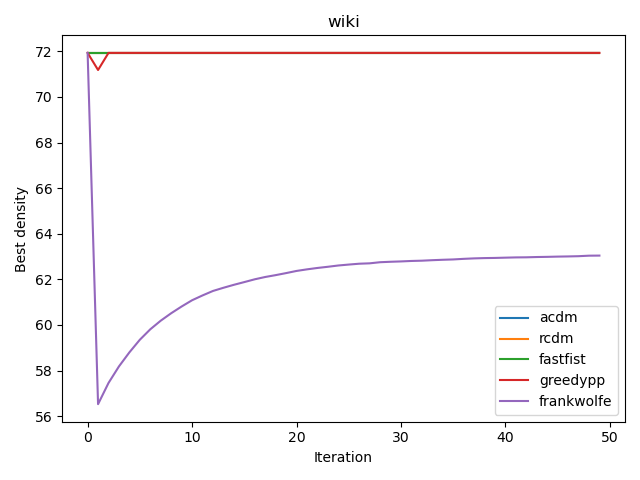}}{\small\par}

{\small{}}{\small\par}}{\small{}}\subfloat[roadNet-PA]{{\small{}\includegraphics[width=0.33\columnwidth]{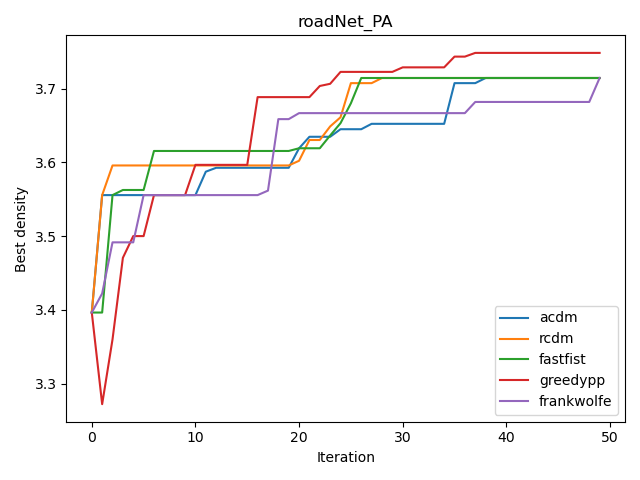}}}{\small{}}\subfloat[roadNet-CA]{{\small{}\includegraphics[width=0.33\columnwidth]{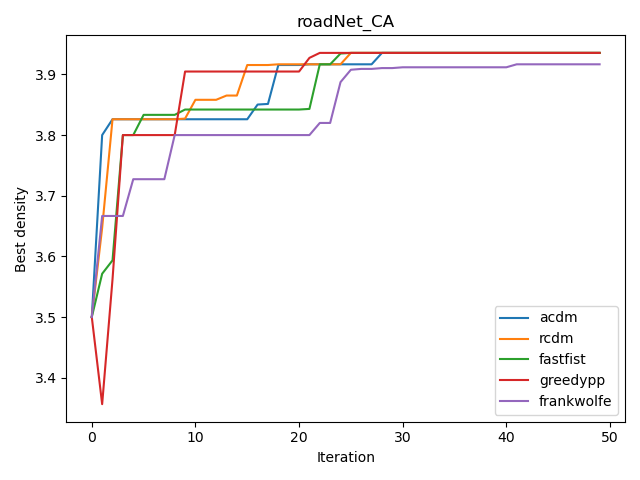}}}{\small{}\hfill{}}{\small\par}

{\small{}}\subfloat[cit-Patent]{{\small{}\includegraphics[width=0.33\columnwidth]{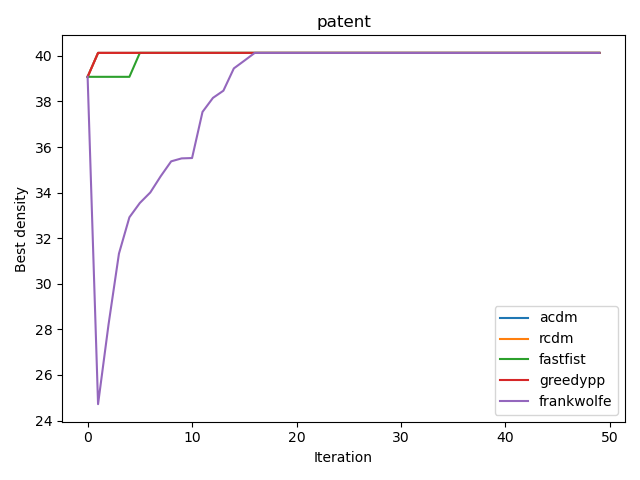}}{\small\par}

{\small{}}{\small\par}}{\small{}}\subfloat[com-Enron]{{\small{}\includegraphics[width=0.33\columnwidth]{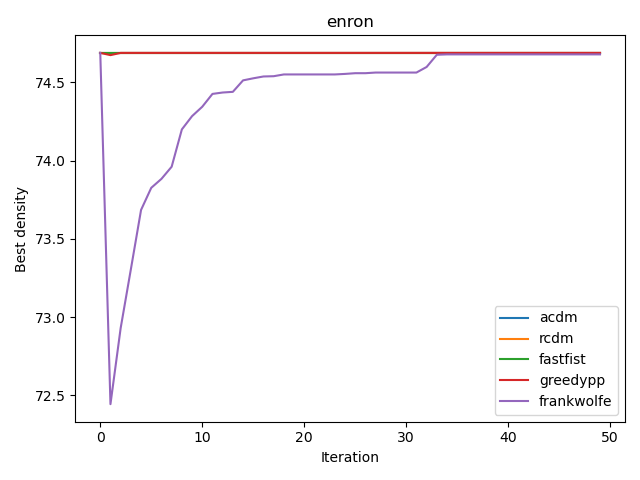}}}{\small{}}\subfloat[dblp-author]{{\small{}\includegraphics[width=0.33\columnwidth]{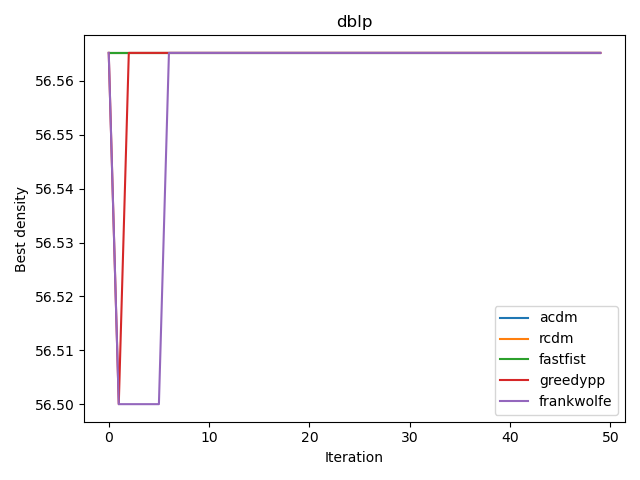}}}{\small{}\caption{{\small{}Iteration/Best Density}}
\label{fig:iteration-best-density-1}}{\small\par}
\end{figure}

\begin{figure}
{\small{}}\subfloat[wiki-topcats]{{\small{}\includegraphics[width=0.33\columnwidth]{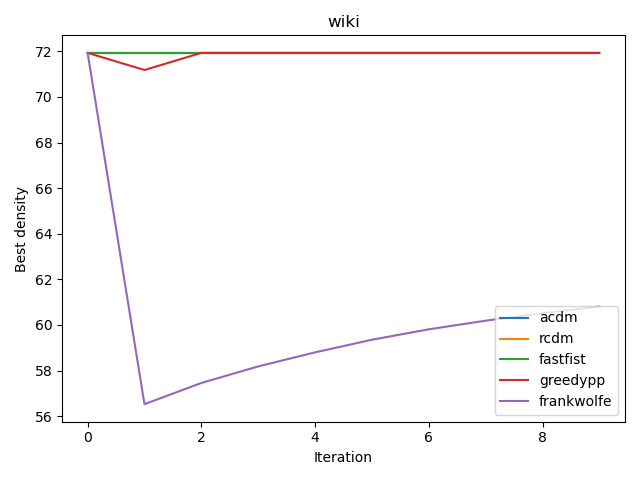}}{\small\par}

{\small{}}{\small\par}}{\small{}}\subfloat[roadNet-PA]{{\small{}\includegraphics[width=0.33\columnwidth]{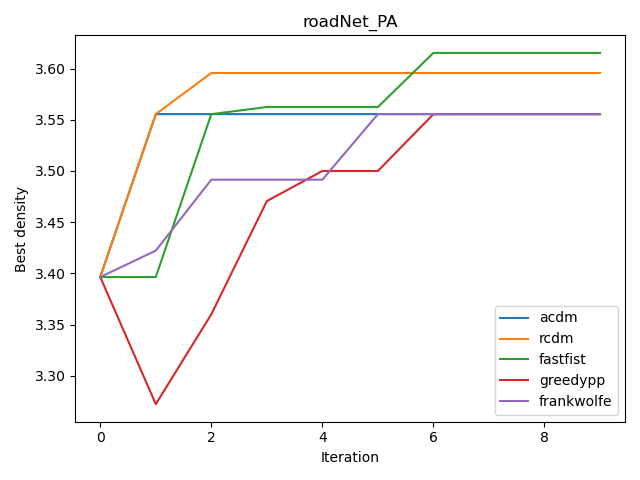}}}{\small{}}\subfloat[roadNet-CA]{{\small{}\includegraphics[width=0.33\columnwidth]{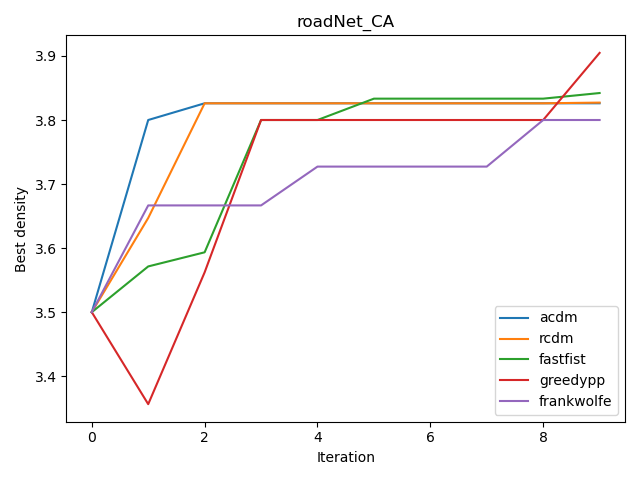}}}{\small{}\hfill{}}{\small\par}

{\small{}}\subfloat[cit-Patent]{{\small{}\includegraphics[width=0.33\columnwidth]{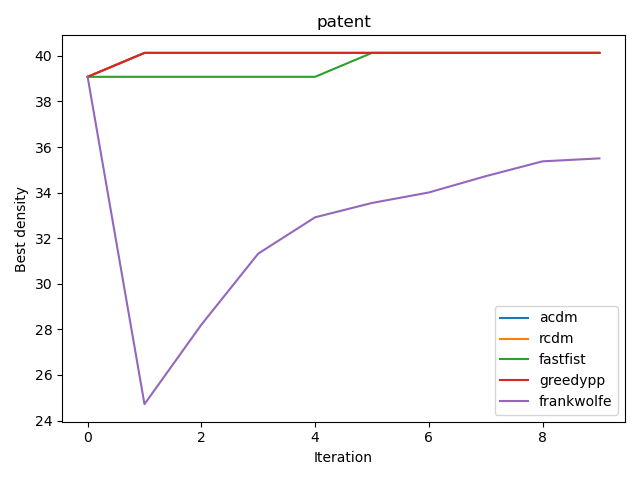}}{\small\par}

{\small{}}{\small\par}}{\small{}}\subfloat[com-Enron]{{\small{}\includegraphics[width=0.33\columnwidth]{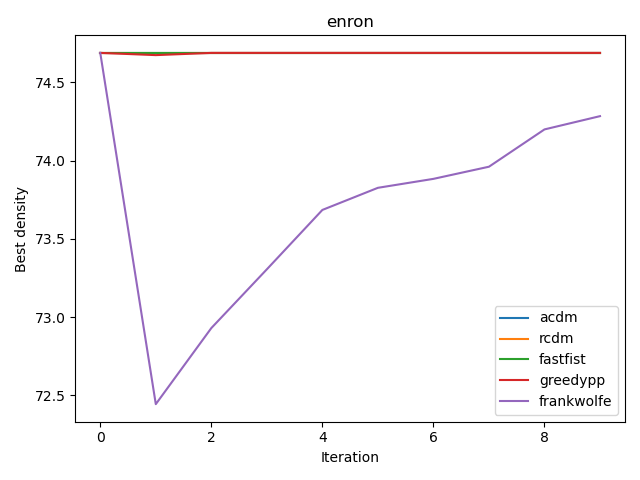}}}{\small{}}\subfloat[dblp-author]{{\small{}\includegraphics[width=0.33\columnwidth]{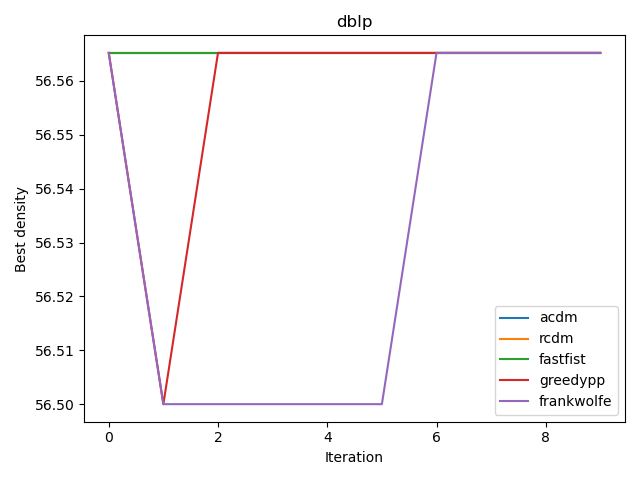}}}{\small{}\caption{{\small{}Iteration/Best Density zoomed in the first 10 iterations}}
\label{fig:iteration-best-density-1-zoom}}{\small\par}
\end{figure}

\begin{figure}
{\small{}}\subfloat[wiki-topcats]{{\small{}\includegraphics[width=0.33\columnwidth]{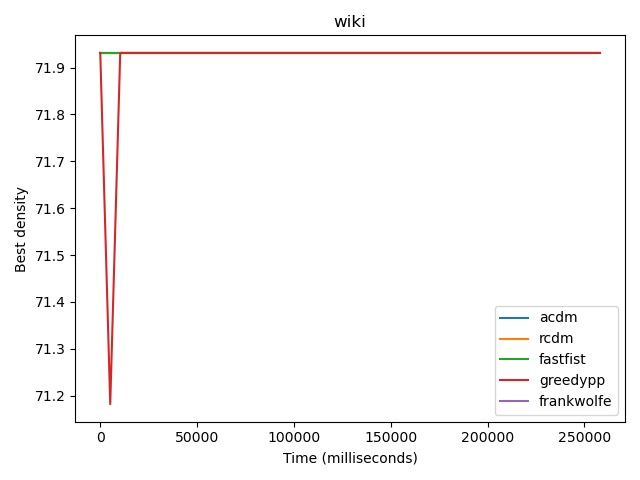}}{\small\par}

{\small{}}{\small\par}}{\small{}}\subfloat[roadNet-PA]{{\small{}\includegraphics[width=0.33\columnwidth]{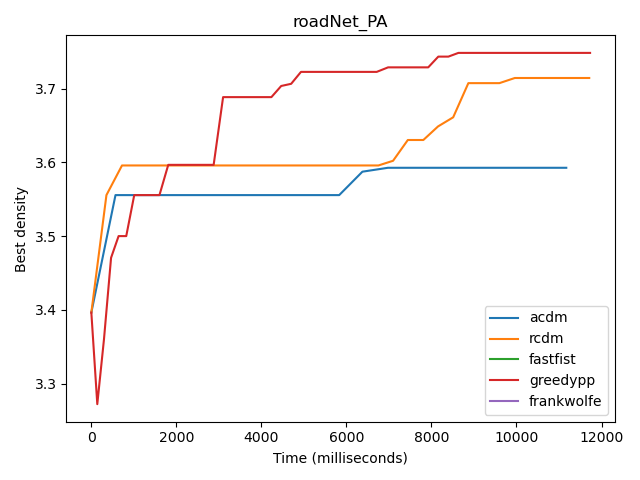}}}{\small{}}\subfloat[roadNet-CA]{{\small{}\includegraphics[width=0.33\columnwidth]{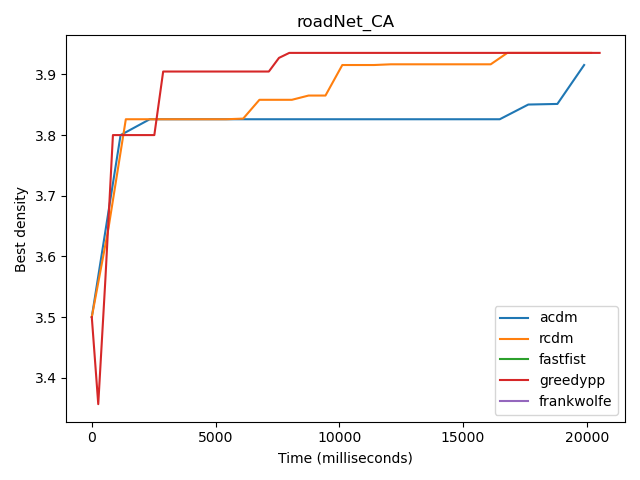}}}{\small{}\hfill{}}{\small\par}

{\small{}}\subfloat[cit-Patent]{{\small{}\includegraphics[width=0.33\columnwidth]{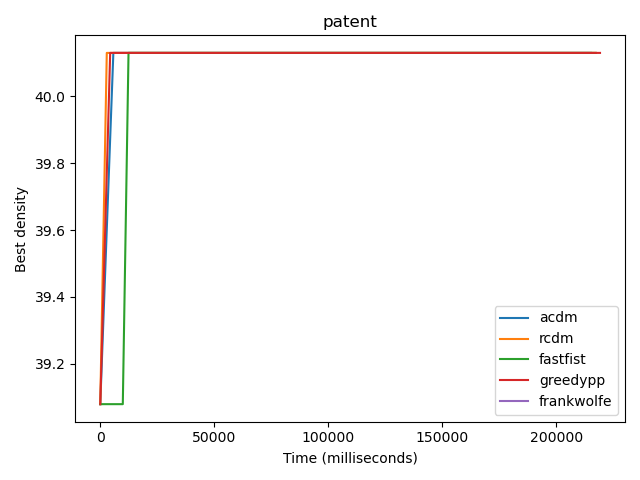}}{\small\par}

{\small{}}{\small\par}}{\small{}}\subfloat[com-Enron]{{\small{}\includegraphics[width=0.33\columnwidth]{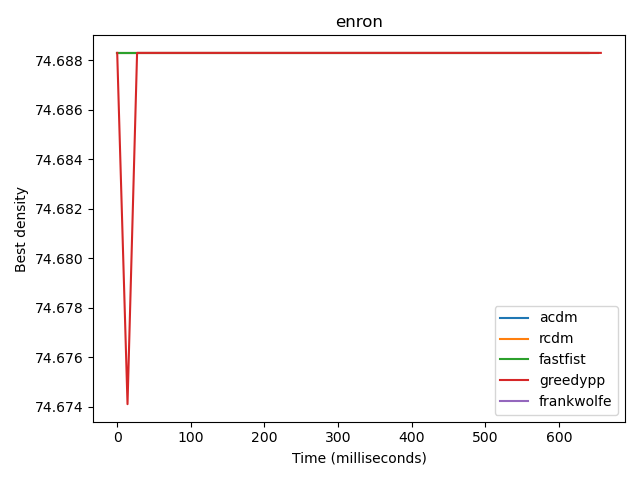}}}{\small{}}\subfloat[dblp-author]{{\small{}\includegraphics[width=0.33\columnwidth]{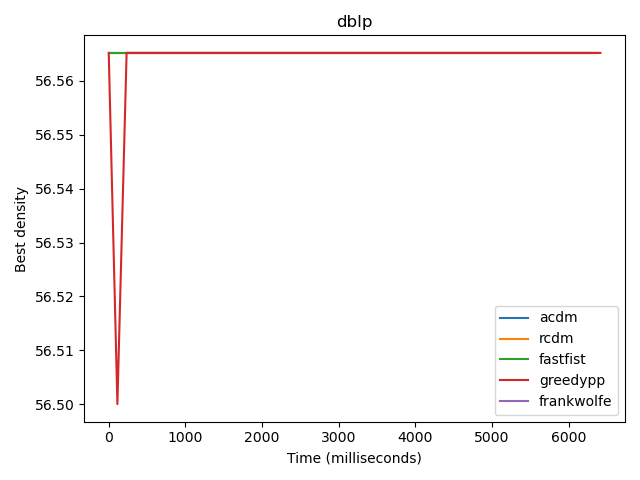}}}{\small{}\caption{{\small{}Time/Best Density}}
\label{fig:time-best-density-1}}{\small\par}
\end{figure}

\begin{figure}
{\small{}}\subfloat[wiki-topcats]{{\small{}\includegraphics[width=0.33\columnwidth]{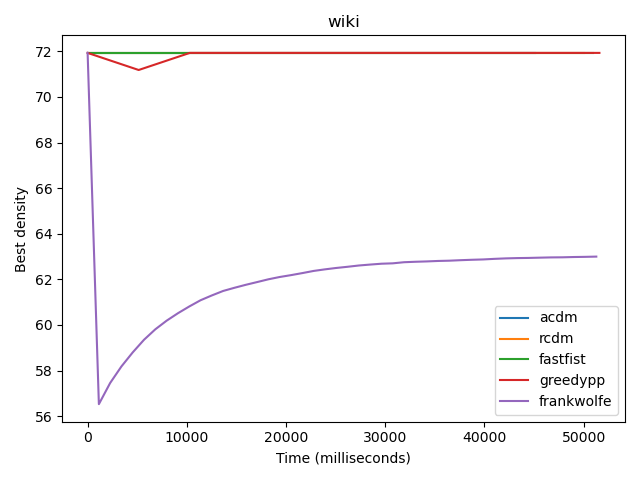}}{\small\par}

{\small{}}{\small\par}}{\small{}}\subfloat[roadNet-PA]{{\small{}\includegraphics[width=0.33\columnwidth]{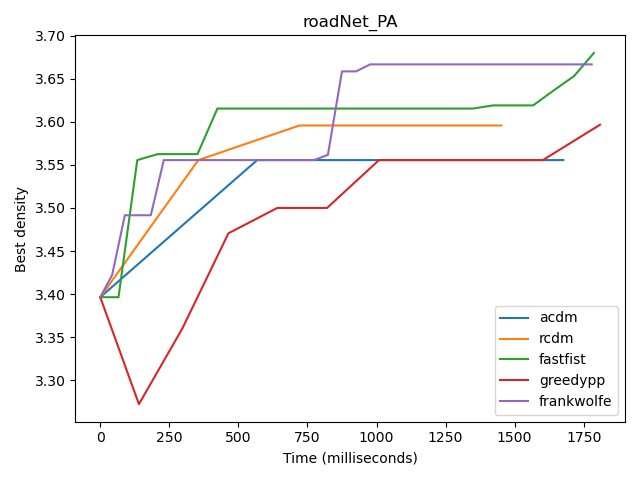}}}{\small{}}\subfloat[roadNet-CA]{{\small{}\includegraphics[width=0.33\columnwidth]{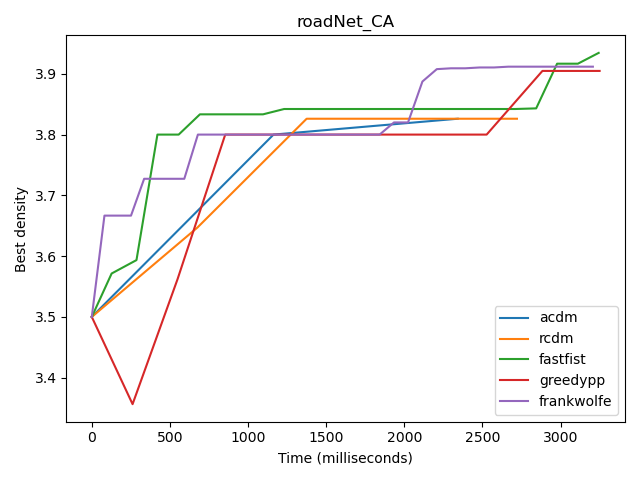}}}{\small{}\hfill{}}{\small\par}

{\small{}}\subfloat[cit-Patent]{{\small{}\includegraphics[width=0.33\columnwidth]{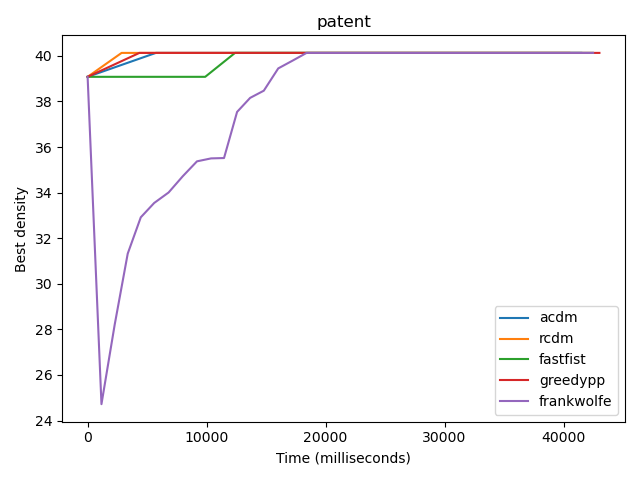}}{\small\par}

{\small{}}{\small\par}}{\small{}}\subfloat[com-Enron]{{\small{}\includegraphics[width=0.33\columnwidth]{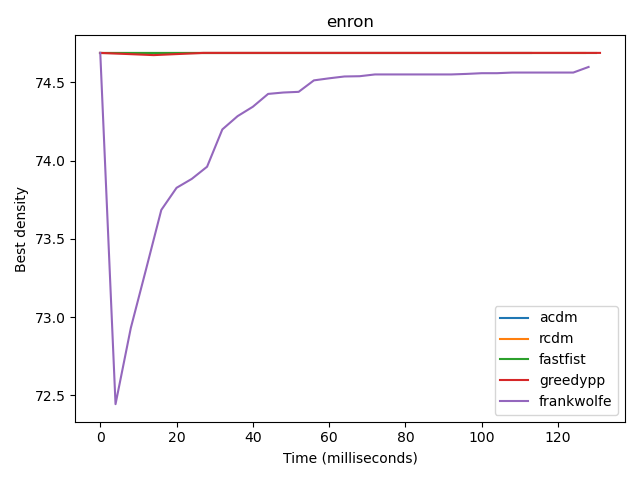}}}{\small{}}\subfloat[dblp-author]{{\small{}\includegraphics[width=0.33\columnwidth]{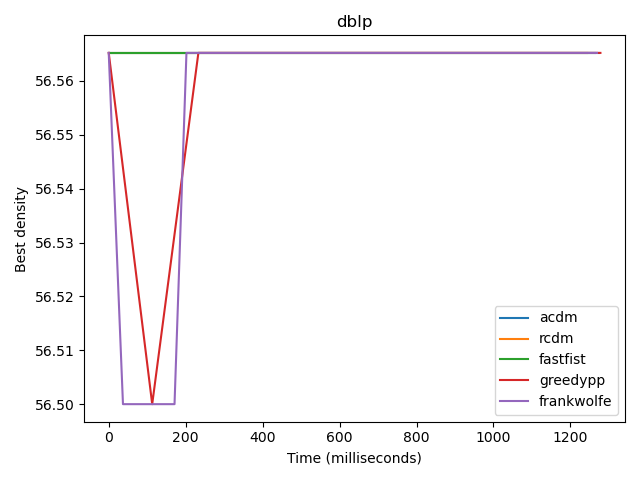}}}{\small{}\caption{{\small{}Time/Best Density zoomed in the first iterations}}
\label{fig:time-best-density-1-zoom}}{\small\par}
\end{figure}

\begin{figure}
{\small{}}\subfloat[wiki-topcats]{{\small{}\includegraphics[width=0.33\columnwidth]{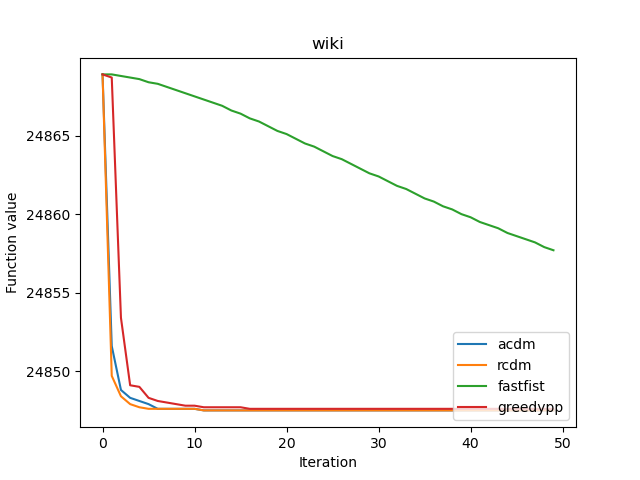}}{\small\par}

{\small{}}{\small\par}}{\small{}}\subfloat[roadNet-PA]{{\small{}\includegraphics[width=0.33\columnwidth]{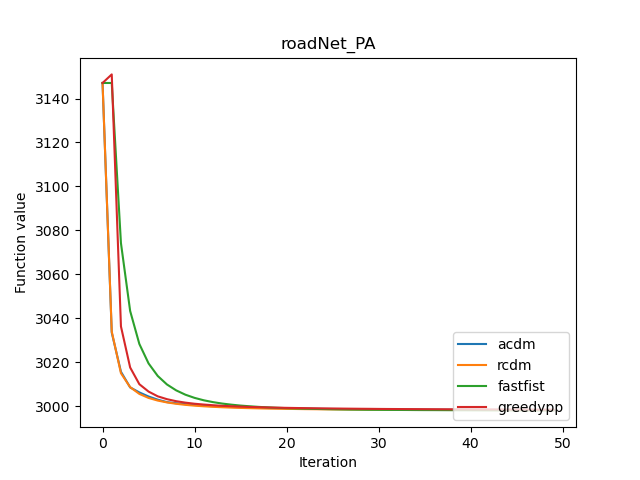}}}{\small{}}\subfloat[roadNet-CA]{{\small{}\includegraphics[width=0.33\columnwidth]{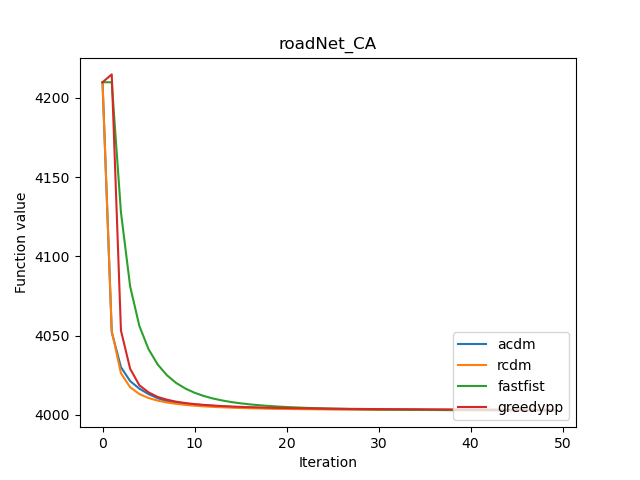}}}{\small{}\hfill{}}{\small\par}

{\small{}}\subfloat[cit-Patent]{{\small{}\includegraphics[width=0.33\columnwidth]{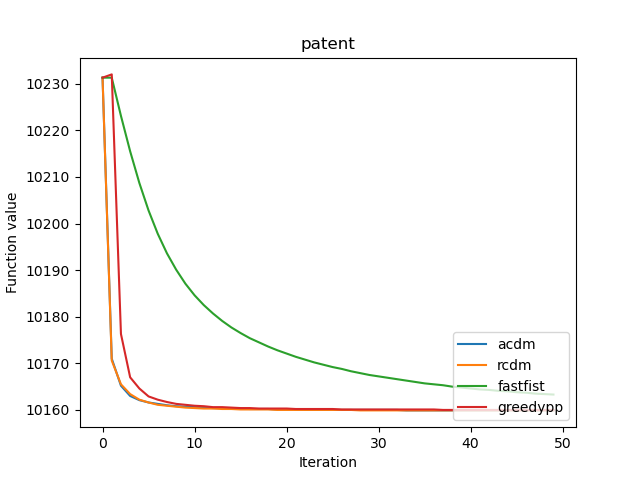}}{\small\par}

{\small{}}{\small\par}}{\small{}}\subfloat[com-Enron]{{\small{}\includegraphics[width=0.33\columnwidth]{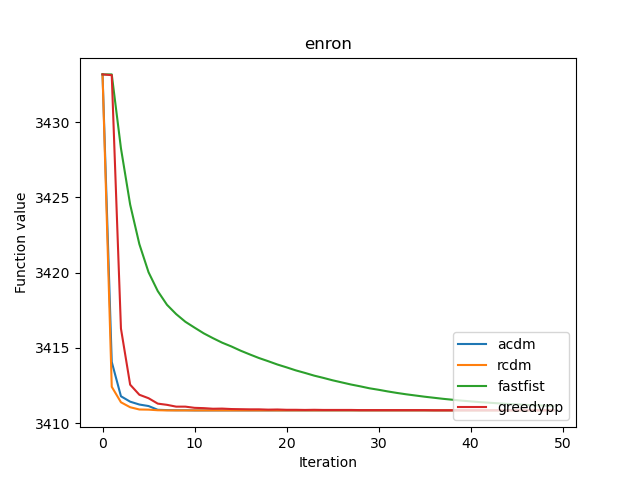}}}{\small{}}\subfloat[dblp-author]{{\small{}\includegraphics[width=0.33\columnwidth]{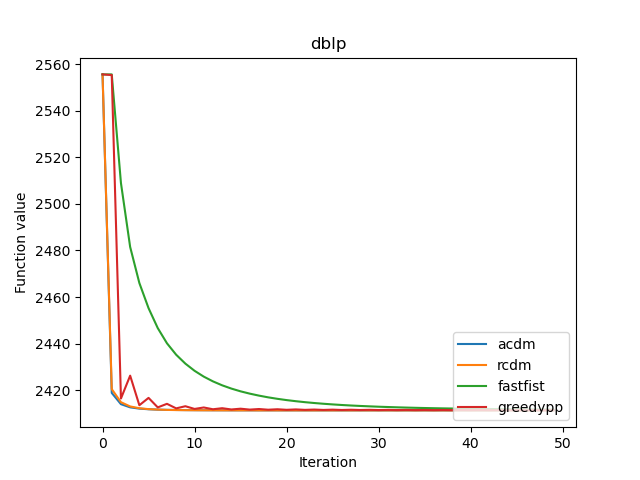}}}{\small{}\caption{{\small{}Iteration/Load norm $\left(\sum_{u\in V}b_{u}^{2}\right)^{1/2}$.
We exclude Frank-Wolfe from this plot as it performs significantly
worse than the other algorithms}}
\label{fig:iteration-function-value-1}}{\small\par}
\end{figure}

\end{document}